\def\im{\mathop{\rm Im}}
\newcommand{\id}{\operatorname{id}}
\renewcommand{\Tr}{\operatorname{Tr}}
\renewcommand{\tr}{\operatorname{Tr}}
\newcommand{\OL}[1]{\overline{#1}}
\newtheorem{theorem}{Theorem}
\newtheorem{proposition}[theorem]{Proposition} 
\newtheorem{corollary}[theorem]{Corollary}
\newtheorem{remark}[theorem]{Remark}
\newtheorem{lemma}[theorem]{Lemma}
\newtheorem{definition}[theorem]{Definition}
\newtheorem{example}[theorem]{Example}
\newcommand{\fs}[1]{#1}
\begin{document}
\title{Usefulness of adaptive strategies in asymptotic quantum channel discrimination}
\author{Farzin~Salek}
\email{farzin.salek@gmail.com}
\affiliation{Zentrum Mathematik, 
Technische Universit\"{a}t M\"{u}nchen,
85748 M\"{u}nchen, Germany}
\author{Masahito Hayashi}
\email{hayashi@sustech.edu.cn}
\affiliation{Shenzhen Institute for Quantum Science and Engineering, Southern University of Science and Technology, Nanshan District, Shenzhen, 518055, China}
\affiliation{International Quantum Academy (SIQA), Futian District, Shenzhen 518048, China}
\affiliation{Graduate School of Mathematics, Nagoya University, Nagoya, 464-8602, Japan}
\author{Andreas~Winter}
\email{andreas.winter@uab.cat}
\affiliation{Instituci\'o Catalana de Recerca i Estudis Avan\c{c}ats (ICREA), Pg.~Lluis Companys, 23, 08010 Barcelona, Spain}
\affiliation{Grup d'Informaci\'{o} Qu\`{a}ntica, Departament de F\'{\i}sica, Universitat Aut\`{o}noma de Barcelona, 08193 Bellaterra (Barcelona), Spain}

\begin{abstract}
Adaptiveness is a key principle in information processing including
statistics and machine learning.
We investigate the usefulness adaptive methods in the framework of asymptotic binary hypothesis testing, 
when each hypothesis represents asymptotically many independent instances 
of a quantum channel, and the tests are based on using the unknown channel and observing outputs.
Unlike the familiar setting of quantum states as hypotheses, there is a 
fundamental distinction between adaptive and non-adaptive strategies with respect 
to the channel uses, and we introduce a number of further variants of 
the discrimination tasks by imposing different restrictions on the 
test strategies. 

%
The following results are obtained: 
(1) We prove that for classical-quantum channels, adaptive and non-adaptive 
strategies lead to the same error exponents both in the symmetric (Chernoff)
and asymmetric (Hoeffding, Stein) settings.  
(2) The first separation between adaptive and non-adaptive 
symmetric hypothesis testing exponents for quantum channels, which we 
derive from a general lower bound on the error probability for non-adaptive 
strategies; the concrete example we analyze is a pair of entanglement-breaking 
channels.
(3) We prove, in some sense generalizing the previous statement, that for 
general channels adaptive strategies restricted to classical feed-forward
and product state channel inputs are not superior in the asymptotic limit
to non-adaptive product state strategies.
%
(4) As an application of our findings, we address the discrimination power of 
an arbitrary quantum channel and show that adaptive strategies with classical 
feedback and no quantum memory at the input do not increase the discrimination 
power of the channel beyond non-adaptive tensor product input strategies.
\end{abstract}

\maketitle

\section{Introduction}
\label{sec:intro}
Adaptiveness is a key principle in information processing including
statistics and machine learning \cite{PhysRevLett.126.190505}, 
which can entail great advantage over non-adaptive methods. 
Because of the higher complexity of adaptive methods, we thus are
motivated to clarify in which situations they offer a significant improvement.
Here, we address this question in the setting of binary hypothesis testing for
quantum channels.
Hypothesis testing is one of the most fundamental primitives both in classical and quantum information processing
because a variety of other information processing problems can be 
cast in the framework of hypothesis testing; both direct coding theorems and 
converses can be reduced to it. 
It is expected that this analysis for adaptiveness
reveals the role of adaptive methods in various types of quantum information processing. 
In binary hypothesis testing, the two hypotheses are usually referred to as null and 
alternative hypotheses and accordingly, two error probabilities are defined: 
type-I error due to a wrong decision in favour of the alternative hypothesis 
(while the truth corresponds to the null hypothesis) and type-II error due to the 
alternative hypothesis being rejected despite being correct. The overall objective of the 
hypothesis testing is to minimize the error probability in identifying the hypotheses. 
Depending on the significance attributed to the two types of errors, several 
settings can be distinguished. A historical distinction is between the \emph{symmetric} 
and the \emph{asymmetric} hypothesis testing: in symmetric hypothesis testing, 
the goal is to minimize both error probabilities simultaneously, while in asymmetric 
hypothesis testing, the goal is to minimize one type of error probability subject 
to a constraint on the other type of error probability.

In classical information theory, discriminating two distributions has been studied by 
many researchers; Stein, Chernoff \cite{chernoff1952}, Hoeffding \cite{hoeffding1965} 
and Han-Kobayashi \cite{42188} formulated asymptotic hypothesis testing of 
two distributions as optimization problems and subsequently found optimum error exponents. 
As generalizations of these settings to quantum realm, 
discrimination of two quantum states has been studied extensively in quantum 
information theory, albeit the complications stemming from the noncommutativity of 
quantum mechanics appear in the most visible way among these problems. 
The first study in this direction was done by Hiai and Petz \cite{Hiai-Petz},
which showed the possibility part of the quantum extension of Stein's lemma.
That is, it showed that the error exponent of type II error probability
attains the relative entropy registered between the states
under the constant constraint for type I error probability.
Also, it shows the impossibility to exceed the above error exponent
when type I error probability goes to zero.
Subsequently, Ogawa and Nagaoka \cite{887855} strengthened the above 
impossibility, i.e., it showed the same fact
under the constant constraint for type I error probability.
As the quantum extension of the Chernoff bound, 
Audenaert \emph{et al.} \cite{Audenaert_2007}
derived a lower bound for
the exponent of the sum of type I and type II error probabilities,
and Nussbaum and Szko\l{}a \cite{nussbaum2009} showed its tightness.
(see \cite{Hayashi:book} for earlier significant progress). 
Concerning the quantum extension of the 
Hoeffding bound, the paper \cite{Ogawa-Hayashi} derived a lower bound 
of the exponent of the type II error probability 
under the exponential constraint for type I error probability, 
but it suggested the existence of a tighter lower bound. 
Later, \cite{Hayashi_2007} proved the suggested 
tighter lower bound and subsequently, Nagaoka \cite{nagaoka2006converse} showed its
optimality.

To study the effect of adaptiveness in the viewpoint of the binary hypothesis testing,  
we focus on the
discrimination of (quantum) channels, which is a natural extension of the state discrimination problem. 
Channel discrimination is a fundamental question not only in quantum information but also in other disciplines including theoretical 
computer science where, under the name of oracle identification, discrimination of unitary operations as oracles
in quantum algorithms becomes relevant \cite{Chefles_2007}.
Despite inherent mathematical links between the channel and state discrimination problems, 
due to the additional degrees of freedom introduced by the adaptive strategies, 
discrimination of channels is more complicated. 
Many papers have been dedicated to study the potential advantages of adaptive strategies 
over non-adaptive strategies in channel discrimination, such as \cite{PhysRevA.81.032339,KPP:POVM}. 

The seminal classical work \cite{5165184} showed that in the asymptotic regime, the 
exponential error rate for classical channel discrimination cannot be improved by 
adaptive strategies for any of the symmetric or asymmetric settings, i.e.
the channel versions of Stein's lemma, Chernoff bound, and Hoeffing bound.

\if0
For the classical channels 
$W:x\rightarrow W_{x}$
and
$\overline{W}:x\rightarrow\overline{W}_{x}$
with common input ($\mathcal{X}$) and output ($\mathcal{Y}$) alphabets, 
and output distributions 
$\{W_{x}\}_{x\in\mathcal{X}}$
and
$\{\overline{W}_{x}\}_{x\in\mathcal{X}}$, respectively, 
Ref. \cite[Thm.~1]{5165184} proved the strong converse 
\begin{align}
 \label{cstrong}
 B_{e}^{\mathbb{A}}(r|W\|\overline{W})=0 
 \quad \text{if} \quad 
 r> D(W\|\overline{W}) := \sup_x D(W_x\|\overline{W}_x).
\end{align}
Here,
$D(W_{x}\|\overline{W}_{x}) \coloneqq \sum_{y\in\mathcal{Y}}W_{x}(y)\log \frac{W_{x}(y)}{\overline{W}_{x}(y)}$ 
is the relative entropy.
For $0\leq r\leq D(W\|\overline{W})$, Ref. \cite[Thm.~2]{5165184} showed that
\begin{align}
\label{choeffding}
B_{e}^{\mathbb{A}}(r|W\|\overline{W})=B_{e}^{\mathbb{P}^{0}}(r|W\|\overline{W})
=\sup_{x}\sup_{0\leq \alpha\leq 1}\frac{\alpha-1}{\alpha}\big(r-D_{\alpha}(W_{x}\|\overline{W}_{x})\big),
\end{align}
where 
$D_{\alpha}(W_{x}\|\overline{W}_{x}) 
 \coloneqq \frac{1}{\alpha-1}\log\sum_{y\in\mathcal{Y}} W_{x}(y)^\alpha \overline{W}_{x}(y)^{1-\alpha}$ 
is the R\'{e}nyi relative entropy \footnote{In the original notation in \cite{5165184}, $D_{1+\alpha}(W_{x}\|\overline{W}_{x}) \coloneqq \frac{1}{\alpha}\phi(-\alpha|W_{x}\|\overline{W}_{x})$. Moreover, the definition of $B_{e}^{\mathbb{S}}(r|W\|\overline{W})$ implicitly follows from (\ref{hoeffdingD}) by replacing the quantum channels with respective classical channels.}.
From this exponent and the relation between the Hoeffding and Chernoff exponents it is seen that
\begin{align*}
  C^{\mathbb{S}}(W,\overline{W}) 
    = \sup_{r}\left\{r|B_{e}^{\mathbb{S}}(r|\mathcal{N}\|\overline{\mathcal{N}})\geq r\right\}.
\end{align*}
Further more, \cite[Cor.~2]{5165184} proves that
\begin{align}
\label{cchernoff}
C^{\mathbb{A}}(W,\overline{W})=
C^{\mathbb{P}^{0}}(W,\overline{W})=
\sup_{x}\sup_{0\leq \alpha\leq 1}(1-\alpha)D_{\alpha}(W_{x}\|\overline{W}_{x}).
\end{align} 
\fi
 
Since the publication of \cite{5165184}, significant amount of research has focused on 
showing the potential advantages of adaptive strategies in discrimination of quantum channels.
Significant progress was reported in \cite{berta2018amortized} concerning cq-channels, i.e.,
the case when the channel has a classical input and a quantum output.
There are other pairs of channels, for which it could be shown that adaptive 
strategies do not outperform non-adaptive ones for any finite number of 
copies, such as pairs of von Neumann measurements \cite{Puchala:measurement,Lewandowska-et-al:measurement}
and teleportation-covariant channels (which are programmed by their Choi states) \cite{PL:teleport}.
Wilde \emph{et al.} \cite{berta2018amortized} 
showed that the classical-quantum (cq-)channel extension of Stein's lemma
has no improvement by use of adaptive strategy.
However, for Chernoff and Hoeffding bounds,
they derived upper and lower bounds.
These bounds do not coincide when the cq-channel has a certain non-commutativity.
Therefore, it remained an open problem whether an adaptive method improves
Chernoff and Hoeffding bounds in the classical-quantum channel discrimination.

\if0
Let 
$\mathcal{N}:x\rightarrow \rho_{x}$ and $\overline{\mathcal{N}}:x\rightarrow \sigma_{x}$
be two cq-channels (these channels will be formally defined in Sec. \ref{secadaptive}).
One may expect the same relations as (\ref{cstrong}), (\ref{choeffding}) and (\ref{cchernoff}) 
to hold for cq-channels,
replacing the R\'{e}nyi relative entropy with a quantum extension of it. 
For Stein's lemma and its strong converse 
this was indeed shown to be the case in \cite[Cor.~28]{berta2018amortized}, namely
\begin{align}
  B_{e}^{\mathbb{A}}(r|\mathcal{N}\|\overline{\mathcal{N}})=0 
  \quad \text{if} \quad 
  r > D(\mathcal{N}\|\overline{\mathcal{N}}) := \sup_x D(\rho_x\|\sigma_x),
\end{align}
where $D(\rho_{x}\|\sigma_{x})\coloneqq\Tr\rho_{x}(\log\rho_{x}-\log\sigma_{x})$ is 
the quantum relative entropy. 
Thus, one can assume $0\leq r\leq D(\mathcal{N}\|\overline{\mathcal{N}})$ for the 
Hoeffding and Chernoff bounds.

A number of upper bounds for $B_{e}^{\mathbb{A}}(r|\mathcal{N}\|\overline{\mathcal{N}})$ are 
reported in the literature but finding a compact form meeting 
$B_{e}^{\mathbb{P}^{0}}(r|\mathcal{N}\|\overline{\mathcal{N}})$
has been an open problem. 
Two such upper bounds were reported by
Wilde \emph{et al.} \cite{berta2018amortized};
the first upper bound follows the similar reasoning as in the classical Hoeffding 
bound \cite{5165184}, that is, considering an intermediate channel and using 
the strong and weak Stein's lemma. 
However, unlike the classical case, this line of reasoning could not yield a tight 
bound. Note that besides (\ref{choeffding}), in the classical case there is another 
compact expression for $B_{e}^{\mathbb{A}}(r|W\|\overline{W})$
\begin{align*}
  B_{e}^{\mathbb{A}}(r|W\|\overline{W}) = \sup_{x\in\mathcal{X}}\min_{Q:D(Q\|\overline{W}_{x})\leq r}D(Q\|W_{x}).
\end{align*}  
The reason that the classical approach of \cite{5165184} does not yield a tight 
bound in the quantum case is that \cite[Sec.~3.8]{Hayashi:book}
\begin{align*}
  B_{e}^{\mathbb{A}}(r|\mathcal{N}\|\overline{\mathcal{N}})
     \leq\max_{x\in\mathcal{X}}\min_{\tau:D(\tau\|\sigma_{x})\leq r}D(\tau\|\rho_{x}).
\end{align*} 
The second upper bound of Wilde \emph{et al.} \cite{berta2018amortized} employs 
the fact that cq-channels are \emph{environment-parameterized}: 
Due to the structure of the environment-parametrized channels, any $n$-round 
adaptive channel discrimination protocol can be understood as a particular kind 
of state discrimination protocol for the environment states of each channel. 
This development reduces the cq-channel discrimination problem to that of state 
discrimination between $(\bigotimes_{x}\rho_{x})^{\otimes n}$ and 
$(\bigotimes_{x}\sigma_{x})^{\otimes n}$ (for finite $\mathcal{X}$). 
However, plugging the states into 
the well-known state discrimination bounds does not lead to a tight characterization.
\fi

Concerning quantum-quantum (qq-)channels, i.e. channels having quantum input and quantum outputs,
it is known that adaptive strategies 
offer an advantage in the non-asymptotic regime for discrimination in the symmetric Chernoff 
setting \cite{PhysRevA.81.032339,PhysRevLett.103.210501,7541701,cite-key}. 
In particular, Harrow \emph{et al.} \cite{PhysRevA.81.032339} 
demonstrated the advantage of adaptive strategies in discriminating 
a pair of entanglement-breaking channels that requires just two channel evaluations 
to distinguish them perfectly, but such that
no non-adaptive strategy can give perfect distinguishability using any finite 
number of channel evaluations. 
However, it was open whether the same holds in the asymptotic setting. 
 
This question in the asymmetric regime was recently settled by Wang and Wilde:
In \cite[Thm.~3]{PhysRevResearch.1.033169}, they found an exponent 
in Stein's setting for non-adaptive strategies in terms of channel max-relative entropy,
also in the same paper \cite[Thm.~6]{PhysRevResearch.1.033169},
they found an exponent in Stein's setting for the adaptive strategies in terms of 
amortized channel divergence, a quantity introduced in \cite{berta2018amortized}
to quantify the largest distinguishability between two channels.
However, the fact that adaptive strategies 
do not offer an advantage in the setting of Stein's lemma for quantum channels,
i.e. the equality of the aforementioned exponents of Wang and Wilde, was later 
shown in \cite{fang2019chain} via a chain rule for the quantum relative 
entropy proven therein.
Cooney \emph{et al.} \cite{Cooney2016} proved the quantum Stein's lemma for 
discriminating between an arbitrary quantum channel and a ``replacer channel'' 
that discards its input and replaces it with a fixed state. 
This work led to the conclusion that at least in the asymptotic regime, a non-adaptive 
strategy is optimal in the setting of Stein's lemma. However, in the Hoeffding and 
Chernoff settings, the question of potential advantages of adaptive strategies 
involving replacer channels remains open.

Hirche \emph{et al.} \cite{Hirche} studied the maximum power of a fixed quantum
detector, i.e. a POVM, in discriminating two possible states. 
This problem is dual to the state discrimination scenario considered so far in that, 
while in the state discrimination problem the state pair is fixed and optimization is over all measurements, 
in this problem a measurement POVM is fixed and the question is how powerful this discriminator is, 
and then whatever criterion considered for quantifying the power of the given detector, 
it should be optimized over all input states. 
In particular, if $n\geq2$ uses of the detector are available, 
the optimization takes place over all $n$-partite entangled states and also all adaptive strategies 
that may help improve the performance of the measurement.
The main result of \cite{Hirche} states that when asymptotically many uses 
(i.e. $n\rightarrow\infty$) of a given detector is available, its performance does not improve 
by considering general input states or using an adaptive strategy in any of the 
symmetric or asymmetric settings described before.  


In this paper, we tackle and solve all the aforementioned open problems as we explain next:
(i) We prove that for cq-channels, adaptive and non-adaptive 
strategies lead to the same error exponents both in the symmetric (Chernoff)
and asymmetric (Hoeffding, Stein) settings.  
(ii) We derive the first separation between adaptive and non-adaptive 
symmetric hypothesis testing exponents for qq-channels, which we 
derive from a general lower bound on the error probability for non-adaptive 
strategies. The two concrete examples we analyze are pairs of entanglement-breaking 
channels.
(iii) 
When two qq-channels are given as entanglement-breaking channel with the same measurement,
we prove that 
adaptive and non-adaptive 
strategies lead to the same error exponents both in the symmetric (Chernoff)
and asymmetric (Hoeffding, Stein) settings.  
(iv) As an application of our findings, we address the discrimination power of 
an arbitrary quantum channel and show that adaptive strategies with classical 
feedback and no quantum memory at the input do not increase the discrimination 
power of the channel beyond non-adaptive tensor product input strategies.

The rest of the paper is organized as follows. 
Section \ref{secadaptive} presents our results of cq-channel discrimination
with discrete feedback variables.
Section \ref{new3} gives a general formulation for adaptive discrimination for qq-channels.
In Section \ref{lowerbound} we show 
two examples of qq-channels, of the entanglement breaking form, that have
the first asymptotic separation between 
adaptive and non-adaptive strategies via proving a lower bound on the Chernoff 
error for non-adaptive strategies and analyzing an example where adaptive 
strategies achieve error zero even with two copies of the channels. 
In Section \ref{S-5}, 
we study the discrimination of quantum channels 
when restricting to a subclass of $\mathbb{A}_n$ allowing only strategies with
classical feed-forward and without quantum memory at the input.
Also, Section \ref{S-5} addresses the discrimination of two qq-channels
under a special class of pairs of two qq-channels.
In Section \ref{power} we apply our results to the discrimination power of an 
arbitrary quantum channel. 
We conclude in Section \ref{conclude}. 
Appendices are denoted to prove the results for cq-channel discrimination, which are stated in Section \ref{secadaptive}.

\section{Discrimination of classical-quantum channels}
\label{secadaptive}
In this section, the hypotheses are described by two 
cq-channels.
To spell out the precise questions, let us introduce a bit of notation. 
Throughout the paper, $A$, $B$, $C$, etc, denote quantum systems, 
but also their corresponding Hilbert space. 
A cq-channel is defined with respect to a set $\mathcal{X}$ of input signals and 
the Hilbert space $B$ of the output states. In this case, the channel from $\mathcal{X}$ 
to $B$ is described by the map from the set $\mathcal{X}$ to the set of density 
operators in $B$; as such, a cq-channel is given as 
$\mathcal{N}:x\rightarrow \rho_{x}$, where $\rho_{x}$ denotes the output state 
when the input is $x\in\mathcal{X}$. Our goal is to distinguish between two 
cq-channels, $\mathcal{N}:x\rightarrow \rho_{x}$ and $\overline{\mathcal{N}}:x\rightarrow \sigma_{x}$. 
Here, we do not assume any condition for the set $\mathcal{X}$, except that it is 
a measurable space and that the channels are measurable maps (with the usual 
Borel sets on the state space $\mathcal{S}^B$). In particular, it might be an 
uncountably infinite set.

The task is to discriminate two hypotheses, the null 
hypothesis $H_{0}:\mathcal{N}$ versus the alternative hypothesis 
$H_{1}:\overline{\mathcal{N}}$ where $n\rightarrow\infty$ (independent) uses of the unknown channel 
are provided. Then, the challenge we face is to make a decision in favor of the true 
channel based on $n$ inputs $\vec{x}_{n}=(x_{1},\ldots,x_{n})$ and corresponding output
states on $B^n = B_1\cdots B_n$; 
note that the input $\vec{x}_{n}=(x_{1},\ldots,x_{n})$ is generated by
a very complicated joint distribution of $n$ random variables,
which -- except for $x_1$ -- depend on the actual channel.
Hence, they are written with the capitals as $X^n=X_1,\ldots, X_n$
when they are treated as random variables.
\if0
\fs{The rest of this Section is organized as follows: 
We first introduce quantum instruments and provide useful lemmas needed for the rest of this Section.
Our adaptive method is proven in Subsection B, and subsequently Subsection C proves several auxiliary lemmas
leading to the main result of this Section, which is presented in Subsection D. 
}
\fi
\subsection{Quantum measurements}
\label{S4}
To formulate our general adaptive method for the discrimination of cq-channels,   
we prepare a general notation for quantum measurements with state changes.
A general quantum state evolution from $A$ to $B$
is written as a cptp map $\mathcal{M}$ from the space 
$\mathcal{T}^A$ to the space $\mathcal{T}^B$ of trace class 
operators on $A$ and $B$, respectively.
When we make a measurement on the initial system $A$, 
we obtain the measurement outcome $K$ and the resultant state on the output system $B$.
To describe this situation, we use a set $\{\kappa_k\}_{k \in \mathcal{K}}$ 
of cp maps from the space $\mathcal{T}^A$ to the space $\mathcal{T}^B$
such that
$\sum_{k \in \mathcal{K}} \kappa_k$ is trace preserving.
In this paper, since the classical feed-forward  information is assumed to be a discrete variable,
$\mathcal{K}$ is a discrete (finite or countably infinite) set.
Since it is a decomposition of a cptp map, it is often called a \emph{cp-map valued measure}, 
and an \emph{instrument} if their sum is cptp \footnote{For simplicity, here and in the rest of the paper, we assume the set $\mathcal{K}$ to be discrete. In fact, if the Hilbert spaces $A$, $B$, etc, on which the cp maps act are finite dimensional, then every instrument is a convex combination, i.e. a probabilistic mixture, of instruments with only finitely many non-zero elements; this carries over to instruments defined on a general measurable space $\mathcal{K}$. Thus, in the finite-dimensional case the assumption of discrete $\mathcal{K}$ is not really a restriction.}.
In this case, when the initial state on $A$ is $\rho$ and 
the outcome $k$ is observed with probability 
$\Tr \kappa_k(\rho)$, where
the resultant state on $B$ is $\kappa_k(\rho)/\Tr \kappa_k(\rho)$.
A state on the composite system of the classical system $K$ and the quantum $B$ 
is written as
$\sum_{k \in {\cal K}}|k\rangle \langle k| \otimes \rho_{B|k}
$, which belongs to the vector space 
$\mathcal{T}^{KB}:=
\sum_{k \in {\cal K}}|k\rangle \langle k| \otimes \mathcal{T}^B$.
The above measurement process can be written as the following cptp 
$\mathcal{E}$ map from $\mathcal{T}^A$ to $\mathcal{T}^{KB}$.
\begin{align}
\mathcal{E}(\rho):= \sum_{k \in \mathcal{K}}|k\rangle \langle k| \otimes \kappa_{k}(\rho).\label{NAC}
\end{align}
In the following, 
both of the above cptp map $\mathcal{E}$ and 
a cp-map valued measure are called a quantum instrument.

\subsection{Formulation of adaptive method}\label{S4-A-1}
To study the adaptive discrimination of cq-channels,   
the general strategy for discrimination of qq-channels in Sec. \ref{sec:intro}
should be tailored to the cq-channels. 
We argue that the most general strategy in Sec. \ref{sec:intro}
can w.l.o.g. be replaced by the kind of strategy with the instrument 
and only classical feed-forward when the hypotheses are a pair of cq-channels.
This in particular will turn out to be crucial since we consider general cq-channels with arbitrary (continuous) input 
alphabet.
%

The first input is chosen subject to the distribution $p_{X_1}(x_1)$. 
The receiver receives the output $\rho_{x_1}$ or $\sigma_{x_1}$ on $B_1$.
Dependent on the input $x_1$,
the receiver applies the first quantum instrument $\{\Gamma_{k_1|x_1}^{(1)}\}_{k_1 \in {\cal K}_1}:B_1 \to K_1 R_2$,
where $R_2$ is the quantum memory system and 
$K_1$ is the classical outcome.
The receiver sends the outcome $K_1$ to the sender.
Then, 
the sender choose the second input $x_2$ according to
the conditional distribution $p_{X_2|X_1,K_1}(x_2|x_1,k_1)$. 
The receiver receives the second output $\rho_{x_2}$ or $\sigma_{x_2}$ on $B_2$.
Dependent on the previous outcome $k_1$ and
the previous inputs $x_1, x_2$,
the receiver applies the second quantum instrument
$\{\Gamma_{k_2|x_1,x_2,k_1}^{(2)}\}_{k_2 \in {\cal K}_2}:B_2 R_2 \to K_2 R_3$, and
sends the outcome $K_2$ to the sender.
The third input is chosen as the distribution $
p_{X_3|X_1,X_2,K_1,K_2}(x_2|x_1,x_2,k_1,k_2)$. 

In the same way as the above, the $m$-th step is given as follows.
The sender chooses the $m$-th input $x_m$ according to
the conditional distribution $p_{X_m| \vec{X}_{m-1},\vec{K}_{m-1}}(x_{m}|\vec{x}_{m-1}, \vec{k}_{m-1})$. 
The receiver receives the second output $\rho_{x_m}$ or $\sigma_{x_m}$ on $B_m$.
The remaining processes need the following divided cases.
For $m<n$, 
dependent on the previous outcomes $\vec{k}_{m-1}:=(k_1, \ldots, k_{m-1})$ and
the previous inputs $\vec{x}_m:=(x_1, \ldots, x_{m})$,
the receiver 
applies the $m$-th quantum instrument
$\{\Gamma_{k_m|\vec{x}_m,\vec{k}_{m-1}}^{(m)}\}_{k_m \in {\cal K}_m}:
R_m B_m \to K_m R_{m+1}$, and sends the outcome $k_m$ to the sender.
For $m=n$, 
dependent on the previous outcomes 
$\vec{K}_{n-1}$ and the previous inputs $\vec{X}_{n}$,
the receiver measures the final state on ${R_{n}B_n}$ with 
the binary POVM 
$(T_{n|\vec{k}_{n-1},\vec{x}_{n}},I-T_{n|\vec{k}_{n-1},\vec{x}_{n}})$, where
hypothesis $\mathcal{N}$ (resp. $\overline{\mathcal{N}}$) is accepted if and only if
the first (resp. second) outcome clicks.

In the following,
we denote the class of the above general strategies 
by $\underline{\mathbb{A}}^{c,0}$
because it can be considered that this strategy has no quantum memory in the input side and no quantum feedback. 
As a subclass, we focus on the class
when no feedback is allowed
and the input state deterministically is fixed to a single input $x$, 
which is denoted $\underline{\mathbb{P}}^{0}$.

\if0
\medskip
\begin{remark}[Relation to general setting with qq-channels]
Here, we discuss how to derive the above setting from 
the general setting presented in introduction for cq-channels.
In the case with cq-channel, the input needs to be
a classical element in the discrete set $\mathcal{X}$.
To decide the classical input, we need to apply measurement 
after the application of the $m$-th cptp map $\mathcal{F}_m$.
That is, we need to replace the $m$-th cptp map $\mathcal{F}_m$
by a quantum instrument
$\mathcal{E}_m: R_m B_m \to K_m R_{m+1}$, which feeds the outcome $K_m$ forward to 
the next channel use. 
Hence, the obtained procedure is equivalent to the procedure given above.

Note however that this way of thinking of a cq-channel as a special type of
qq-channel is restricted to discrete input alphabets; for general, in particular 
continuous input alphabet to the channels $\mathcal{N}$ and $\overline{\mathcal{N}}$, 
we directly use the description above.
\hfill $\square$
\end{remark}
\medskip
\fi


\begin{figure}[ht]
\begin{center}
\label{BCP}
\includegraphics[width=0.5\textwidth]{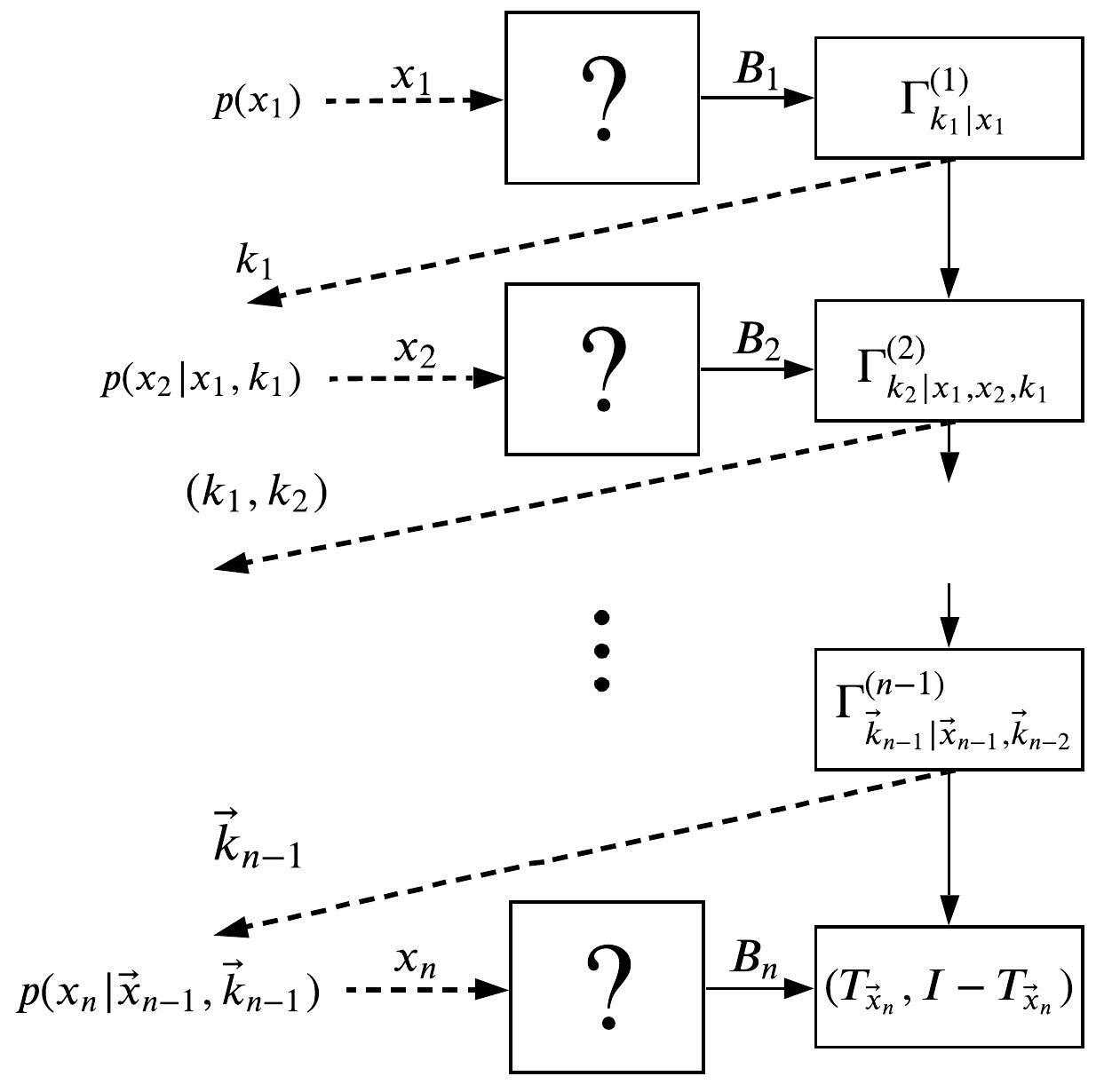}      
\caption{Adaptive strategy for cq-channel discrimination. Solid and dashed lines
         denote flow of classical and quantum information, respectively. 
         The classical outputs of instruments $\{\Gamma_{k_m|\vec{x}_m,\vec{k}_{m-1}}^{(m)}\}_{k_m \in {\cal K}_m}$
         are employed to decide the inputs adaptively, and leave a post-measurement state 
         that can be accessed together with the next channel output.}
\end{center}
\end{figure}

When the true channel is $\mathcal{N}:x\rightarrow \rho_{x}$, the state before the final measurement is 
\begin{align}
\rho^{(n)}\coloneqq 
\sum_{\vec{x}_n,\vec{k}_{n-1}}& p_{X_1}(x_1)\cdots 
p_{{X}_n|\vec{X}_{n-1},\vec{K}_{n-1}}({x}_n|\vec{x}_{n-1},\vec{k}_{n-1})
 \nonumber \\
& \left(
\Gamma_{k_{n-1}|\vec{x}_{n-1},k_{n-2}}
(
\cdots
\Gamma_{k_2|x_1,x_2,k_1}(\Gamma_{k_1|x_1}(\rho_{x_1})\otimes \rho_{x_2})
\otimes \cdots \otimes \rho_{x_{n-1}})\otimes \rho_{x_n}
\otimes \ketbra{\vec{x}_n,\vec{k}_{n-1}}\right),\label{BD1}
\end{align}
where here we need to store the information for inputs $\vec{x}_n$.
Similarly, when the true channel is $\overline{\mathcal{N}}:x\rightarrow \sigma_{x}$
\begin{align}
\sigma^{(n)}\coloneqq
\sum_{\vec{x}_n,\vec{k}_{n-1}} &
p_{X_1}(x_1)\cdots 
p_{{X}_n|\vec{X}_{n-1},\vec{K}_{n-1}}({x}_n|\vec{x}_{n-1},\vec{k}_{n-1})\nonumber \\
& \left(
\Gamma_{k_{n-1}|\vec{x}_{n-1},k_{n-2}}
(
\cdots
\Gamma_{k_2|x_1,x_2,k_1}(\Gamma_{k_1|x_1}(\sigma_{x_1})\otimes \sigma_{x_2})
\otimes \cdots \otimes \sigma_{x_{n-1}})\otimes \sigma_{x_n}
\otimes \ketbra{\vec{x}_n,\vec{k}_{n-1}}\right),\label{BD2}
\end{align}

A test of the hypotheses $\{\mathcal{N},\overline{\mathcal{N}}\}$ 
on the true channel is a two-valued POVM $\{T_{n},I-T_{n}\}$, where $T_{n}$ 
is given as
a Hermitian operator 
$\sum_{\vec{x}_n}T_{\vec{x}_n}\otimes |\vec{x}_n\rangle \langle \vec{x}_n|$
on $B^{\otimes n}\otimes X^{\otimes n}$ satisfying 
$0\leq T_{n}\leq I$.
Overall, our strategy to distinguish the channels 
$\{\mathcal{N},\overline{\mathcal{N}}\}$ when $n$ independent uses of each are available,
is given by the triple
$\mathcal{T}_n:=
(\{\Gamma^{(m)}_{{k}_{m}|\vec{x}_{m},\vec{k}_{m-1}}\}_{m=1}^{n-1},
\{p_{{X}_m|\vec{X}_{m-1},\vec{K}_{m-1}}\}_{m=1}^n,
T_{n})$.
The $n$-copy error probabilities of type I 
and type II are respectively as follows
\begin{align*}
\alpha_{n}({\mathcal{N}}\|\overline{\mathcal{N}}|\mathcal{T}_{n})
   &\coloneqq \Tr \rho^{(n)}(I-T_{n}),\\
\beta_{n}({\mathcal{N}}\|\overline{\mathcal{N}}|\mathcal{T}_{n})
   &\coloneqq \Tr \sigma^{(n)}T_{n}.
\end{align*}
The generalized Chernoff and Hoeffding quantities introduced in the 
introduction read as follows in the present cq-channel case for a given 
class $\underline{\mathbb{S}} = \underline{\mathbb{P}}^{0},\underline{\mathbb{A}}^{c,0}$ :
\begin{align}
\label{geneCher}
C^{\underline{\mathbb{S}}}(a,b|\mathcal{N}\|\overline{\mathcal{N}})&\coloneqq 
\sup_{\{\mathcal{T}_{n}\}}
\left\{\liminf_{n\rightarrow\infty}-\frac{1}{n}\log\left(
2^{an}
\alpha_{n}({\mathcal{N}}\|\overline{\mathcal{N}}|\mathcal{T}_{n})
+2^{bn}
\beta_{n}({\mathcal{N}}\|\overline{\mathcal{N}}|
\mathcal{T}_{n})\right)\right\},\\
\label{hoeffdingD}
B_{e}^{\underline{\mathbb{S}}}(r|\mathcal{N}\|\overline{\mathcal{N}})&\coloneqq
\sup_{\{\mathcal{T}_{n}\}}
\left\{\liminf_{n\rightarrow\infty} -\frac{1}{n}
\log\left(\alpha_{n}({\mathcal{N}}\|\overline{\mathcal{N}}|
\mathcal{T}_{n})\right)\bigg|
\liminf_{n\rightarrow\infty} -\frac{1}{n}\log\left(\beta_{n}({\mathcal{N}}\|\overline{\mathcal{N}}|\mathcal{T}_{n}
)\right)\geq r\right\},
\end{align}
where $a$, $b$, are arbitrary real numbers and $r$ is an arbitrary non-negative number.

\subsection{Main results}
We set $\rho_x:=\mathcal{N}(x)$ and $\sigma_x:=\overline{\mathcal{N}}(x)$,
and define
\begin{align}
C(a,b|\mathcal{N}\|\overline{\mathcal{N}}):
=&\sup_{x}\sup_{0\leq\alpha\leq 1} (1-\alpha)D_{\alpha}(\rho_{x}\|\sigma_{x})
-\alpha a -(1-\alpha)b \nonumber \\
=&\sup_{0\leq\alpha\leq 1} (1-\alpha)D_{\alpha}(\mathcal{N}\|\overline{\mathcal{N}})
-\alpha a -(1-\alpha)b, \label{deff-C}\\
\label{deff}
B(r|\mathcal{N}\|\overline{\mathcal{N}}):
=&\sup_x \sup_{0\le \alpha \le 1}
\frac{\alpha-1}{\alpha}\big(r-D_{\alpha } (\rho_x\|\sigma_x)\big)
= \sup_{0\le \alpha \le 1}
\frac{\alpha-1}{\alpha}\big(r- D_{\alpha } (\mathcal{N}\|\overline{\mathcal{N}})\big),
\end{align}
where $D_{\alpha } (\mathcal{N}\|\overline{\mathcal{N}}) :=
\sup_x D_{\alpha } (\rho_x\|\sigma_x)$
and
$D_{\alpha}(\rho_{x}\|\sigma_{x})\coloneqq\frac{1}{\alpha-1}\log\Tr\rho_{x}^{\alpha}\sigma_{x}^{1-\alpha}$
is a quantum extension of the R\'{e}nyi relative entropy.
In this section, we abbreviate 
$C(a,b|\mathcal{N}\|\overline{\mathcal{N}})$
and $B(r|\mathcal{N}\|\overline{\mathcal{N}})$ to 
$C(a,b)$ and $B(r)$, respectively.

Since $D_{\alpha } (\rho_x\|\sigma_x)$ is monotonically increasing for $\alpha$, 
$D_{\alpha } (\mathcal{N}\|\overline{\mathcal{N}})$
is monotonically increasing for $\alpha$.
Thus,
\begin{align*}
\lim_{\alpha\to 1}D_{\alpha } (\mathcal{N}\|\overline{\mathcal{N}})
&=
\sup_{0\le \alpha \le 1}D_{\alpha } (\mathcal{N}\|\overline{\mathcal{N}})
=\sup_{0\le \alpha \le 1}
\sup_x D_{\alpha } (\rho_x\|\sigma_x)\\
&=\sup_x \sup_{0\le \alpha \le 1}D_{\alpha } (\rho_x\|\sigma_x)
=\sup_x D (\rho_x\|\sigma_x)
=D(\mathcal{N}\|\overline{\mathcal{N}}).
\end{align*}
Before stating the main results of this section we shall study the $B(r)$ function further.
Since the $B(r)$ function is monotonically decreasing in $r$,
 $B(D(\mathcal{N}\|\overline{\mathcal{N}}))=0 $.
 To find $B(0)$, since 
$\frac{1-\alpha}{\alpha} D_{\alpha } (\mathcal{N}\|\overline{\mathcal{N}})
=D_{1-\alpha } (\overline{\mathcal{N}}\|\mathcal{N})$, we infer that
$\frac{1-\alpha}{\alpha} D_{\alpha } (\mathcal{N}\|\overline{\mathcal{N}})$
is monotonically decreasing for $\alpha$, and 
$D(\overline{\mathcal{N}}\|\mathcal{N})=\lim_{\alpha\to 0}
\frac{1-\alpha}{\alpha} D_{\alpha } (\mathcal{N}\|\overline{\mathcal{N}})$.
Hence, 
$B(0)=D(\overline{\mathcal{N}}\|\mathcal{N})$, and 
$B(r)<D(\overline{\mathcal{N}}\|\mathcal{N})$ for $r>0$.

As shown in Appendix \ref{AC}, we have the following lemma.
\begin{lemma}\label{LPX}
When real numbers $a,b$ satisfy
$-D(\mathcal{N}\|\overline{\mathcal{N}})\le a-b \le D(\overline{\mathcal{N}}\|\mathcal{N})$,
there exists $r_{a,b} \in [0, D(\mathcal{N}\|\overline{\mathcal{N}})]$
such that $B(r_{a,b})-r_{a,b}= a-b$.
\end{lemma}
We are now in a position to present and prove our main result, the generalized Chernoff bound and 
Hoeffding bound
as follows:
\begin{theorem}[Generalized Chernoff bound \& Hoeffding bound]
\label{chernoff}
For two cq-channels $\mathcal{N}$ and $\OL{\mathcal{N}}$, 
we have
\begin{align*}
  C^{\underline{\mathbb{A}}^{c,0}}(a,b|\mathcal{N}\|\overline{\mathcal{N}})
    = C^{\underline{\mathbb{P}}^{0}}(a,b|\mathcal{N}\|\overline{\mathcal{N}})
    = C(a,b) = r_{a,b}-b = B(r_{a,b})-a.
\end{align*}
for real numbers $a,b$ satisfying 
$-D(\mathcal{N}\|\overline{\mathcal{N}})\le a-b \le D(\overline{\mathcal{N}}\|\mathcal{N})$,
and
\begin{align*}
  B_{e}^{\underline{\mathbb{A}}^{c,0}}(r|\mathcal{N}\|\overline{\mathcal{N}}) = 
  B_{e}^{\underline{\mathbb{P}}^{0}}(r|\mathcal{N}\|\overline{\mathcal{N}}) = 
  B(r)
\end{align*}
for any $0\leq r\leq D(\mathcal{N}\|\overline{\mathcal{N}})$.
\hfill $\square$ 
\end{theorem}
This theorem is shown in Appendix \ref{AD} after various preparations.
The key point of the proof of Theorem \ref{chernoff}
is the reduction of our general strategy to the special strategy that restrict general instruments
$\{\Gamma_{k_m|\vec{x}_m,\vec{k}_{m-1}}^{(m)}\}_{k_m \in {\cal K}_m} $ to the application of projective measurement with projection postulate. 
This reduction is stated as Proposition \ref{PNL} in Appendix \ref{A2}. 
In this reduction, as stated in Lemma \ref{NS1},
we convert the cq-channels into classical channels 
be means of the eigenvalue decomposition of the output states, 
using the two distributions introduced by \cite{Hayashi_2002,nussbaum2009}.

\section{Formulation of general adaptive method for qq-channel discrimination}\label{new3}
\if0
This description of the problem presupposes that the two hypotheses 
correspond to objects in a probabilistic framework, in which also the possible 
tests (decision rules) are phrased, so as to give unambiguous meaning to the 
type-I and type-II error probabilities. The traditionally studied framework 
is that each hypothesis represents a probability distribution on a given
set, and more generally a state on a given quantum system. 
\fi
Hereafter, the hypotheses are described by two 
qq-channels, i.e. completely positive and trace preserving (cptp) maps, 
acting on a given quantum system, and more precisely $n \gg 1$ independent 
realizations of the unknown channel. It is not hard to see that both 
the type-I and type-II error probabilities can be made to go to $0$ 
exponentially fast, just as in the case of hypotheses described by quantum 
states, and hence the fundamental question is the characterization of the 
possible pairs of error exponents. 

\medskip
%
We identify states $\rho$ with their density operators and use superscripts
to denote the systems on which the mathematical objects are defined.
The set of density matrices (positive semidefinite matrices with unit trace) 
on $A$ is written as $\mathcal{S}^{A}$, 
a subset of the trace class operators, denoted $\mathcal{T}^{A}$. \fs{An operator is called projection operator 
if applying it twice has the same effect as applying it once, i.e. $\rho^2=\rho$. The subspace
that the projection operator $\rho$ projects onto is called its image and is denoted by $\im \rho$.}
When talking about tensor products of spaces, we may habitually omit the tensor 
sign, so $A\otimes B = AB$, etc. 
The capital letters $X$, $Y$, etc. denote random variables whose realizations 
and the alphabets will be shown by the corresponding small and calligraphic letters, 
respectively: $X = x \in \mathcal{X}$. 
All Hilbert spaces and ranges of variables may be infinite; the dimension 
of a Hilbert space $A$ is denoted $|A|$, as is the cardinality $|\mathcal{X}|$
of a set $\mathcal{X}$.
For any positive integer $m$, we define $\vec{x}_{m} \coloneqq (x_{1},\cdots,x_{m})$. 
For the state $\rho \in \mathcal{S}^{AB}$ in the composite system $AB$,
the partial trace over system $A$ (resp. $B$) is denoted by $\Tr_{A}$ (resp. $\Tr_{B}$). 
We denote the identity operator by $I$. We use $\log$ and $\ln$
to denote 
base $2$ and 
natural logarithms, respectively. 
Moving on to quantum channels, these are linear, completely positive and 
trace preserving maps $\mathcal{M} : \mathcal{S}^A \rightarrow \mathcal{S}^B$
for two quantum systems $A$ and $B$; $\mathcal{M}$ extends uniquely to a 
linear map from trace class operators on $A$ to those on $B$. We often denote
quantum channels, by slight abuse of notation, as $\mathcal{M} : A \rightarrow B$.
\fs{The input and output systems of quantum channels can include quantum and classical information; 
if both input and output systems are quantum, the channel is referred to as quantum-quantum channel (qq-channel).
Similarly, one can identify classical-quantum channels (cq-channels) and quantum-classical channels (qc-channels).}
The ideal, or identity, channel on $A$ is denoted $\id_A$. Note furthermore 
that a state $\rho^A$ on a system $A$ can be viewed as a quantum channel 
$\rho: 1 \rightarrow A$, where $1$ denotes the canonical one-dimensional 
Hilbert space, isomorphic to the complex numbers $\mathbb{C}$, which 
interprets a state operationally consistently as a state preparation procedure. 

\medskip
The most general operationally justified strategy to distinguish two 
channels $\mathcal{M},\overline{\mathcal{M}}:A\rightarrow B$ 
is to prepare a 
state $\rho^{RA}$, apply the unknown channel to $A$ (and the identity 
channel $\id_R$ to $R$), and then apply a binary measurement POVM
$(T,I-T)$ on $BR$, so that 
\begin{align*}
  \alpha = \Tr \bigl( (\id_R\otimes\mathcal{M})\rho \bigr)(I-T)
  \quad\text{and}\quad
  \beta = \Tr \bigl( (\id_R\otimes\OL{\mathcal{M}})\rho \bigr)T,
\end{align*}
are the error probabilities of type I and type II, respectively. 
When we choose the system $R$ to be  sufficiently large, $\rho^{RA}$ is a pure state. 
Since the rank of $\rho^R$ is the same as the rank of $\rho^A$ in this case, 
we can restrict the dimension of $R$ to be $|A|$ without loss of generality.
\if0
It is easy to
see that whatever state $\rho^{AR}$ is considered as
input, it can be purified to $\psi^{ARR'}$, with a suitable Hilbert 
space, and the latter state can be used to get the same error 
probabilities. Then, once there is a pure state, one only needs a 
subspace of $R\otimes R'$ of dimension $|A|$, namely the support of $\psi^{RR'}$, 
which by the Schmidt decomposition is at most $|A|$-dimensional.
Therefore, the state $\rho$ is without loss of generality pure and that hence 
$R$ has dimension at most that of $A$. 
\fi
\fs{For more on the dimension of reference system 
we refer to \cite{cite-key}}.
The strategy is entirely described by the pair $\bigl(\rho,(T,I-T)\bigr)$ consisting 
of the initial state and the final measurement, and we denote it $\mathcal{T}$. 
Consequently, the above error probabilities are more precisely denoted 
$\alpha(\mathcal{M}\|\overline{\mathcal{M}}|\mathcal{T})$ and 
$\beta(\mathcal{M}\|\overline{\mathcal{M}}|\mathcal{T})$,
respectively.

These strategies use the unknown channel exactly once; to use it $n>1$ times, 
one could simply consider that $\mathcal{M}^{\otimes n}$ and $\OL{\mathcal{M}}^{\otimes n}$
are quantum channels themselves and apply the above recipe. While for states 
this indeed leads to the most general possible discrimination strategy, 
for general channels other, more elaborate procedures are possible. 
The most general strategy we shall consider in this paper is the \emph{adaptive} 
strategy, applying the $n$ channel instances sequentially, using 
quantum memory and quantum feed-forward, and a measurement at the end.
This is called, variously, an adaptive strategy, a memory channel or a comb 
in the literature. 
It is defined as 
follows \cite{KretschmannWerner,watrous:comb,chiribella:memory,chiribella:superchannel,chiribella:comb,5165184}.

\medskip
\begin{definition}
\label{Def1}
A general adaptive strategy $\mathcal{T}_n$ is given by an $(n+1)$-tuple 
$\bigl(\rho_1^{R_1A_{1}},\mathcal{F}_{1},\ldots,\mathcal{F}_{n-1},(T,I-T)\bigr)$, 
consisting of an auxiliary system $R_1$ and a state $\rho_1$ on $R_1A_1$,
quantum channels $\mathcal{F}_m : R_{m}B_m \rightarrow R_{m+1}A_{m+1}$ and a 
binary POVM $(T,I-T)$ on $R_nB_n$. It encodes the following procedure
(see Fig.~\ref{general}): 
in the $m$-th round ($1\leq m\leq n$), apply the unknown channel 
$\Xi \in \{\mathcal{M},\overline{\mathcal{M}}\}$ to $\rho_m = \rho_m^{R_m A_m}$, obtaining 
\[
  \omega_m^{R_m B_m} = \omega_m^{R_m B_m}(\Xi) = (\id_{R_t}\otimes\Xi)\rho_m^{R_m A_m}. 
\]
Then, as long as $m<n$, use $\mathcal{F}_{m}$ to prepare the state 
for the next channel use: 
\[
  \rho_{m+1}^{R_{m+1}A_{m+1}} = \mathcal{F}_{m}(\omega_m^{R_m B_m}). 
\]
When $m=n$, measure the state $\omega_n^{R_n B_n}$ with $(T,I-T)$, where the 
first outcome corresponds to declaring the unknown channel to be $\mathcal{M}$,
the second $\overline{\mathcal{M}}$. Thus, the $n$-copy error probabilities of type I and 
type II are given by
\begin{align*}
  \alpha_n(\mathcal{M}\|\OL{\mathcal{M}}|\mathcal{T}_n) &:= \Tr \bigl(\omega_n^{R_nB_n}(\mathcal{M})\bigr)(I-T), \\
  \beta_n(\mathcal{M}\|\OL{\mathcal{M}}|\mathcal{T}_n)  &:= \Tr \bigl(\omega_n^{R_nB_n}(\OL{\mathcal{M}})\bigr)T,
\end{align*}
respectively. 
\hfill$\square$
\end{definition}

\begin{figure}[ht]
\begin{center}
\includegraphics[width=0.9\textwidth]{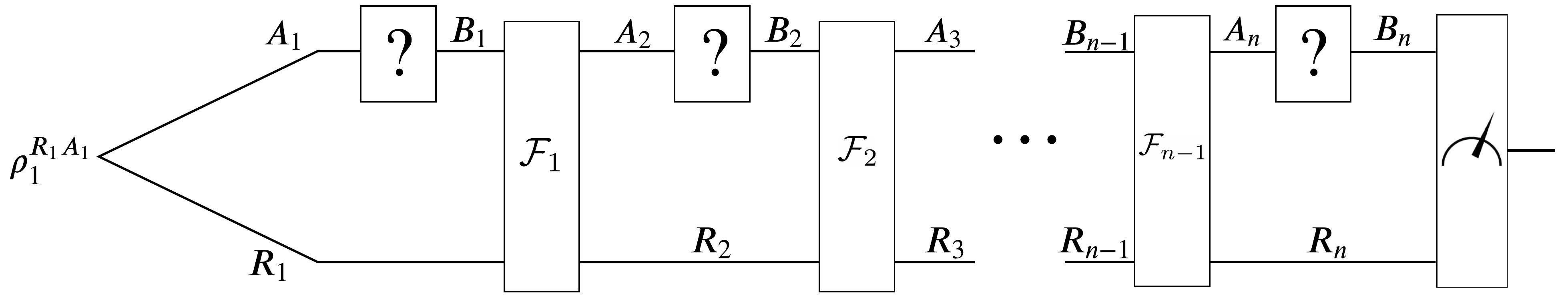}      
\caption{The most general adaptive strategy for discrimination of qq-channels, 
         from the class $\mathbb{A}_{n}$. 
         After the $m$-th use of the unknown channel (denoted `$?$'), the output system 
         $B_{m}$ as well as the state on the memory, i.e. the reference system $R_{m}$, 
         is processed by the cptp map $\mathcal{F}_{m}$, resulting in $\rho_{m+1}^{R_{m+1}A_{m+1}}$; 
         this continues as long as $m<n$. 
         After the $n$-th use of the channel, the state $\omega^{R_{n}B_{n}}_{n}$
         is measured by a two-outcome POVM.
         Two variants of this strategy include restricting feed-forward information 
         to be only classical, and additionally only allowing products state inputs;
         these variants are denoted by $\mathbb{A}_{n}^{c}$ and $\mathbb{A}_{n}^{c,0}$, respectively. }
\label{general}
\end{center}
\end{figure}

\medskip
As in the case of a single use of the channel, one can 
without loss of generality (w.l.o.g.) simplify the strategy, by purifying 
the initial state $\rho_1$, hence $|R_1| \leq |A|$, and for each $m>1$ going
to the Stinespring isometric extension of the cptp map 
$\Tr_{R_{m+1}}\circ\mathcal{F}_{m} : R_{m}B_m \rightarrow A_{m+1}$ that prepares 
the next channel input (and which by the uniqueness of the Stinespring extension is
an extension of the given map $\mathcal{F}_{m}$). This requires a 
system $R_{m+1}$ with dimension no more than $|R_{m+1}| \leq |R_m| |A| |B|$,
cf. \cite{KretschmannWerner}. 
This allows to efficiently parametrize all strategies in the case that 
$A$ and $B$ are finite dimensional. An equivalent description
is in terms of so-called causal channels \cite{KretschmannWerner}, 
which are ruled by a generalization of the Choi isomorphism. This turns 
many optimizations over adaptive strategies into semidefinite programs (SDP)
\cite{KretschmannWerner,chiribella:comb,Watrous:SDP,Gutoski:diamond}, 
which is relevant for practical calculations. See \cite{Pira:survey,KW20} for recent 
comprehensive surveys of the concept of strategy and its history.

\begin{figure}[ht]
\begin{center}
\includegraphics[width=0.45\textwidth]{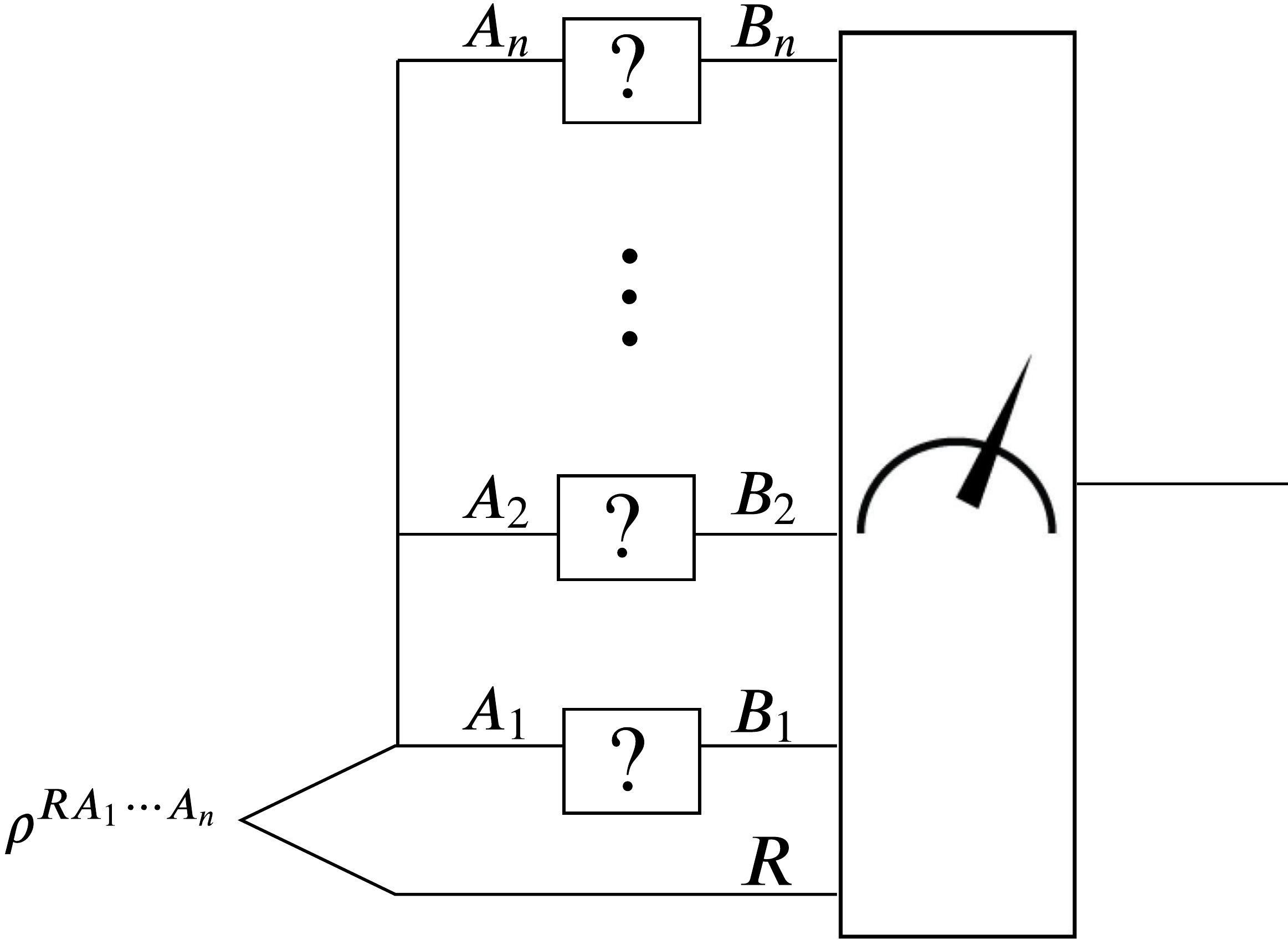}      
\caption{The most general parallel strategy for discrimination of qq-channels, from
         the class $\mathbb{P}_{n}$. 
         An $(n+1)$-partite state $\rho$ on $RA_{1}\ldots A_{n}$ is prepared and each system
         $A_i$ is fed into a separate channel input; the final measurement is performed with
         a two-outcome POVM on $RB_{1}\ldots B_{n}$. If we do not allow input states to be 
         entangled among different $A$-systems or with the reference system $R$, the strategy 
         falls into the class $\mathbb{P}_{n}^{0}$.}
\label{parallel}
\end{center}
\end{figure}

The set of all adaptive strategies of $n$ sequential channel uses is denoted 
$\mathbb{A}_n$. It quite evidently includes the $n$ parallel uses 
described at the beginning, when a single-use strategy 
is applied to the channel \fs{$(?)^{\otimes n}$, i.e. $n$-fold tensor-product of the unknown channel}; 
the set of these non-adaptive or parallel strategies 
is denoted $\mathbb{P}_n$. Among those again, we can distinguish the subclass 
of parallel strategies without quantum memory, meaning that $R=1$ is trivial 
and that the input state $\rho^{A^n}$ at the input system $A^n = A_1\ldots A_n$ 
is a product state, $\rho^{A^n} = \rho_1^{A_1}\otimes\cdots\otimes\rho_n^{A_n}$; 
this set is denoted $\mathbb{P}^{0}_n$. 
Other restricted sets of strategies we are considering in the present 
paper are that of adaptive strategies with classical feed-forward, 
denoted $\mathbb{A}^{c}_n$, and with classical feed-forward and no quantum 
memory at the input, denoted $\mathbb{A}^{c,0}_n$, as well as no quantum memory 
at the input but quantum feed-forward, denoted $\mathbb{A}^{0}_n$.
They are defined formally in Section \ref{S-5}. 

The various classes considered obey the following inclusions 
that are evident from the definitions. Note that all of them are strict:
\begin{alignat}{4}
  \mathbb{A}_n\quad   &\supset &\quad\mathbb{A}_n^{c\phantom{,0}}\quad     &\!\supset &\quad\mathbb{P}_n \nonumber\\
  \label{eq:inclusions}
      \cup\quad\       &\phantom{\supset} &   \cup\qquad            &\!\phantom{\supset} &   \cup\     \\
  \mathbb{A}_n^0\quad &\supset &\quad\mathbb{A}_n^{c,0}\quad &\supset &\!\quad\mathbb{P}_n^0 \nonumber
\end{alignat}

In Table \ref{table-of-classes} we show a summary of the different 
classes and their notation, and where they are discussed.


\begin{table}[ht]
\renewcommand{\arraystretch}{1.5}
\begin{center}
    \begin{tabular}{ | l || p{1.2cm} | p{4.5cm} | p{5.5cm} | p{1.4cm} |}
    \hline
    Name           & Defined            & Mathematical elements & Description & Discussed \\ \hline\hline 
    $\mathbb{A}_n$ & \parbox[t]{1.2cm}%
                    {Def.~\ref{Def1},\\ 
                    Fig.~\ref{general}}  & \parbox[t]{4.5cm}{state $\rho_1^{R_1A_{1}}$, channels\\
                                           $\mathcal{F}_{m}:R_{m}B_m \rightarrow R_{m+1}A_{m+1}$,\\
                                           POVM $(T,I-T)$}\vspace{1mm} & Most general adaptive strategy 
                                                                         of $n$ channel uses
                                                                              & \parbox[t]{1.15cm}{Sec.~\ref{lowerbound}, 
                                                                                \ref{subsec:cq-as-qq}, 
                                                                                \ref{subsec:power:A-n}} \\ \hline
    $\mathbb{P}_n$ & \parbox[t]{1.2cm}%
                    {Def.~\ref{Def1},\\ 
                    Fig.~\ref{parallel}} & \parbox[t]{4.5cm}{state $\rho^{RA^n}$, \\
                                           POVM $(T,I-T)$}    & Most general non-adaptive (parallel)
                                                                strategy allowing quantum memory at 
                                                                the input 
                                                                             &  Sec.~\ref{lowerbound} \\ \hline
    $\mathbb{P}_n^0$ & \parbox[t]{1.2cm}%
                    {Def.~\ref{Def1},\\
                     Fig.~\ref{parallel}} & \parbox[t]{4.5cm}{state $\rho_1^{A_1}\otimes\cdots\otimes\rho_n^{A_n}$, \\
                                            POVM $(T,I-T)$}   & Non-adaptive (parallel) strategy 
                                                                without quantum memory at the input
                                                                              & Sec.~\ref{subsec:cq-as-qq},
                                                                                \ref{subsec:A-c-0}, 
                                                                                \ref{subsec:power:A-n} \\ \hline
    $\mathbb{A}_n^c$ & Def.~\ref{defi:A-n-c} 
                                        & \parbox[t]{4.5cm}{state $\rho_1^{R_1A_{1}}$, instruments \\
                                          $\{\mathcal{F}_{\vec{k}_m} \!:\! B_mC_m \!\to\! C_{m+1}\}_{\vec{k}_m}$,\\ 
                                          cptp $\mathcal{P}_{\vec{k}_m}\!:\!R_m \!\to\! R_{m+1}A_{m+1}$,\\
                                          POVM $(T,I-T)$}\vspace{1mm}    & Adaptive strategy of $n$ channel 
                                                                           uses with classical feed-forward,
                                                                           but otherwise arbitrary quantum memory
                                                                           at the input
                                                                              & Sec.~\ref{subsec:weaker-restriction}
                                                                                \\ \hline
    $\mathbb{A}_n^{c,0}$ & \parbox[t]{1.2cm}%
                         {Def.~\ref{defi:A-n-c},\\ 
                          Fig.~\ref{FA}} & \parbox[t]{4.5cm}{states $\rho_{x_m}^{A_m}$, instruments \\
                                           $\{\mathcal{F}_{\vec{k}_m} \!:\! B_mC_m \!\to\! C_{m+1}\}_{\vec{k}_m}$,\\
                                           $q(x_m|\vec{x}_{m-1},\vec{k}_{m-1})$, \\
                                           POVM $(T,I-T)$}\vspace{1mm}   & Adaptive strategy of $n$ channel 
                                                                           uses with classical feed-forward,
                                                                           and without quantum memory
                                                                           at the input 
                                                                              & Sec.~
                                                                                \ref{subsec:A-c-0},
                                                                                \ref{subsec:weaker-restriction},
                                                                                \\ \hline
    $\mathbb{A}_n^0$ & \parbox[t]{1.2cm}%
                     {Def.~\ref{Def1},\\
                      Rem.~\ref{rem:A-n-0}} & \parbox[t]{4.5cm}{state $\rho_1^{R_1A_{1}}$, channels \\
                                          $\mathcal{F}_{m}:B_mC_m \rightarrow A_{m+1}C_{m+1}$, \\
                                          POVM $(T,I-T)$}    & Adaptive strategy of $n$ channel uses 
                                                               without quantum memory at the input,
                                                               but otherwise arbitrary quantum feed-forward 
                                                                              & Sec.~\ref{subsec:weaker-restriction} 
                                                                                \\ \hline
    \end{tabular}
\end{center}
\caption{The different classes of adaptive strategies 
         considered in the present paper, how they are denoted, 
         where they are defined and which mathematical elements 
         have to specified to identify a strategy from each class. 
         In the last column we point to the sections of the paper 
         containing results on the respective classes.}
\label{table-of-classes}
\end{table}

For a given class $\mathbb{S}_n \subset \mathbb{A}_n$ of adaptive strategies 
for any number $n$ of channel uses, the fundamental problem is now to characterize the 
possible pairs of error exponents for two channels $\mathcal{M}$ and $\overline{\mathcal{M}}$:
\begin{align}
\nonumber
  &\mathfrak{E}(\mathcal{M}\|\overline{\mathcal{M}}|\mathbb{S}) \\
   & \coloneqq \left\{ (r,s) : \exists\mathcal{T}_n\in\mathbb{S}_n \ 
                0 \leq r \leq \liminf_{n\rightarrow\infty} 
                          -\frac1n \log\beta_n(\mathcal{M}\|\overline{\mathcal{M}}|\mathcal{T}_n),       \,                          
                              0 \leq s \leq \liminf_{n\rightarrow\infty} 
                          -\frac1n \log\alpha_n(\mathcal{M}\|\overline{\mathcal{M}}|\mathcal{T}_n)\right\}.
\end{align}
In particular, we are interested, for each \fs{$r\geq 0$, in the largest $s$} 
such that $(r,s) \in \mathfrak{E}(\mathcal{M}\|\overline{\mathcal{M}}|\mathbb{S})$.
To this end, we define the error rate tradeoff
\begin{align}
  B_e^{\mathbb{S}}(r|\mathcal{M}\|\overline{\mathcal{M}}) 
    := \sup \left\{ s \left|
\begin{array}{l}    
    \exists \mathcal{T}_n\in\mathbb{S}_n \ 
                     r \leq \liminf_{n\rightarrow\infty} 
                          -\frac1n \log\beta_n(\mathcal{M}\|\overline{\mathcal{M}}|\mathcal{T}_n),\\
                            s \leq \liminf_{n\rightarrow\infty} 
                          -\frac1n \log\alpha_n(\mathcal{M}\|\overline{\mathcal{M}}|\mathcal{T}_n),
\end{array}
\right.\right\}
\end{align}
\fs{known as Hoeffding exponent}, as well as the closely related function
\begin{align}
  C^{\mathbb{S}}(a,b|\mathcal{M}\|\overline{\mathcal{M}})
    := \inf_{\mathcal{T}_n\in\mathbb{S}_n} 
       \liminf_{n\rightarrow\infty} 
           -\frac1n \log\left( 2^{na}\alpha_n(\mathcal{M}\|\overline{\mathcal{M}}|\mathcal{T}_n)
                               + 2^{nb}\beta_n(\mathcal{M}\|\overline{\mathcal{M}}|\mathcal{T}_n) \right).
\end{align}
Note that $\mathfrak{E}(\mathcal{M}\|\overline{\mathcal{M}}|\mathbb{S})$ is a closed set 
by definition, and for most `natural' restrictions $\mathbb{S}$, it is also convex. 
In the latter case, the graph of $B_e^{\mathbb{S}}(r|\mathcal{M}\|\overline{\mathcal{M}})$ 
traces the upper boundary of $\mathfrak{E}(\mathcal{M}\|\overline{\mathcal{M}}|\mathbb{S})$, 
and it can be reconstructed from $C^{\mathbb{S}}(a,b|\mathcal{M}\|\overline{\mathcal{M}})$ 
by a Legendre transform.

Historically, two extreme regimes are of special interest: the maximally 
asymmetric error exponent, 
\[
  \max r \text{ s.t. } \exists s\ (r,s) \in \mathfrak{E}(\mathcal{M}\|\overline{\mathcal{M}}|\mathbb{S})
     = \max r \text{ s.t. } (r,0) \in \mathfrak{E}(\mathcal{M}\|\overline{\mathcal{M}}|\mathbb{S}),
\]
together with the opposite one of maximization of $s$, 
which are known as Stein's exponents, 
and the symmetric error exponent
\[\begin{split}
  C^{\mathbb{S}}(\mathcal{M},\overline{\mathcal{M}}) 
      &= \max r \text{ s.t. } (r,r) \in \mathfrak{E}(\mathcal{M}\|\overline{\mathcal{M}}|\mathbb{S}) \\
      &= C^{\mathbb{S}}(0,0|\mathcal{M}\|\overline{\mathcal{M}}),
\end{split}\]
which is generally known as Chernoff exponent or Chernoff bound. 

\medskip
In the present paper, we assume that all Hilbert spaces of interest
are separable, i.e. they are spanned by countable bases,
and we are primarily occupied with the performance of adaptive strategies. 
Naturally, the first question in this search
would be to investigate the existence of quantum channels for which some class 
$\mathbb{S}_n \subset \mathbb{A}_n$ outperforms the parallel strategy when $n\rightarrow\infty$;
in other words, if there exists a separation between adaptive and non-adaptive strategies. 
We study this question in general, and in particular when the channels are 
entanglement-breaking of the following form:
\begin{align}
\mathcal{M}(\xi) = \sum_x (\Tr E_x \xi) \rho_x, \quad
\overline{\mathcal{M}}(\xi) = \sum_x (\Tr E_x' \xi) \sigma_x, \label{PVM-ex}
\end{align}
where $\{E_x\}$ and $\{E_x'\}$ are PVMs and $\rho_x,\sigma_x$ are states on the output system.
We show that when these two PVMs are the same, $E_x=E_x'$, then
the largest class $\mathbb{A}_n$ cannot outperform the parallel strategy as $n\rightarrow\infty$.
When the two PVMs are different,
we find two examples such that the largest class $\mathbb{A}_n$ outperforms the parallel strategies
as $n\rightarrow\infty$.
For a general pair of qq-channels we focus on the class $\mathbb{A}^{c,0}_{n}$ 
of strategies without quantum memory at the sender's side and with adaptive strategies
that only allow for classical discrete feed-forward.
We show that the class $\mathbb{A}^{c,0}_n$ cannot outperform the parallel strategies 
when $n\rightarrow\infty$.
These findings are then applied to the discrimination power of a quantum channel, which 
quantifies how well two given states in $A^{\otimes n}$ can be discriminated after 
passing through a quantum channel, and whether adaptive strategies can be beneficial. 
To this end, we focus on a particular class of channels, namely 
cq-channels and investigate if the most general strategy 
offers any benefit over the most weak strategy $\mathbb{P}_{n}^{0}$.
This study takes an essential role in the above problems.

\section{The first asymptotic separation between adaptive and non-adaptive strategies}
\label{lowerbound}
\subsection{Useful proposition for asymptotic separation}
In this section we exhibit an asymptotic separation between the Chernoff error 
exponents of discriminating between two channels by adaptive versus
non-adaptive strategies. Concretely, we will show that two 
channels described in \cite{PhysRevA.81.032339}, and shown to be 
perfectly distinguishable by adaptive strategies of $n\geq 2$ copies, 
hence having infinite Chernoff exponent, nevertheless
have a finite error exponent under non-adaptive strategies. 

The separation is based on a general lower bound on non-adaptive strategies
for an arbitrary pair of channels. Consider two quantum channels, i.e. cptp maps, 
$\mathcal{M},\OL{\mathcal{M}}:A\rightarrow B$. 
To fix notation, we can write their Kraus decompositions as
\begin{align*}
  \mathcal{M}(\rho)      = \sum_i E_i \rho E_i^\dagger, 
  \quad
  \OL{\mathcal{M}}(\rho) = \sum_j F_j \rho F_j^\dagger.
\end{align*}
The most general strategy to distinguish them 
consists in the preparation of a, w.l.o.g. pure, state 
$\varphi$ on $A \otimes R$, where $R \simeq A$, send it through the unknown 
channel, and make a binary measurement $(T,I-T)$ on $B \otimes R$:
\begin{align*}
  p = \tr \left((\id_{R}\otimes\mathcal{M})\varphi\right)T, 
  \quad
  q = \tr \left((\id_{R}\otimes\OL{\mathcal{M}})\varphi\right)T, 
\end{align*}
and likewise $1-p$ and $1-q$ by replacing $T$ in the above formulas with $I-T$. 
Note that for uniform prior probabilities on the two hypotheses, 
the error probability in inferring the true channel from the measurement 
output is $\frac12(1-|p-q|)$.

The maximum of $|p-q|$ over state preparations and measurements gives rise to 
the (normalized) diamond norm distance of the channels \cite{Kitaev,AKN,Paulsen:book,Watrous:SDP}:
\[
  \max_{\varphi,T} |p-q| = \frac12 \| \mathcal{M}-\OL{\mathcal{M}} \|_\diamond,
\] 
which in turn quantifies the minimum discrimination error under the 
most general quantum strategy:
\[
  P_{e} = \frac12 \left( 1 - \frac12 \| \mathcal{M}-\OL{\mathcal{M}} \|_\diamond \right).
\]

We are interested in the asymptotics of this error probability when 
the discrimination strategy has access to $n \gg 1$ many instances of
the unknown channel in parallel, or in other words, in a non-adaptive
way. This means effectively that the two hypotheses are the simple 
channels $\mathcal{M}^{\otimes n}$ and $\OL{\mathcal{M}}^{\otimes n}$, so that 
the error probability is
\[
  P_{e,\mathbb{P}}^{(n)} = \frac12 \left( 1-\frac12\left\| \mathcal{M}^{\otimes n}
                                                           -\OL{\mathcal{M}}^{\otimes n} \right\|_\diamond \right).
\]
The (non-adaptive) Chernoff exponent is then given as
\[
  C^{\mathbb{P}}(\mathcal{M},\OL{\mathcal{M}}) 
         = \lim_{n\rightarrow\infty} -\frac{1}{n}\log P_{e,\mathbb{P}}^{(n)},
\]
the existence of the limit being guaranteed by general principles. Note that the limit
can be $+\infty$, which happens in all cases where there is an $n$ such that
$P_{e,\mathbb{P}}^{(n)} = 0$. It is currently unknown whether this is the only case; 
cf. the case of the more flexible adaptive strategies, for which there is a
simple criterion to determine whether there exists an $n$ such that the adaptive 
error probability $P_{e,\mathbb{A}}^{(n)} = 0$ \cite{PhysRevLett.103.210501}, and 
then evidently $C^{\mathbb{A}}(\mathcal{M},\OL{\mathcal{M}}) = +\infty$; 
conversely, we know that in all other cases, the adaptive Chernoff exponent 
is $C^{\mathbb{A}}(\mathcal{M},\OL{\mathcal{M}}) < +\infty$ \cite{2017arXiv170501642Y}. 
There exist also other lower bounds on the symmetric discrimination error by 
adaptive strategies, for instance \cite[Thm.~3]{PLLP:lowerbound} geared towards 
finite $n$.

Duan \emph{et al.} \cite{7541701} have attempted a characterization 
of the channel pairs such that there exists an $n$ with $P_{e,\mathbb{P}}^{(n)} = 0$,
and have given a simple sufficient condition for the contrary. Namely, 
the existing result \cite[Cor.~1]{7541701} states that if $\operatorname{span}\{ E_i^\dagger F_j \}$ 
contains a positive definite element, then for all $n$ we have $P_{e,\mathbb{P}}^{(n)} > 0$.
The following proposition, which makes the result of \cite{7541701} quantitative, 
is the main result of this section.

\medskip
\begin{proposition}
  \label{prop:bound}
When complex numbers $\gamma_{ij} \in \mathbb{C}$ satisfy the condition that
 $\sum_{ij} |\gamma_{ij}|^2 = 1$ 
  and $P := \sum_{ij} \gamma_{ij} E_i^\dagger F_j > 0$, i.e. $P$ is positive definite,
  then the inequality
  \[
    P_{e,\mathbb{P}}^{(n)} \geq \frac14 \lambda_{\min}(P)^{4n}
  \]
  holds for all $n$, where $\lambda_{\min}(A)$ denotes the smallest eigenvalue of the Hermitian 
  operator $A$. Consequently,
  \[
    C^{\mathbb{P}}(\mathcal{M},\OL{\mathcal{M}}) \leq 4\log\|P^{-1}\|_{\infty}.
  \]
\end{proposition}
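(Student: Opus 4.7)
The plan is to lower-bound $P_{e,\mathbb{P}}^{(n)}$ by first establishing, for a single channel use, a pointwise inequality relating $\lambda_{\min}(P)$ to the output overlap of the two channels on an arbitrary pure input, then tensorising to $n$ copies and converting the overlap bound into an error-probability bound via Fuchs--van de Graaf. To carry out the single-copy step, I would fix a pure state $|\phi\rangle$ on $AR$, set $|\Phi_i\rangle:=(E_i\otimes I_R)|\phi\rangle$ and $|\Psi_j\rangle:=(F_j\otimes I_R)|\phi\rangle$, and expand
\[
   \langle\phi|(P\otimes I_R)|\phi\rangle
      = \sum_{ij}\alpha_{ij}\,\langle\Phi_i|\Psi_j\rangle.
\]
Cauchy--Schwarz on the index pair $(i,j)$, together with the hypothesis $\sum_{ij}|\alpha_{ij}|^2=1$, bounds the modulus of the right-hand side by $\bigl(\sum_{ij}|\langle\Phi_i|\Psi_j\rangle|^2\bigr)^{1/2}$. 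A short Kraus computation using $\sum_i|\Phi_i\rangle\langle\Phi_i|=\rho:=(\mathcal{M}\otimes\id_R)(\phi)$ and $\sum_j|\Psi_j\rangle\langle\Psi_j|=\sigma:=(\overline{\mathcal{M}}\otimes\id_R)(\phi)$ identifies the inner double sum with $\Tr(\rho\sigma)$. Meanwhile, since $P>0$ and $\Tr_R\phi$ is a state on $A$, we have $\langle\phi|(P\otimes I_R)|\phi\rangle = \Tr(P\,\Tr_R\phi)\geq\lambda_{\min}(P)$, giving the single-copy pointwise bound $\lambda_{\min}(P)^2\leq \Tr(\rho\sigma)$.

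To pass to $n$ copies, I would note that $\mathcal{M}^{\otimes n}$ and $\overline{\mathcal{M}}^{\otimes n}$ have Kraus operators $E_{\vec i}=E_{i_1}\otimes\cdots\otimes E_{i_n}$ and $F_{\vec j}=F_{j_1}\otimes\cdots\otimes F_{j_n}$, and the coefficients $\alpha_{\vec i\vec j}:=\prod_k\alpha_{i_kj_k}$ satisfy $\sum_{\vec i\vec j}|\alpha_{\vec i\vec j}|^2=1$ and $\sum_{\vec i\vec j}\alpha_{\vec i\vec j}E_{\vec i}^\dagger F_{\vec j}=P^{\otimes n}>0$, with $\lambda_{\min}(P^{\otimes n})=\lambda_{\min}(P)^n$. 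The single-copy inequality therefore lifts to $\lambda_{\min}(P)^{2n}\leq \Tr(\rho_n\sigma_n)$ for every pure input on $A^{\otimes n}R$, where $\rho_n,\sigma_n$ denote the outputs of the $n$-fold channels.

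For the final step I would apply the Fuchs--van de Graaf inequality $\tfrac12\|\rho_n-\sigma_n\|_1\leq\sqrt{1-F(\rho_n,\sigma_n)^2}$, together with the standard bounds $F(\rho,\sigma)^2\geq\Tr(\rho\sigma)$ and $\Tr(\rho\sigma)\leq 1$, to obtain $\tfrac12\|\rho_n-\sigma_n\|_1\leq\sqrt{1-\Tr(\rho_n\sigma_n)^2}\leq\sqrt{1-\lambda_{\min}(P)^{4n}}$ uniformly in $\phi$. Maximising the left-hand side over $\phi$ then upper-bounds $\tfrac12\|\mathcal{M}^{\otimes n}-\overline{\mathcal{M}}^{\otimes n}\|_\diamond$ by the same quantity, and the elementary estimate $1-\sqrt{1-y}\geq y/2$ for $y\in[0,1]$ yields $P_{e,\mathbb{P}}^{(n)}\geq\tfrac14\lambda_{\min}(P)^{4n}$; the Chernoff-exponent bound $C^{\mathbb{P}}(\mathcal{M},\overline{\mathcal{M}})\leq 4\log\|P^{-1}\|_{\infty}$ follows by taking $-\tfrac{1}{n}\log$ and $n\to\infty$. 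The main obstacle I expect is recognising the clean Cauchy--Schwarz identification that turns the $P$-matrix element into the state overlap $\Tr(\rho\sigma)$; once that step is in hand, tensorisation and the fidelity-to-trace-norm conversion are routine.
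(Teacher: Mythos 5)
Your proposal is correct and follows essentially the same route as the paper: a Cauchy--Schwarz step turning the matrix element of $P$ into the overlap $\Tr(\rho\sigma)$ (your $\langle\Phi_i|\Psi_j\rangle$ are exactly the quantities $\Tr E_i^\dagger F_j\tau$ the paper uses), tensorisation via $P^{\otimes n}$, and conversion to an error-probability bound through the fidelity and Fuchs--van de Graaf. The only cosmetic difference is that you derive the identity $\sum_{ij}|\langle\Phi_i|\Psi_j\rangle|^2=\Tr(\rho\sigma)$ directly rather than citing it, and you phrase the last step as $1-\sqrt{1-y}\geq y/2$ instead of $\sqrt{1-F^2}\leq 1-\tfrac12F^2$, which is the same estimate.
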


\begin{proof}
We begin with a test state $\varphi$ as in the above description of the most general 
non-adaptive strategy for the channels $\mathcal{M}$ and $\OL{\mathcal{M}}$, 
so that the two output states are
$\rho = (\id_R\otimes\mathcal{M})\varphi$, $\sigma = (\id_R\otimes\OL{\mathcal{M}})\varphi$.
By well-known inequalities \cite{761271}, it holds
\[
  \frac12 \| \rho-\sigma \|_1 \leq \sqrt{1-F(\rho,\sigma)^2} \leq 1 - \frac12 F(\rho,\sigma)^2, 
\]
where $F(\rho,\sigma) = \|\sqrt{\rho}\sqrt{\sigma}\|_1$ is the fidelity. 
Thus, it will be enough to lower bound the fidelity between the output states 
of the two channels. 
With $\tau = \tr_R \ketbra{\varphi}$, we have:
\[\begin{split}
  F(\rho,\sigma) &=    \|\sqrt{\rho}\sqrt{\sigma}\|_1 \\
                 &\geq \tr \sqrt{\rho}\sqrt{\sigma}   \\
                 &\geq \tr \rho\sigma\\
                 &=    \sum_{ij} |\tr E_i^\dagger F_j \tau |^2 \\
                 &\geq \left| \sum_{ij} \gamma_{ij} \tr E_i^\dagger F_j \tau \right|^2 \\
                 &=    |\tr \tau P|^2.
\end{split}\]
Here, the second line is by standard inequalities for the trace norm, 
the third is because of $\rho \leq \sqrt{\rho}$, the fourth is a formula
from \cite[Sec.~II]{7541701}, in the fifth we used Cauchy-Schwarz 
inequality and in the last line the definition of $P$. 
Since $\tau$, like $\varphi$, ranges over all states, we get 
\[
  F(\rho,\sigma)^2 \geq \lambda_{\min}(P)^4, 
\]
and so
\[
  P_e \geq \frac14 \lambda_{\min}(P)^4.
\]

We can apply the same reasoning to $\mathcal{M}^{\otimes n}$ and $\OL{\mathcal{M}}^{\otimes n}$,
for which the vector $(\gamma_{ij})^{\otimes n}$ is eligible and leads to the
positive definite operator $P^{\otimes n}$. Thus, 
\[
  P_{e,\mathbb{P}}^{(n)} \geq \frac14 \lambda_{\min}\left(P^{\otimes n}\right)^{4}
                         =    \frac14 \lambda_{\min}(P)^{4n}.
\]
Taking the limit and noting $\lambda_{\min}(P)^{-1} = \|P^{-1}\|_{\infty}$
concludes the proof. 
\end{proof}

\subsection{Two examples}
\begin{example}
\label{ex:Harrow-et-al}
Next we show that two channels defined by Harrow \emph{et al.} \cite{PhysRevA.81.032339} 
yield an example of a pair with $C^{\mathbb{P}}(\mathcal{M},\OL{\mathcal{M}}) < +\infty$,
yet $C^{\mathbb{A}}(\mathcal{M},\OL{\mathcal{M}}) = +\infty$ because indeed
$P_{e,\mathbb{A}}^{(2)} = 0$. 
In \cite{PhysRevA.81.032339}, the following two entanglement-breaking channels
from $A\otimes C=\mathbb{C}^2\otimes\mathbb{C}^2$ (two qubits) to 
$B=\mathbb{C}^2$ (one qubit) are considered:
\begin{align*}
  \mathcal{M}(\rho^A\otimes\gamma^C) &= \ketbra{0}{0} \bra{0}\gamma\ket{0}
                            + \ketbra{0}{0} \bra{1}\gamma\ket{1} \bra{0}\rho\ket{0}
                            + \frac12 I \bra{1}\gamma\ket{1} \bra{1}\rho\ket{1}, \\
  \OL{\mathcal{M}}(\rho^A\otimes\gamma^C) &= \ketbra{+}{+} \bra{0}\gamma\ket{0}
                            + \ketbra{1}{1} \bra{1}\gamma\ket{1} \bra{+}\rho\ket{+}
                            + \frac12 I \bra{1}\gamma\ket{1} \bra{-}\rho\ket{-}, 
\end{align*}
extended by linearity to all states.
Here, $\ket{0}, \ket{1}$ are the computational basis ($Z$ eigenbasis) of the qubits,
while $\ket{+}, \ket{-}$ are the Hadamard basis ($X$ eigenbasis). 

In words, both channels measure the qubit $C$ in the computational basis.
If the outcome is `0', they each prepare a pure state on $B$ (ignoring the input
in $A$): $\ketbra{0}{0}$ for $\mathcal{M}$, $\ketbra{+}{+}$ for $\OL{\mathcal{M}}$.
If the outcome is `1', they each make a measurement on $A$ and prepare an output 
state on $B$ depending on its outcome: standard basis measurement for $\mathcal{M}$
with $\ketbra{0}{0}$ on outcome `0' and the maximally mixed state $\frac12 I$ on 
outcome `1'; 
Hadamard basis measurement for $\OL{\mathcal{M}}$ with $\ketbra{1}{1}$ on outcome `+' and the 
maximally mixed state $\frac12 I$ on outcome `-'. 
In \cite{PhysRevA.81.032339}, a simple adaptive strategy for $n=2$ uses of the
channel is given that discriminates $\mathcal{M}$ and $\OL{\mathcal{M}}$ perfectly: 
The first instance of the channel is fed with $\ketbra{0}\otimes\ketbra{0}$,
resulting in an output state $\rho_1$; the second instance of the channel 
is fed with $\ketbra{1}\otimes\rho_1$; the output state $\rho_2$ of the second
instance is $\ketbra{0}$ if the unknown channel is $\mathcal{M}$, and 
$\ketbra{1}$ if the unknown channel is $\OL{\mathcal{M}}$, so a computational 
basis measurement reveals it. Note that no auxiliary system $R$ is needed, 
but the feed-forward nevertheless requires a qubit of quantum memory for the 
strategy to be implemented. In any case, this proves that $P_{e,\mathbb{A}}^{(2)} = 0$. 
In \cite{PhysRevA.81.032339}, it is furthermore proved that for all $n\geq 1$, 
$P_{e,\mathbb{P}}^{(n)} > 0$.

We now show that Proposition \ref{prop:bound} is applicable to 
yield an exponential lower bound on the non-adaptive error probability.
The Kraus operators of the two channels can be chosen as follows:
\begin{align*}
  \mathcal{M}: E_i\in\Bigl\{ \ket{0}^B & \bra{00}^{AC}, 
                                         & \quad \OL{\mathcal{M}}: F_j\in\Bigl\{ \ket{+}^B&\bra{00}^{AC}, \\
                  \ket{0}^B&\bra{10}^{AC},                     &                 \ket{+}^B&\bra{10}^{AC}, \\
                  \ket{0}^B&\bra{01}^{AC},                     &                 \ket{1}^B&\bra{+1}^{AC}, \\
                  \ket{0}^B&\bra{11}^{AC}/\sqrt{2},            &                 \ket{0}^B&\bra{-1}^{AC}/\sqrt{2}, \\
                  \ket{1}^B&\bra{11}^{AC}/\sqrt{2} \Bigr\},    &                 \ket{1}^B&\bra{-1}^{AC}/\sqrt{2} \Bigr\}.
\end{align*}
Thus, the products $E_i^\dagger F_j$ include the matrices 
\begin{align*}
  E_1^\dagger F_1 &= \sqrt{\frac12}\ketbra{00}{00}, \\
  E_2^\dagger F_2 &= \sqrt{\frac12}\ketbra{10}{10}, \\
  E_5^\dagger F_3 &= \sqrt{\frac12}\ketbra{11}{+1}, \\
  E_5^\dagger F_5 &=       \frac12 \ketbra{11}{-1}, \\
  E_3^\dagger F_4 &= \sqrt{\frac12}\ketbra{01}{-1}, 
\end{align*}
from which we can form, by linear combination, the operators
\begin{align*}
  E_1^\dagger F_1                                 &= \sqrt{\frac12}\ketbra{0}{0}\otimes\ketbra{0}{0}, \\
  E_2^\dagger F_2                                 &= \sqrt{\frac12}\ketbra{1}{1}\otimes\ketbra{0}{0}, \\
  \sqrt{\frac12}E_5^\dagger F_3 - E_5^\dagger F_5 &= \sqrt{\frac12}\ketbra{1}{1}\otimes\ketbra{1}{1}, \\
  \sqrt{\frac12}E_3^\dagger F_4 - E_5^\dagger F_5 &= \sqrt{\frac12}\ketbra{-}{-}\otimes\ketbra{1}{1}, 
\end{align*}
whose sum is indeed positive definite, so we get an exponential lower bound on 
$P_{e,\mathbb{P}}^{(n)}$ and hence a finite value of $C^{\mathbb{P}}(\mathcal{M},\OL{\mathcal{M}})$. 
To get a concrete upper bound on $C^{\mathbb{P}}(\mathcal{M},\OL{\mathcal{M}})$ from 
the above method, 
we choose $\gamma_{11}=\gamma_{22}=\alpha$ and $\gamma_{53}=\gamma_{34}=\gamma_{55}=\beta $,
with $\alpha,\beta>0$ and $2\alpha^2+5\beta^2=1$ 
and $\gamma_{i,j}=0$ for other cases in Proposition \ref{prop:bound}. 
Then, $P$ is written as
\[\begin{split}
  P &= \alpha E_1^\dagger F_1 + \alpha E_2^\dagger F_2 
        + \beta\sqrt{\frac12}E_5^\dagger F_3 + \beta\sqrt{\frac12}E_3^\dagger F_4 - 2\beta E_5^\dagger F_5 \\
    &= \alpha\sqrt{\frac12} I \otimes\ketbra{0} + \beta\sqrt{\frac12}(\ketbra{1}+\ketbra{-})\otimes\ketbra{1}, 
\end{split}\]
which implies the condition $P>0$.
Now $P$ is an orthogonal sum of two rank-two operators, i.e. as a $4\times 4$-matrix 
it has block diagonal structure with two $2\times 2$-blocks. Their minimum eigenvalues
are easily calculated:
they are $\alpha\sqrt{\frac12}$ and $\beta\sqrt{2}\sin^2\frac{\pi}{8}$. 
Since
$\lambda_{\min}(P)$ will be the smaller of the two, we optimize it by making the 
two values equal, i.e. we want $\alpha = 2\beta\sin^2\frac{\pi}{8}$. Inserting this in 
the normalization condition and solving for $\beta$ yields
$\beta^2 = \left(8\sin^4\frac{\pi}{8}+5\right)^{-1}$, thus
\[
  \lambda_{\min}(P) = \sqrt{\frac{2}{8\sin^4\frac{\pi}{8}+5}} \sin^2\frac{\pi}{8}
                    = \frac{2-\sqrt{2}}{4\sqrt{4-\sqrt{2}}} 
                    \approx 0.091,
\]
where we have used the identity $\sin^2\frac{\pi}{8} = \frac12(1-\sqrt{\frac12})$.
Hence, Proposition \ref{prop:bound} guarantees that
\[
  C^{\mathbb{P}}(\mathcal{M},\OL{\mathcal{M}}) \leq 4\log\frac{4\sqrt{4-\sqrt{2}}}{2-\sqrt{2}} 
                                                 \approx 13.83.
\]

Note that a lower bound is the Chernoff bound of the two pure output
states $\ketbra{0}{0} = \mathcal{M}(\ketbra{00})$ 
and $\ketbra{+}{+} = \OL{\mathcal{M}}(\ketbra{00})$, 
which is $\log 2 = 1$, so
$C^{\mathbb{P}}(\mathcal{M},\OL{\mathcal{M}}) \geq 1$. It seems reasonable to conjecture that this 
is optimal, but we do not have at present a proof of it. 
\hfill$\square$
\end{example}

\medskip
\begin{example}
\label{ex:POVMs}
For later use, we briefly discuss another example due to Krawiec \emph{et al.} \cite{KPP:POVM}, 
which consists of two qc-channels implementing 
two rank-one POVMs on a qutrit $A$, and the output $Y$ is a nine-dimensional
Hilbert space.
They are given by vectors $\ket{x_i}\in A$ and $\ket{y_i}\in A$ ($i=1,\ldots,9$)
such that $\sum_{i=1}^9 \ketbra{x_i} = \sum_{j=1}^9 \ketbra{y_j} = I$:
\begin{equation}
  \mathcal{P}(\rho) = \sum_{i=1}^9 \bra{x_i}\rho\ket{x_i} \ketbra{i},
  \quad
  \OL{\mathcal{P}}(\rho) = \sum_{j=1}^9 \bra{y_j}\rho\ket{y_j} \ketbra{j}. 
\end{equation}
The Kraus operators are $E_i=\ketbra{i}{x_i}$ and $F_j=\ketbra{j}{y_j}$,
which makes it easy to calculate 
$\operatorname{span}\{E_i^\dagger F_j\} = \operatorname{span}\{\ketbra{x_i}{y_i}\}$.

In \cite{KPP:POVM} it is shown how to choose the two POVMs in such a way that 
this subspace does not contain the identity $I$ and indeed satisfies the 
``disjointness'' condition 
of Duan \emph{et al.} \cite{PhysRevLett.103.210501} for perfect finite-copy 
distinguishability of the two channels using adaptive strategies. Thus,
$C^{\mathbb{A}}(\mathcal{P},\OL{\mathcal{P}}) = +\infty$.
On the other hand, it is proven in \cite{KPP:POVM} that the subspace contains 
a positive definite matrix $P>0$. Hence, Proposition \ref{prop:bound} guarantees that
$C^{\mathbb{P}}(\mathcal{P},\OL{\mathcal{P}}) < +\infty$.
\hfill$\square$
\end{example}

\medskip 
So indeed there are channels, entanglement-breaking channels
at that, for which the adaptive and the non-adaptive Chernoff exponents 
are different; in fact, the separation is maximal, in that the former is 
$+\infty$ while the latter is finite: They lend themselves easily to experiments, 
as the channels of Example \ref{ex:Harrow-et-al} are composed of simple qubit measurement and 
state preparations.  It should be noted that this separation 
is a robust phenomenon, and not for example related to the perfect finite-copy
distinguishability. Namely, by simply mixing our example channels with the same 
small fraction $\epsilon>0$ of the completely depolarizing channel $\tau$, 
we get two new channels 
$\mathcal{M}' = (1-\epsilon)\mathcal{M}+\epsilon\tau$ and 
$\OL{\mathcal{M}}' = (1-\epsilon)\OL{\mathcal{M}}+\epsilon\tau$
with only smaller non-adaptive Chernoff bound, 
$C^{\mathbb{P}}(\mathcal{M}',\OL{\mathcal{M}}') \leq C^{\mathbb{P}}(\mathcal{M},\OL{\mathcal{M}}) < +\infty$. 
As shown below, by choosing a suitable $\epsilon>0$,
the fully general adaptive strategy satisfies
\begin{align}
C^{\mathbb{P}}(\mathcal{M},\OL{\mathcal{M}})< C^{\mathbb{A}}(\mathcal{M}',\OL{\mathcal{M}}')<
\infty. \label{XL1}
\end{align}
This case gives an example for 
the asymptotic separation between adaptive and non-adaptive strategies
even with a finite exponent for adaptive strategy.

Now, we show the existence of $\epsilon>0$ to satisfy \eqref{XL1}.
Because $C^{\mathbb{A}}(\mathcal{M}',\OL{\mathcal{M}}')$ goes to infinity as $\epsilon$ goes to zero,
there exists $\epsilon>0$ to satisfy the first inequality in \eqref{XL1}.
On the other hand, 
the relation $C^{\mathbb{A}}(\mathcal{M}',\OL{\mathcal{M}}') < +\infty$ with an arbitrary $\epsilon >0$
can be shown in the following way.
Because the Kraus operators $\{E_i^{\prime}\}$ and $\{F_j'\}$
of the channels satisfy $I \in \operatorname{span}\{E_i^{\prime\dagger}F_j'\}$,
Duan \emph{et al.} \cite{PhysRevLett.103.210501} guarantees that 
$\mathcal{M}'$ and $\OL{\mathcal{M}}'$ are not perfectly distinguishable under
any $\mathbb{A}_n$ for any finite $n$.
Applying this fact to the result by Yu and Zhou \cite{2017arXiv170501642Y},
we find the existence of a finite upper bound on the Chernoff exponent $C^{\mathbb{A}}(\mathcal{M}',\OL{\mathcal{M}}')$. 

Furthermore, since the error rate tradeoff function $B_e^{\mathbb{P}}(r|\mathcal{M}\|\OL{\mathcal{M}})$ 
is continuous near $r=C^{\mathbb{P}}(\mathcal{M},\OL{\mathcal{M}})$, whereas the
adaptive variant $B_e^{\mathbb{A}}(r|\mathcal{M}\|\OL{\mathcal{M}})$ is infinite
everywhere, we automatically get separations in the Hoeffding setting, as well. 
Note that there is no contradiction with the results of \cite{PhysRevResearch.1.033169,berta2018amortized}, 
which showed equality of the adaptive and the non-adaptive Stein's exponents, 
which are indeed both $+\infty$: for the non-adaptive one this follows from 
the fact that the channels on the same input prepare different pure states,
$\ketbra{0}{0}$ for $\mathcal{M}$, $\ketbra{+}{+}$ for $\OL{\mathcal{M}}$.

\section{Responsible resources for quantum advantage}
\label{S-5}
We showed in Section \ref{lowerbound} that quantum feed-forward can improve the error 
exponent in the symmetric and Hoeffding settings for the discrimination of two qq-channels.
This result followed by investigating a pair of entanglement-breaking channels 
introduced in \cite{PhysRevA.81.032339}, and a pair of qc-channels from \cite{KPP:POVM}.

In contrast, the present section investigates which features of general feed-forward 
strategies is responsible for this advantage, and conversely, which restricted feed-forward 
strategies cannot improve the error exponents for discrimination of two qq-channels.
To address this question, we first import the results on cq-channels from 
Section \ref{secadaptive} to a special class of qq-channels.
\fs{Note that if $\mathcal{X}$ is discrete, i.e. either finite or countably infinite, 
with the atomic (power set) Borel algebra, so that arbitrary mappings 
$\mathcal{N}:x\rightarrow \rho_{x}$ and $\overline{\mathcal{N}}:x\rightarrow \sigma_{x}$
define cq-channels, we can think of them as special, entanglement-breaking, 
qq-channels 
$\mathcal{M},\OL{\mathcal{M}}:\mathcal{T}^{\mathcal{X}} \rightarrow \mathcal{T}^{\mathcal{B}}$:
\begin{equation}
  \mathcal{M}(\xi)      = \sum_{x \in \mathcal{X}} \rho_x \Tr \xi E_x, 
    \quad 
  \OL{\mathcal{M}}(\xi) = \sum_{x\in \mathcal{X}} \sigma_x \Tr \xi E_x, \label{MFC}
\end{equation} 
where $\{E_x\}_{x \in \mathcal{X}}$ is a PVM of rank-one projectors $E_x=\ketbra{x}$, 
and $\mathcal{X}$ labels an orthonormal basis $\{\ket{x}\}_{x\in\mathcal{X}}$ of 
a separable Hilbert space, denoted $\mathcal{X}$, too (cf. Eq. \ref{PVM-ex}).}

In particular, using the results of Section \ref{secadaptive},
we will show the following fact for discrimination of special entanglement-breaking channels given by Eq. (\ref{MFC});
The most general class of adaptive strategies $\mathbb{A}_{n}$
offers no gain over the weakest class of strategies $\mathbb{P}_{n}^{0}$,
i.e. non-adaptive strategies without entangled input,
even though this class uses entangled input and quantum feed-forward.
Since in the analysis of cq-channels it turns out that the most general 
strategy does not use quantum memory at the input and feed-forward that is 
classical, we are motivated to consider this restricted class of adaptive 
strategies for general qq-channels, denoted $\mathbb{A}_{n}^{c,0}$, in 
Subsection \ref{subsec:A-c-0}. 
We will show that this subclass of adaptive strategies offers no gain over 
non-adaptive strategies without quantum memory at the input. 
Finally, in Subsection \ref{subsec:weaker-restriction} we consider whether it 
is really necessary to impose both the restriction of no input quantum memory 
and classical feed-forward to rule out an advantage for non-adaptive strategies. 
Indeed, we shall show that the examples considered in Section \ref{lowerbound} 
demonstrate asymptotic advantages both for adaptive strategies with no input 
quantum memory but quantum feed-forward (``$\mathbb{A}_{n}^{0}$'') 
and for adaptive strategies with quantum memory at the input and classical 
feed-forward (``$\mathbb{A}_{n}^{c}$'').

\begin{figure}[ht]
\begin{center}
\includegraphics[width=0.4\textwidth]{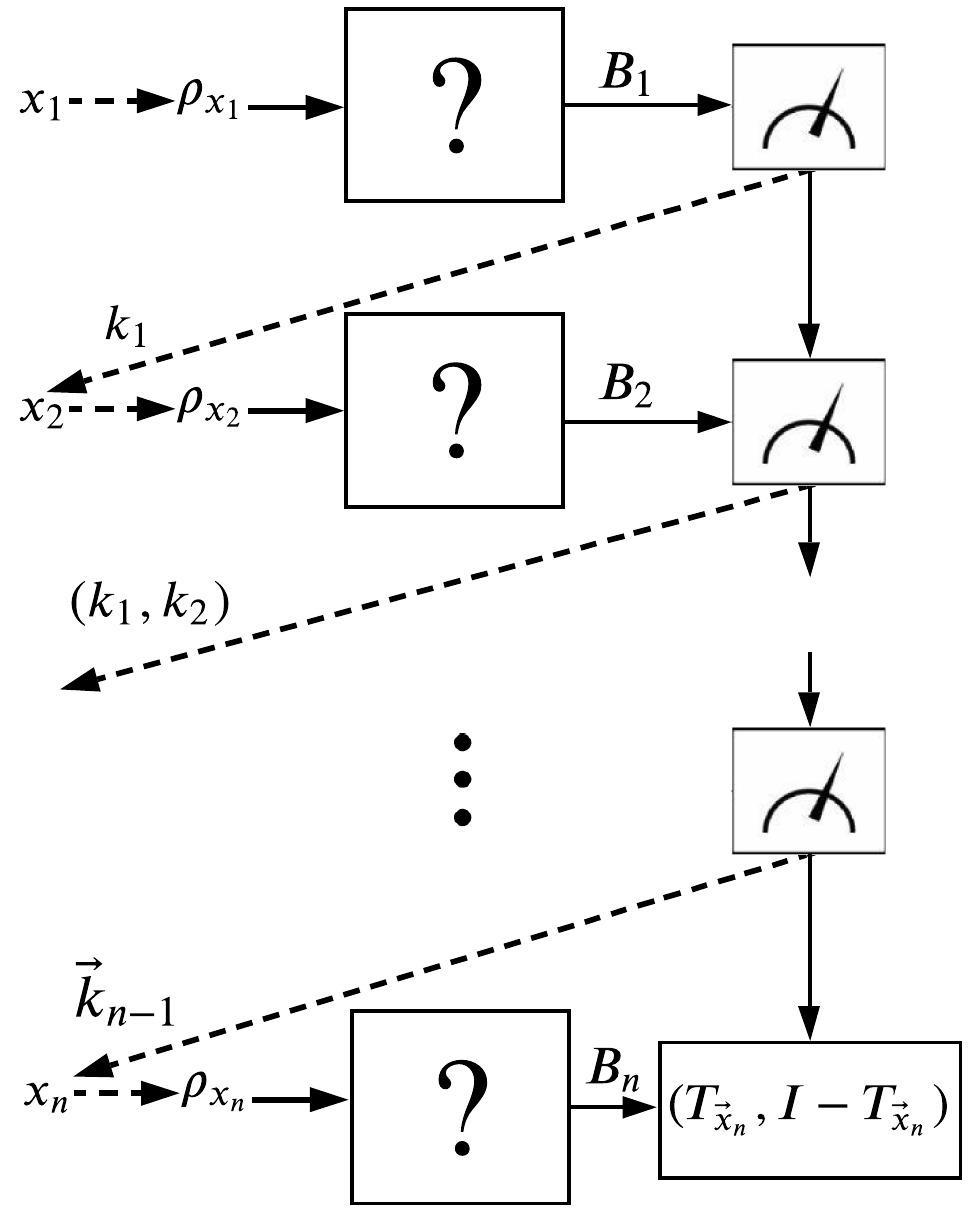}      
\caption{Adaptive quantum channel discrimination with classical feed-forward 
         and without quantum memory at the channel input, from the class $\mathbb{A}^{c,0}_n$. 
         Solid and dashed arrow denote the flow of quantum and classical information, 
         respectively. At step $m$, Alice sends the state $\rho_{{x}_{m}}$
         which she has prepared using Bob's $m-1$ classical feed-forward 
         informations, and sends it via either $\mathcal{M}$ or $\OL{\mathcal{M}}$ to Bob.}
\label{FA}
\end{center}
\end{figure}

\subsection{Discrimination of cq-channels as cptp maps under $\mathbb{A}_{n}$ strategies}
\label{subsec:cq-as-qq}
The most general class $\mathbb{A}_{n}$ 
of strategies to distinguish two qq-channels $\mathcal{M}$ and $\overline{\mathcal{M}}$
is the set of strategies given in Definition \ref{Def1}.
For this class, recall that we denote 
the generalized Chernoff and Hoeffding quantities as 
$C^{\mathbb{A}}(a,b|\mathcal{M}\|\overline{\mathcal{M}})$ and 
$B_{e}^{\mathbb{A}}(r|\mathcal{M}\|\overline{\mathcal{M}})$, respectively. 
In this subsection,
we discuss the effect of input entanglement for our cq-channel discrimination strategy,
when the input alphabet is discrete. Recall the form \eqref{MFC} of the two 
channels as qq-quantum channels.
\if0
\[
  \mathcal{M}(\xi)            = \sum_{x \in \mathcal{X}} \rho_x \Tr \xi E_x, \quad 
  \overline{\mathcal{M}}(\xi) = \sum_{x \in \mathcal{X}} \sigma_x \Tr \xi E_x, 
\]
where $E_x = \ketbra{x}$ form a PVM of rank-one projectors.
\fi

In this case, the most general strategy stated in Definition \ref{Def1}
for the discrimination of two qq-channels $\mathcal{M}$ and $\overline{\mathcal{M}}$
can be converted to the strategy stated in Subsection \ref{S4-A-1}
for the discrimination of two cq-channels 
$\mathcal{N}: x \mapsto \rho_x$ and 
$\overline{\mathcal{N}}: x \mapsto \sigma_x$
as follows.
In the general strategy for qq-channel, the operation in the $m$-th step is given as 
a quantum channel $\mathcal{F}_m : R_{m}B_m \rightarrow R_{m+1}A_{m+1}$.
To describe the general strategy for cq-channel,
we define the quantum instrument 
$\mathcal{E}_m: R_m B_m \to X_m R_{m+1}$ in the sense of \eqref{NAC} 
as
\begin{align}
\mathcal{E}_m(\xi) := \sum_{x_m \in {\cal X}}
 |x_m\rangle \langle x_m|
\otimes
(\Tr_{A_{m+1}} E_{x_m} \mathcal{F}_m(\xi) ).
\end{align}
Then, 
the general strategy for cq-channel is given 
as applying the above quantum instrument and 
choosing the obtained outcome $x_m$ as the input of the cq-channel to be discriminated.
The final states in the general strategy for qq-channel is the same as the final state in the general strategy for cq-channel.
That is, the performance of the general strategy for these two qq-channels
is the same as 
the performance of the general strategy for the above defined cq-channels.
This fact means that
the adaptive method does not improve the performance of the discrimination 
of the channels \eqref{MFC}.

Furthermore, when the quantum channel $\mathcal{F}_m$ in the strategy is replaced 
by the channel $\mathcal{F}_m'$ defined as 
$\mathcal{F}_m'(\xi) := \sum_{x_m} E_{x_m}\mathcal{F}_m(\xi)E_{x_m}$,
we do not change the statistics of the protocol for either channel. 
Since the output of $\mathcal{F}_m'$ has no entanglement between $X_m$ and $R_{m+1}$,
the presence of input entanglement does not improve the performance in this case.

\medskip
To state the next result, define for two quantum channels $\mathcal{M}$ 
and $\overline{\mathcal{M}}$ mapping $A$ to $B$, 
\begin{align}
  \label{eq:D-channel}
  D(\mathcal{M}\|\OL{\mathcal{M}})
          &\coloneqq \sup_{\rho\in\mathcal{S}^A} D(\mathcal{M}(\rho)\|\OL{\mathcal{M}}(\rho)), \text{ and} \\
  \label{eq:D-alpha-channel}
  D_\alpha(\mathcal{M}\|\overline{\mathcal{M}})
          &\coloneqq \sup_{\rho\in\mathcal{S}^A} D_{\alpha}(\mathcal{M}(\rho)\|\overline{\mathcal{M}}(\rho)).
\end{align}

\begin{theorem}
\label{en-br-dp3}
Assume that two qq-quantum channels $\mathcal{M}$ and 
$\overline{\mathcal{M}}$ are given by Eq. \eqref{MFC}.
For $0\leq r\leq D(\mathcal{M}\|\OL{\mathcal{M}})$ 
and real $a$ and $b$ 
with $-D(\mathcal{M}\|\OL{\mathcal{M}})\le a-b \le D(\OL{\mathcal{M}}\|\mathcal{M})$, 
the following holds:
\begin{align*}
\phantom{==========}
C^{\mathbb{A}}(a,b|\mathcal{M}\|\overline{\mathcal{M}})
   &= C^{\mathbb{P}^{0}}(a,b|\mathcal{M}\|\overline{\mathcal{M}})
   = C(a,b|\mathcal{N}\|\overline{\mathcal{N}}), \\
B_{e}^{\mathbb{A}}(r|\mathcal{M}\|\overline{\mathcal{M}})
   &= B_{e}^{\mathbb{P}^{0}}(r|\mathcal{M}\|\overline{\mathcal{M}})
   = B_{e}(r|\mathcal{N}\|\overline{\mathcal{N}}).
   \phantom{==========}\square 
\end{align*}
\end{theorem}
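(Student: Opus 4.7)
The plan is to reduce the discrimination of the entanglement-breaking qq-channels $\mathcal{M}, \overline{\mathcal{M}}$ of \eqref{MFC} to the cq-channel results already established in Theorem \ref{chernoff} and Corollary \ref{hoeffding}, exploiting the fact that both channels' first step is the rank-one PVM $\{E_x = \ketbra{x}\}$ on the input, followed by preparation of $\rho_x$ or $\sigma_x$ respectively. The key step is a strategy-simulation lemma between $\mathbb{A}_n$ (for the qq-channels) and $\bar{\mathbb{A}}^{c,0}$ (for the associated cq-channels $\mathcal{N}: x \mapsto \rho_x$ and $\overline{\mathcal{N}}: x \mapsto \sigma_x$) that preserves the type-I and type-II error probabilities.

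Concretely, given any $\mathcal{T}_n = (\rho_1^{R_1 A_1}, \mathcal{F}_1, \ldots, \mathcal{F}_{n-1}, (T, I-T)) \in \mathbb{A}_n$, I would construct a matching $\bar{\mathbb{A}}^{c,0}$-strategy by replacing each $\mathcal{F}_m : R_m B_m \to R_{m+1} A_{m+1}$ with the instrument $\mathcal{E}_m(\xi) = \sum_{x_m} \ketbra{x_m} \otimes \tr_{A_{m+1}} \bigl( E_{x_m} \mathcal{F}_m(\xi) \bigr)$, whose classical outcome $x_m$ is then fed as the next cq-channel input. Because $\mathcal{M}(\xi) = \sum_x (\tr E_x \xi)\, \rho_x$ can equivalently be realized as ``measure $\{E_x\}$, report outcome $x$, then prepare $\rho_x$'' (and analogously for $\overline{\mathcal{M}}$), this conversion produces identical joint distributions of classical outcomes and final quantum states under each hypothesis, and hence identical $\alpha_n, \beta_n$. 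This gives $C^{\mathbb{A}}(a,b|\mathcal{M}\|\overline{\mathcal{M}}) \leq C^{\bar{\mathbb{A}}^{c,0}}(a,b|\mathcal{N}\|\overline{\mathcal{N}})$, and analogously for $B_e^{\mathbb{A}}$.

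Next, Theorem \ref{chernoff} and Corollary \ref{hoeffding} collapse the middle terms to $C^{\bar{\mathbb{A}}^{c,0}}(a,b|\mathcal{N}\|\overline{\mathcal{N}}) = C^{\bar{\mathbb{P}}^0}(a,b|\mathcal{N}\|\overline{\mathcal{N}}) = C(a,b|\mathcal{N}\|\overline{\mathcal{N}})$ and likewise for $B_e$. The reverse inequality is immediate: any $\bar{\mathbb{P}}^0$-strategy sending a fixed letter $x$ in every round is mimicked by the $\mathbb{P}^0$-strategy preparing $\ket{x}^{\otimes n}$ and applying the same final measurement, which gives $C^{\bar{\mathbb{P}}^0}(a,b|\mathcal{N}\|\overline{\mathcal{N}}) \leq C^{\mathbb{P}^0}(a,b|\mathcal{M}\|\overline{\mathcal{M}})$, and the inclusion $\mathbb{P}^0 \subset \mathbb{A}$ from \eqref{eq:inclusions} gives $C^{\mathbb{P}^0}(a,b|\mathcal{M}\|\overline{\mathcal{M}}) \leq C^{\mathbb{A}}(a,b|\mathcal{M}\|\overline{\mathcal{M}})$. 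Chaining these inequalities closes the cycle into equalities throughout, which is the claim of the theorem for both Chernoff and Hoeffding quantities.

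I do not foresee a substantive obstacle. The entire content is the measure-then-prepare simulation already telegraphed in the paragraph preceding the theorem; verifying that it preserves the joint $(X^n, B^n)$ statistics on both hypotheses is routine. The only small care needed is to carry the receiver's quantum memory $R_m$ through the converted protocol, which is trivially accommodated by the output register of $\mathcal{E}_m$, and to note that the output of $\mathcal{F}_m$ can without loss of generality be dephased in the $\{E_x\}$ basis on $A_{m+1}$ before being passed to the next channel use, as observed in the paragraph just above the theorem.
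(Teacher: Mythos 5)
Your proposal is correct and follows essentially the same route as the paper: the paper's own justification (given in the discussion of Subsection \ref{subsec:cq-as-qq} preceding the theorem) is precisely the measure-then-prepare conversion via the instrument $\mathcal{E}_m(\xi) = \sum_{x_m} \ketbra{x_m}\otimes\Tr_{A_{m+1}}(E_{x_m}\mathcal{F}_m(\xi))$, followed by invoking Theorem \ref{chernoff} and Corollary \ref{hoeffding} and closing the cycle through $\bar{\mathbb{P}}^0 \subset \mathbb{P}^0 \subset \mathbb{A}$. Your explicit chaining of the inequalities and the remark about dephasing the output of $\mathcal{F}_m$ in the $\{E_x\}$ basis match the paper's argument in both substance and detail.
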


\medskip
Note that it was essential that not only the channels are entanglement-breaking, 
but that the measurement $\{E_x\}$ is a PVM, and in fact the same PVM for both 
channels. The discussion fails already when the channels each have their own PVM,
which are non-commuting. Indeed, such channels were essential to the counterexample in 
Section \ref{lowerbound}, Example \ref{ex:Harrow-et-al}, showing a genuine 
advantage of general adaptive strategies. 
In this case, the construction of  the channel $\mathcal{F}_m'$ depends on the choice 
of the hypothesis. Therefore, the condition \eqref{MFC} is essential for this discussion.

Furthermore, if the channels are entanglement-breaking, but with a general POVM
in Eq. \eqref{MFC}, i.e. the $E_x$ are not orthogonal projectors, 
the above discussion does not hold, either. Indeed, the second counterexample in 
Section \ref{lowerbound}, Example \ref{ex:POVMs}, consists of qc-channels 
implementing overcomplete rank-one measurements, once more showing a genuine 
advantage of general adaptive strategies.
In this case, the output state is separable, but it cannot be necessarily simulated 
by a separable input state.

\medskip
\begin{remark}
The discussion of this section shows that without loss of generality,
we can assume that the measurement outcome equals the next input
when ${\cal X}$ is discrete.
That is, it is sufficient to consider the case when $k_m=x_m$.
This fact can be shown as follows.
Given two cq-channels $x \mapsto \rho_x$ and $x \mapsto \sigma_x$,
we define two entanglement-breaking channels
$\mathcal{M}$ and $\overline{\mathcal{M}}$ by Eq. \eqref{MFC}.
For the case with two qq-channel,
the most general strategy is given in Definition \ref{Def1}.
For two cq-channels $\mathcal{M}:x \mapsto \rho_x$ and $\OL{\mathcal{M}}:x \mapsto \sigma_x$,
the most general strategy can be simulated by an instrument with $k_m=x_m$.

However, when $\mathcal{X}$ is not discrete, neither can we view the 
cq-channels as special qq-channels (as the Definition in Eq. \eqref{MFC} only 
makes sense for discrete $\mathcal{X}$), nor do we allow arbitrary, 
only discrete feed-forward; hence, to cover the case with continuous 
$\mathcal{X}$, we need to address it using general outcomes $k_m$
as in Section \ref{secadaptive}.
\hfill $\square$
\end{remark}

\subsection{Restricting to classical feed-forward and no quantum memory at the input: $\mathbb{A}_{n}^{c,0}$}
\label{subsec:A-c-0}
In this setting, the protocol is similar to the adaptive protocol described in 
Section \ref{secadaptive}, but extended to general quantum channels (see Fig. \ref{FA}):
after each transmission, the input state $\rho_{x_m}$ is chosen adaptively 
from the classical feed-forward. Denoting this adaptive choice of input states as 
$\vec{x}_{m}=(x_{1},\ldots,x_{m})$, the $m$-th input is chosen conditioned on 
the feed-forward information $\vec{k}_{m-1}$ and $\vec{x}_{m-1}$ 
from the conditional distribution
$p_{X_{m}|\vec{X}_{m-1},\vec{K}_{m-1}}(x_{m}|\vec{x}_{m-1},\vec{k}_{m-1})$.

\medskip
\begin{theorem}
\label{fawm}
Let $\mathcal{M}$ and $\overline{\mathcal{M}}$ be qq-channels. 
Then, for real numbers $a,b$ satisfying
$-D(\mathcal{M}\|\overline{\mathcal{M}})\le a-b \le D(\overline{\mathcal{M}}\|\mathcal{M})$ 
and any $0\leq r\leq D(\mathcal{M}\|\overline{\mathcal{M}})$, it holds
\begin{align*}
C^{\mathbb{A}^{c,0}}(a,b|\mathcal{M}\|\overline{\mathcal{M}})
&=C^{\mathbb{P}^{0}}(a,b|\mathcal{M}\|\overline{\mathcal{M}})
 =\sup_{0\leq\alpha\leq 1} (1-\alpha)D_{\alpha}(\mathcal{M}\|\overline{\mathcal{M}})
-\alpha a -(1-\alpha)b, \\
B_{e}^{\mathbb{A}^{c,0}}(r|\mathcal{M}\|\overline{\mathcal{M}})
&=B_{e}^{\mathbb{P}^{0}}(r|\mathcal{M}\|\overline{\mathcal{M}}) 
 =\sup_{0\le \alpha \le 1}
\frac{\alpha-1}{\alpha}\big(r- D_{\alpha }(\mathcal{M}\|\overline{\mathcal{M}})\big),
\end{align*}
where $D(\mathcal{M}\|\overline{\mathcal{M}})$ and $D_{\alpha }(\mathcal{M}\|\overline{\mathcal{M}})$ 
are defined in Eqs. \eqref{eq:D-channel} and \eqref{eq:D-alpha-channel}.
\end{theorem}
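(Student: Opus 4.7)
The plan is to reduce Theorem~\ref{fawm} to the cq-channel results of Section~\ref{secadaptive} (Theorem~\ref{chernoff} and Corollary~\ref{hoeffding}) by viewing the classical feed-forward ``label'' $x$ in an $\mathbb{A}_n^{c,0}$-strategy as parametrizing an input state $\rho_x$ to the qq-channel. Concretely, given the qq-channels $\mathcal{M},\overline{\mathcal{M}}:A\to B$, I would introduce the induced cq-channels
\[
  \mathcal{N} : \rho \mapsto \mathcal{M}(\rho),
  \qquad
  \overline{\mathcal{N}} : \rho \mapsto \overline{\mathcal{M}}(\rho),
\]
whose input alphabet is the (continuous, but measurable) set $\mathcal{X} = \mathcal{S}^A$ of density operators on $A$. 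Crucially, Section~\ref{secadaptive} was set up precisely to handle arbitrary measurable input alphabets, so Theorem~\ref{chernoff} and Corollary~\ref{hoeffding} apply to this $\mathcal{N},\overline{\mathcal{N}}$ without modification.

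The key observation is that an $\mathbb{A}_n^{c,0}$-strategy for $\{\mathcal{M},\overline{\mathcal{M}}\}$ \emph{is} an $\bar{\mathbb{A}}^{c,0}$-strategy for $\{\mathcal{N},\overline{\mathcal{N}}\}$: at round $m$ the conditional distribution $q(x_m|\vec{x}_{m-1},\vec{k}_{m-1})$ selects a state $\rho_{x_m}$ which is then sent through the unknown channel, and the output is processed together with the classical register $C_m$ through the instrument $\mathcal{F}_{\vec k_m}$; this is exactly the protocol described in Subsection~\ref{S4-A-1}. Thus the error probabilities in the two pictures agree round-by-round, giving
\begin{align*}
  C^{\mathbb{A}^{c,0}}(a,b|\mathcal{M}\|\overline{\mathcal{M}})
      &= C^{\bar{\mathbb{A}}^{c,0}}(a,b|\mathcal{N}\|\overline{\mathcal{N}}), \\
  B_e^{\mathbb{A}^{c,0}}(r|\mathcal{M}\|\overline{\mathcal{M}})
      &= B_e^{\bar{\mathbb{A}}^{c,0}}(r|\mathcal{N}\|\overline{\mathcal{N}}).
\end{align*}
The same reduction applied with trivial feed-forward and a fixed input shows the analogous identities for $\mathbb{P}^0$ vs.\ $\bar{\mathbb{P}}^0$. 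Applying Theorem~\ref{chernoff} and Corollary~\ref{hoeffding} to the cq-channels $\{\mathcal{N},\overline{\mathcal{N}}\}$ then yields the claimed exponents in terms of $D_\alpha(\mathcal{N}\|\overline{\mathcal{N}}) = \sup_\rho D_\alpha(\mathcal{M}(\rho)\|\overline{\mathcal{M}}(\rho))$, which by definition \eqref{eq:D-alpha-channel} is exactly $D_\alpha(\mathcal{M}\|\overline{\mathcal{M}})$.

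For completeness, the direct part (achievability by $\mathbb{P}^0$) can also be stated directly: for each $\alpha\in[0,1]$ choose $\rho$ (near-)attaining $D_\alpha(\mathcal{M}\|\overline{\mathcal{M}})$, use the product input $\rho^{\otimes n}$, and apply the Audenaert~\emph{et al.} state-discrimination bound \cite{Audenaert_2007} to $\{\mathcal{M}(\rho)^{\otimes n},\overline{\mathcal{M}}(\rho)^{\otimes n}\}$ (respectively the Hayashi bound \cite{Hayashi_2007} for the Hoeffding setting). This is exactly what was done in the proof of the direct parts in Section~\ref{secadaptive}, so no new work is required.

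The main obstacle, if any, is of a technical rather than conceptual nature: one must make sure that viewing $\mathcal{X}=\mathcal{S}^A$ as a measurable input alphabet is compatible with the setup of Section~\ref{secadaptive}, i.e.\ that the maps $\rho\mapsto\mathcal{M}(\rho),\overline{\mathcal{M}}(\rho)$ and the conditional distributions $q(\cdot|\vec{x}_{m-1},\vec{k}_{m-1})$ on $\mathcal{S}^A$ are measurable with respect to the Borel structure inherited from the trace norm. This is immediate because $\mathcal{M},\overline{\mathcal{M}}$ are continuous. Once this is in place the theorem follows by direct invocation of Theorem~\ref{chernoff} and Corollary~\ref{hoeffding}.
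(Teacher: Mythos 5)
Your proposal is correct and follows essentially the same route as the paper: the paper's proof also reduces Theorem~\ref{fawm} to Theorem~\ref{chernoff} and Corollary~\ref{hoeffding} by taking the cq-channel input alphabet to be $\mathcal{X}=\mathcal{S}^{A}$, identifying each letter $x$ with a state $\xi$ and setting $\rho_x=\mathcal{M}(\xi)$, $\sigma_x=\overline{\mathcal{M}}(\xi)$, so that $\sup_x D_\alpha(\rho_x\|\sigma_x)=D_\alpha(\mathcal{M}\|\overline{\mathcal{M}})$. Your additional remarks on the strategy-class correspondence and on measurability are consistent with, and slightly more explicit than, the paper's argument.
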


\begin{proof}
Since only classical feed-forward is allowed, one can cast this 
discrimination problem  in the framework of the cq-channel discrimination problem
treated in Section \ref{secadaptive}. 
Namely, we apply Theorem \ref{chernoff} 
to the case when the cq-channels have input alphabet ${\cal X}=\mathcal{S}^{A}$, 
i.e. it equals the set of all states on the input systems.
In other words, we choose the classical (continuous) input alphabet as $\mathcal{X}$, 
where each letter $x\in\mathcal{X}$ is a classical description of a state $\xi$
on the input system $A$. 
In this application, $\rho_x$ and $\sigma_x$ are
given as $\mathcal{M}(\xi)$ and $\overline{\mathcal{M}}(\xi)$, respectively, 
for $x\equiv\xi$.
Hence, 
$\sup_{x} D_{\alpha}(\rho_{x}\|\sigma_{x})$ equals
$D_\alpha(\mathcal{M}\|\overline{\mathcal{M}})
=\sup_{\xi}D_{\alpha}\bigl(\mathcal{M}(\xi)\|\overline{\mathcal{M}}(\xi)\bigr)$. 
Hence, the desired relation is obtained.
\end{proof}

\medskip
\begin{remark}
The above theorem concludes that in the absence of entangled inputs, no 
adaptive strategy built upon classical feed-forward can outperform the best 
non-adaptive strategy, which is in fact a tensor power input. 
In other words, the optimal error rate can be achieved by a simple i.i.d. 
input sequence where all $n$ input states are chosen to be the same: $\rho^{\otimes n}$. 
\hfill $\square$
\end{remark}

\subsection{No advantage of adaptive strategies beyond $\mathbb{A}_n^{c,0}$?}
\label{subsec:weaker-restriction}
One has to wonder whether it is really necessary to impose classical feed-forward
\emph{and} to rule out quantum memory at the channel input to arrive at the 
conclusion of Theorem \ref{fawm}, that non-adaptive strategies with 
tensor product inputs, $\mathbb{P}_n^0$, are already optimal. What can we say
when only one of the restrictions holds? We start with defining the 
class $\mathbb{A}_{n}^{c}$ of adaptive strategies using classical feed-forward.

\medskip
\begin{definition}
\label{defi:A-n-c}
The class $\mathbb{A}_{n}^{c}$ of adaptive strategies  
using classical feed-forward is defined as a subset of $\mathbb{A}_{n}$ given 
in Definition \ref{Def1}, where now the maps $\mathcal{F}_m$ are subject 
to an additional structure. 
To describe it, one has to distinguish two operationally different quantum 
memories, the systems $R_m$ of the sender, and systems $C_m$ of the receiver.
The initial state is $\rho_1^{R_1A_1}$, with trivial system $C_1=1$.
Then, $\mathcal{F}_m$ maps $R_mB_mC_m$ to $R_{m+1}A_{m+1}C_{m+1}$, in the following
way: 
\begin{equation}
  \mathcal{F}_m = \sum_{\vec{k}_m} \mathcal{F}_{\vec{k}_m}\otimes\mathcal{P}_{\vec{k}_m}, 
\end{equation}
where $\{\mathcal{F}_{\vec{k}_m}\}_{\vec{k}_m}$ is an instrument of cp maps 
mapping $B_mC_m$ to $C_{m+1}$ (this is the measurement of the channel outputs up 
to the $m$-th channel use generating the classical feed-forward, 
together with the evolution of the receiver's memory), 
and where all the $\mathcal{P}_{\vec{k}_m}$ are quantum channels mapping $R_m$ to $R_{m+1}A_{m+1}$,  
which serve to prepare the next channel input.

The class $\mathbb{A}_n^{c,0}$ is now easily identified as the subclass of 
strategies in $\mathbb{A}_n^{c}$ where $R_m=1$ is trivial throughtout the protocol.
\hfill $\square$
\end{definition}

\medskip
\begin{remark}
\label{rem:A-n-0}
Regarding adaptive strategies with quantum feed-forward, but no quantum memory 
at the input, which class might be denoted $\mathbb{A}_n^0$: 
Note that the adaptive strategy considered in Section \ref{lowerbound},
Example \ref{ex:Harrow-et-al}, 
that is applied to a pair of entanglement-breaking channels and shown to be 
better than any non-adaptive strategies, while actually using quantum feed-forward, 
required however no entangled inputs nor indeed quantum memory at the channel input. 
This shows that quantum feed-forward alone can be responsible for an
advantage over non-adaptive strategies. 
\hfill $\square$
\end{remark}

\medskip
\begin{remark}
Regarding adaptive strategies with classical feed-forward, however 
allowing quantum memory at the input, i.e.~$\mathbb{A}_n^c$: 
It turns out that the channels considered in Section \ref{lowerbound},
Example \ref{ex:POVMs}, show that this class offers an advantage over 
non-adaptive strategies. This is because they are qc-channels, i.e. their 
output is already classical, and so any general quantum feed-forward protocol 
can be reduced to an equivalent one with classical feed-forward. It can be 
seen, however, that the perfect adaptive discrimination protocol described 
in \cite{KPP:POVM} relies indeed on input entanglement. 
\hfill $\square$
\end{remark}

\section{Discrimination power of a quantum channel}
\label{power}
In this section we study how well a pair of quantum states can be distinguished after 
passing through a quantum channel. This quantifies the power of a quantum channel when
it is seen as a measurement device. In some sense this scenario is dual to the state
discrimination problem in which a pair of states are given and the optimization is taken
over all measurements, whilst in the current scenario a quantum channel is given and the
optimization takes place over all pairs of states passing through the channel.
The reference \cite{Hirche} studies the special case of qc-channels, that is
investigation of the power of a quantum detector given by a specific POVM in discriminating 
two quantum states.
It was shown in the paper that when the qc-channel is available asymptotically many times,
neither entangled state inputs nor classical feedback and adaptive choice of inputs 
can improve the performance of the channel.
We extend the model of the latter paper to general quantum channels, 
considering whether adaptive strategies provide an advantage for the 
discrimination power;
see Fig. \ref{FD}, where we consider classical feedback without quantum 
memory at the sender's side.

\begin{figure}[ht]
\begin{center}
\includegraphics[width=0.5\textwidth]{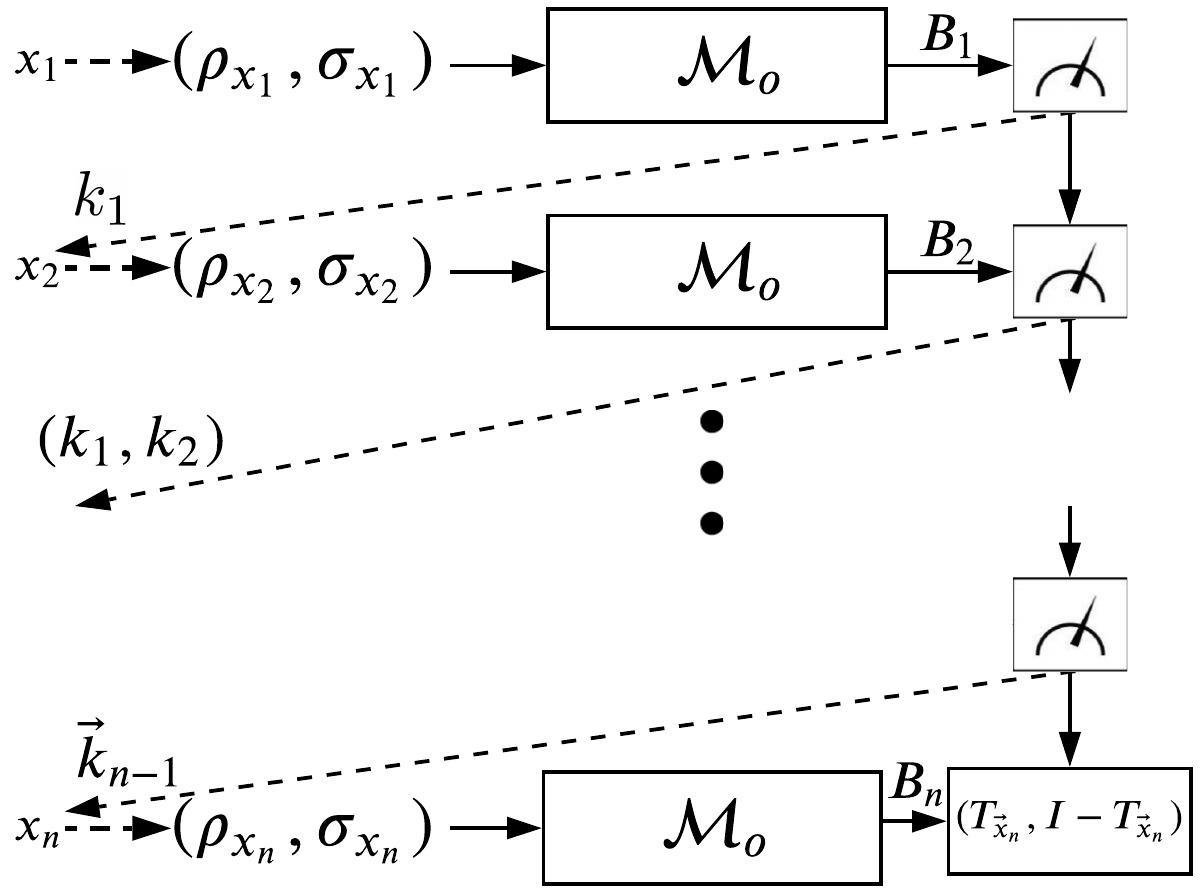}      
\caption{Discrimination with a quantum channel $\mathcal{M}_o$. At step $m$, Alice prepares a state, 
         either $\rho_{{x}_{m}}$ or $\sigma_{{x}_{m}}$, which she has prepared using Bob's 
         $m-1$ feedbacks (dashed arrows), and sends it via the channel $\mathcal{M}_{o}$ to Bob. 
         Bob's measurements resembles the PVM's of Section \ref{secadaptive}; they are used 
         to extract classical information fed back to Alice and to prepare post-measurement
         states that he keeps for the next round of communication.}
\label{FD}
\end{center}
\end{figure}

\subsection{Simple extension of \cite{Hirche} with classical feedback: $\bar{\mathbb{A}}^{c,0}$}
\label{MO1}
It is useful to cast this hypothesis testing setting as a communication problem as follows:
Assume a quantum channel 
$\mathcal{M}_o=\mathcal{M}_o^{A_o\rightarrow B}$ connects Alice and Bob, where 
Alice posses two systems $A_{o,0},A_{o,1}$ and Bob has $B$. 
They did not know which system of $A_{o,0},A_{o,1}$ works as the input system of 
the quantum channel $\mathcal{M}_o=\mathcal{M}_o^{A_o\rightarrow B}$.
Hence, the hypothesis
$H_0$($H_1$) refers to the case when the system $A_{o,0}$ ($A_{o,1}$)
is the input system of the quantum channel $\mathcal{M}_o$ and the remaining system 
is simply traced out, i.e., is discarded.
To identify which hypothesis is true, 
Alice and Bob make a collaboration.
Alice chooses two input states 
$\rho$ and $\sigma$ on $A_{o,0}$ and $A_{o,1}$ for this aim.
Bob receives the output state.
They repeat this procedure $n$ times.
Bob obtains the $n$-fold tensor product system of $B$ whose state
is $\mathcal{M}_o(\rho)^{\otimes n}$ or $\mathcal{M}_o(\sigma)^{\otimes n}$.
Applying two-outcome POVM $\{T_{n}, I-T_{n}\}$ on the $n$-fold tensor product system,
Bob infers the variable $Z$ by estimating which state is the true state.
In this scenario, to optimize the discrimination power, 
Alice chooses the best two input states 
$\rho$ and $\sigma$ on $A_{o,0}$ and $A_{o,1}$ for this aim.
We denote this class of Bob's strategies by $\bar{\mathbb{P}}^{0}$.

Now, in a similar to the reference \cite{Hirche}, we consider an adaptive strategy.
To identify which hypothesis is true, 
Alice and Bob make a collaborating strategy, which 
allows Alice to use the channel $n$ times and also allows Bob, 
who has access to quantum memory, 
to perform any measurement of his desire on its received systems 
and send back classical information to Alice;
then Alice chooses a suitable pair of states $\rho$ and $\sigma$ on two systems 
$A_{o,0}$  and $A_{o,1}$ adaptively based on the feedback that she receives after each transmission. 
We denote this class of adaptive strategies by $\bar{\mathbb{A}}^{c,0}$.

The adaptive strategy in the class $\bar{\mathbb{A}}_{n}^{c,0}$
follows the cq-channel discrimination strategy: 
denoting the input generically as $x_1, \ldots, x_n$,
the sequence of Bob's measurements is given as $\{\Pi^{(m)}_{\vec{k}_{m}|\vec{x}_{m}}\}_{m=1}^{n-1}$
and the classical feedback depends on the previous information $\vec{x}_{m},\vec{k}_{m-1}$, and
Alice's adaptive choice of the input states 
$(\rho_{x_1},\sigma_{x_1}), \ldots,(\rho_{x_n},\sigma_{x_n})$
labeled by
$(x_1, \ldots, x_n)$
 is given as
the sequence of conditional randomized choice 
$\{p_{{X}_m|\vec{X}_{m-1},\vec{K}_{m-1}}\}_{m=1}^n$ 
of the pair of the input states $(\rho,\sigma)$.
In this formulation, after obtaining the measurement outcome ${K}_{m}$, 
using a two-outcome POVM $\{T_{n}, I-T_{n}\}$ on the $n$-fold tensor product system,
Bob decides which hypothesis of $H_0$ and $H_1$ is true
according to the conditional distributions $\{p_{{X}_m|\vec{X}_{m-1},\vec{K}_{m-1}}\}_{m=1}^n$.

In this class, we denote 
the generalized Chernoff and Hoeffding quantities as 
${C}^{\bar{\mathbb{A}}^{c,0}}(a,b|\mathcal{M}_o)$ and 
${B}_{e}^{\bar{\mathbb{A}}^{c,0}}(r|\mathcal{M}_o)$, respectively. 
{When no feedback is allowed
and the input state deterministically is fixed to 
a single form $(\rho,\sigma)$, 
we denote the generalized Chernoff and Hoeffding quantities as 
${C}^{\bar{\mathbb{P}}^{0}}(a,b|\mathcal{M}_o)$ and 
${B}_{e}^{\bar{\mathbb{P}}^{0}}(r|\mathcal{M}_o)$, respectively.}

We set 
\begin{align}
  \label{cq-power}
{D}(\mathcal{M}_o):= \max_{{\rho,\sigma}}D(\mathcal{M}_o(\rho)\|\mathcal{M}_o(\sigma))
                   = \max_{{\rho,\sigma}}D(\mathcal{M}_o(\sigma)\|\mathcal{M}_o(\rho)).
\end{align}

\begin{theorem}\label{MNT}
Let $0\leq r\leq {D}(\mathcal{M}_o)$ and real numbers $a$ and $b$ satisfy 
$-{D}(\mathcal{M}_o)\le a-b \le {D}(\mathcal{M}_o)$, then
we have
\begin{align*}
  {C}^{\bar{\mathbb{A}}^{c,0}}(a,b|\mathcal{M}_o)
    = {C}^{\bar{\mathbb{P}}^{0}}(a,b|\mathcal{M}_o)
   &= \sup_{\rho,\sigma} \sup_{0\le \alpha \le 1} 
        (1-\alpha)D_{\alpha}(\mathcal{M}(\rho)\|\mathcal{M}_o(\sigma)) -\alpha a -(1-\alpha)b, \\
  {B}_{e}^{\bar{\mathbb{A}}^{c,0}}(r|\mathcal{M}_o)
    = {B}_{e}^{\bar{\mathbb{P}}^{0}}(r|\mathcal{M}_o)
   &= \sup_{\rho,\sigma} \sup_{0\le \alpha \le 1}
        \frac{\alpha-1}{\alpha}\big(r-D_{\alpha}(\mathcal{M}_o(\rho)\|\mathcal{M}_o(\sigma))\big).
\end{align*}
\end{theorem}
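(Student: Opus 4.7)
The plan is to reduce this to the cq-channel discrimination results of Section \ref{secadaptive} (Theorem \ref{chernoff} and Corollary \ref{hoeffding}), using essentially the same move as in the proof of Theorem \ref{fawm}, but with a different choice of classical alphabet reflecting the fact that here the channel $\mathcal{M}_o$ is fixed and the hypothesis is about the pair of input states. First I would take the classical input alphabet to be $\mathcal{X} := \mathcal{S}^{A_o}\times\mathcal{S}^{A_o}$, equipped with its Borel $\sigma$-algebra, and associate to each classical letter $x \equiv (\rho_x,\sigma_x)\in\mathcal{X}$ the pair of cq-channel outputs $\mathcal{N}(x) := \mathcal{M}_o(\rho_x)$ and $\overline{\mathcal{N}}(x) := \mathcal{M}_o(\sigma_x)$. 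The measurability of these maps is clear since $\mathcal{M}_o$ is continuous. This defines two cq-channels $\mathcal{N},\overline{\mathcal{N}}$ with common input alphabet $\mathcal{X}$ and output on $B$.

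Next I would check that the strategy classes match. In the discrimination-power problem, a strategy in $\bar{\mathbb{A}}^{c,0}$ consists of an adaptive choice at each round $m$ of a pair $(\rho_{x_m},\sigma_{x_m})$ conditioned on the classical feed-forward history, a sequence of PVMs $\{\Pi^{(m)}_{\vec{k}_m|\vec{x}_m}\}$ on Bob's side producing the feedback symbols, and a final two-outcome POVM. This is exactly the cq-channel adaptive protocol of Subsection \ref{S4-A-1} applied to $\mathcal{N},\overline{\mathcal{N}}$, once we identify each $x_m\in\mathcal{X}$ with the pair $(\rho_{x_m},\sigma_{x_m})$. Likewise, $\bar{\mathbb{P}}^0$ corresponds exactly to the non-feedback, single-letter $\bar{\mathbb{P}}^0$ class for the cq-channels. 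Therefore the error probabilities $\alpha_n,\beta_n$ and hence the quantities ${C}^{\bar{\mathbb{A}}^{c,0}}(a,b|\mathcal{M}_o)$, ${B}_e^{\bar{\mathbb{A}}^{c,0}}(r|\mathcal{M}_o)$ coincide with their cq-channel counterparts $C^{\bar{\mathbb{A}}^{c,0}}(a,b|\mathcal{N}\|\overline{\mathcal{N}})$, $B_e^{\bar{\mathbb{A}}^{c,0}}(r|\mathcal{N}\|\overline{\mathcal{N}})$, and analogously for $\bar{\mathbb{P}}^0$.

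Then I would apply Theorem \ref{chernoff} and Corollary \ref{hoeffding} directly. The resulting single-letter expressions involve $D_\alpha(\mathcal{N}\|\overline{\mathcal{N}}) = \sup_{x\in\mathcal{X}} D_\alpha(\mathcal{N}(x)\|\overline{\mathcal{N}}(x)) = \sup_{\rho,\sigma} D_\alpha(\mathcal{M}_o(\rho)\|\mathcal{M}_o(\sigma))$, and likewise for $\alpha=1$, yielding precisely the right-hand sides in the statement of the theorem.

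The one point that requires a small check, and is really the only subtle step, is that the hypothesis $-D(\mathcal{M}_o)\le a-b\le D(\mathcal{M}_o)$ here matches the asymmetric condition $-D(\mathcal{N}\|\overline{\mathcal{N}})\le a-b\le D(\overline{\mathcal{N}}\|\mathcal{N})$ required in Theorem \ref{chernoff}. This holds by the built-in symmetry of the alphabet $\mathcal{X}=\mathcal{S}^{A_o}\times\mathcal{S}^{A_o}$ under swapping the two coordinates: swapping the roles of $\rho$ and $\sigma$ inside each $x$ exchanges $\mathcal{N}$ and $\overline{\mathcal{N}}$, so
\begin{align*}
D(\mathcal{N}\|\overline{\mathcal{N}})
   = \sup_{\rho,\sigma} D\bigl(\mathcal{M}_o(\rho)\|\mathcal{M}_o(\sigma)\bigr)
   = \sup_{\rho,\sigma} D\bigl(\mathcal{M}_o(\sigma)\|\mathcal{M}_o(\rho)\bigr)
   = D(\overline{\mathcal{N}}\|\mathcal{N})
   = D(\mathcal{M}_o),
\end{align*}
where the middle equality (and the definition \eqref{cq-power}) uses exactly this relabeling. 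With this observation, the reduction is complete and the theorem follows from the cited cq-channel results.
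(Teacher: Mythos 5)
Your proposal is correct and follows essentially the same route as the paper: the paper's proof likewise takes the classical alphabet to be the set of pairs of input states, $\mathcal{X}=\mathcal{S}^{A_{o,0}}\times\mathcal{S}^{A_{o,1}}$ with $x\equiv(\rho,\sigma)$, and invokes the cq-channel results of Section \ref{secadaptive} (compare the proof of Theorem \ref{fawm}). Your explicit check that the symmetry of the alphabet under swapping the pair gives $D(\mathcal{N}\|\overline{\mathcal{N}})=D(\overline{\mathcal{N}}\|\mathcal{N})=D(\mathcal{M}_o)$, so that the hypothesis on $a-b$ matches the one in Theorem \ref{chernoff}, is a detail the paper leaves implicit in Eq.~\eqref{cq-power} but does not change the argument.
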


\begin{proof}
Here we only need to consider the set 
$\mathcal{S}^{A_o}\times\mathcal{S}^{A_o'}$ of 
pairs of input states as the set $\cal X$. 
In other words, we choose the classical (continuous) input alphabet as $\mathcal{X}=$
$\mathcal{S}^{A_o}\times\mathcal{S}^{A_o'}$, 
where each letter $x \equiv (\rho,\sigma)\in\mathcal{X}$ is a classical description of the 
pair of states $(\rho,\sigma)$. Then the result follows from the adaptive protocol 
in Section \ref{secadaptive}. (Compare also the proof of Theorem \ref{fawm}.)
\end{proof}

\begin{remark}\label{rem}
As another scenario,
the reference \cite{Hirche} also considers the case when the input states 
are given as $\rho_n,\sigma_n$ where
$\rho_n$ and $\sigma_n$ are states on the $n$-fold systems 
$A_{o,0}^{\otimes n} $ and $A_{o,1}^{\otimes n} $, respectively.
It shows that 
this strategy can be reduced to the adaptive strategy presented in this subsection
when the channel $\mathcal{M}_o$ is a qc-channel, i.e.,
it has the form;
\begin{align}
\mathcal{M}_o(\rho)= \sum_{x \in {\cal X}} (\Tr \rho M_x)|x\rangle \langle x| ,
\end{align}
where $\{|x\rangle\}_{x \in {\cal X}}$ forms an orthogonal system.
However, it is not so easy to show the above reduction when 
the channel $\mathcal{M}_o$ is a general qq-channel.
When the channel $\mathcal{M}_o$ is a cq-channel in the sense of 
\eqref{MFC},
the next subsection shows that 
the above type of general strategy cannot improve the performance.
\end{remark}

\subsection{Quantum feedback: $\mathbb{A}_{n}$}
\label{subsec:power:A-n}
Next, we discuss the most general class, 
in which Bob makes quantum feed back to Alice.
To discuss this case, 
we generalize the above formulated problem by allowing more general inputs
because the formulation in the above section allows 
only a pair of states $(\rho,\sigma) \in 
{\mathcal{S}^{A_{o,0}}\times \mathcal{S}^{A_{o,1}}}$.
In the generalized setting, 
the hypothesis $H_i$ is that the true channel is 
$\rho \mapsto \mathcal{M}_o (\Tr_{i\oplus 1}\rho)$.
Then, Alice and Bob collaborate in order to identify which hypothesis is true.
That is, it is formulated as the discrimination between 
two qq-channels $\mathcal{M}$ and 
$\overline{\mathcal{M}}$, mapping $A=A_{o,0} A_{o,1}$ to $B$ defined as
\begin{align}
  \mathcal{M}(\rho)      
  := \mathcal{M}_o (\Tr_{1}\rho)
= \Tr_{1}(\mathcal{M}_o\otimes \mathcal{M}_o)(\rho), \\
    \OL{\mathcal{M}}(\rho) 
    := \mathcal{M}_o (\Tr_{0}\rho)
= \Tr_{0}(\mathcal{M}_o\otimes \mathcal{M}_o)(\rho), 
\end{align}
for $\rho \in \mathcal{S}^{A_{o,0} A_{o,1}}$.
For this discrimination, 
Alice has the input quantum composite system $A_{o,0} A_{o,1}$ and her own quantum memory.
She makes the input state on this system by using 
the quantum feedback and her own quantum memory.
In each step, Bob makes measurement, and sends back a part of the resultant quantum system to Alice while the remaining part is kept in his local quantum memory.
Therefore, this general strategy contains the strategy presented in 
Remark \ref{rem}.

Then, the problem can be regarded as a special case of discriminating the two qq-channels
given in Section \ref{S-5}.
That is, Bob's operation is given as the strategy of discriminating the two qq-channels.

In the most general class, we denote 
the generalized Chernoff and Hoeffding quantities as 
$C^{\mathbb{A}}(a,b|\mathcal{M}_o)$ and 
$B_{e}^{\mathbb{A}}(r|\mathcal{M}_o)$, respectively. 
When no feedback nor no quantum memory of Alice side is allowed, 
we denote the generalized Chernoff and Hoeffding quantities as 
${C}^{\mathbb{P}^{0}}(a,b|\mathcal{M}_o)$ and 
${B}_{e}^{\mathbb{P}^{0}}(r|\mathcal{M}_o)$, respectively.
As a corollary of Theorems \ref{en-br-dp3} and \ref{MNT},
we have the following corollary.
\medskip
\begin{corollary}
\label{en-br-dp}
Assume that the qq-channel $\mathcal{M}_o$ has the form 
\begin{align}
\mathcal{M}_o(\rho)= \sum_{x}  \rho_x \Tr E_x \rho
\label{MST}
\end{align}
where $\{E_x\}_{x \in \mathcal{X}}$ is a PVM and the rank of $E_x$ is one.
For $0\leq r\leq D(\mathcal{M}_o)$ [see Eq. \eqref{cq-power}] and real $a$ and $b$ with 
$-D(\mathcal{M}_o)\le a-b \le D(\mathcal{M}_o)$, 
the following holds
\begin{align}
  C^{\mathbb{A}}(a,b|\mathcal{M}_o)
       &= C^{\mathbb{P}^{0}}(a,b|\mathcal{M}_o)= {C}^{\bar{\mathbb{P}}^{0}}(a,b|\mathcal{M}_o), \label{MB1}\\
  B_{e}^{\mathbb{A}}(r|\mathcal{M}_o)
   &   
    = B_{e}^{\mathbb{P}^{0}}(r|\mathcal{M}_o)
 = {B}_{e}^{\bar{\mathbb{P}}^{0}}(r|\mathcal{M}_o).\label{MB2}
\end{align}
\hfill $\square$ 
\end{corollary}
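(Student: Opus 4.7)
The plan is to reduce Corollary \ref{en-br-dp} to a direct application of Theorem \ref{en-br-dp3}, combined with a short convexity argument to reconcile two parametrizations of the optimal inputs. The key observation is that when $\mathcal{M}_o$ has the form \eqref{MST} with rank-one PVM $E_x=\ketbra{x}$, both induced hypothesis channels $\mathcal{M},\overline{\mathcal{M}}:A_{o,0}A_{o,1}\to B$ from Section \ref{subsec:power:A-n} are entanglement-breaking and of the form \eqref{MFC} with \emph{one and the same} rank-one PVM on the bipartite input system. Indeed, since $\Tr(E_x\Tr_1\rho)=\Tr((E_x\otimes I)\rho)=\sum_y\Tr(\ketbra{x,y}\rho)$, one computes
\begin{align*}
  \mathcal{M}(\rho) = \sum_{x,y}\rho_x\Tr(\ketbra{x,y}\rho),\qquad
  \overline{\mathcal{M}}(\rho) = \sum_{x,y}\rho_y\Tr(\ketbra{x,y}\rho),
\end{align*}
exhibiting both channels in the form \eqref{MFC} with the common rank-one PVM $\{\ketbra{x,y}\}$, the output of $\mathcal{M}$ depending only on the first index and that of $\overline{\mathcal{M}}$ only on the second.

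Theorem \ref{en-br-dp3} then applies directly, yielding $C^{\mathbb{A}}(a,b|\mathcal{M}_o)=C^{\mathbb{P}^0}(a,b|\mathcal{M}_o)$ and $B_e^{\mathbb{A}}(r|\mathcal{M}_o)=B_e^{\mathbb{P}^0}(r|\mathcal{M}_o)$, with common value the Chernoff (resp.\ Hoeffding) exponent of the associated pair of cq-channels $(x,y)\mapsto\rho_x$ and $(x,y)\mapsto\rho_y$. By Theorem \ref{chernoff} and Corollary \ref{hoeffding}, this value equals $\sup_{0\le\alpha\le1}(1-\alpha)\sup_{x,y}D_\alpha(\rho_x\|\rho_y)-\alpha a-(1-\alpha)b$ in the Chernoff setting and $\sup_{0\le\alpha\le1}\frac{\alpha-1}{\alpha}\bigl(r-\sup_{x,y}D_\alpha(\rho_x\|\rho_y)\bigr)$ in the Hoeffding setting. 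To complete the proof it suffices to match these with $C^{\bar{\mathbb{P}}^0}(a,b|\mathcal{M}_o)$ and $B_e^{\bar{\mathbb{P}}^0}(r|\mathcal{M}_o)$ from Theorem \ref{MNT}, for which it is enough to establish the single identity
\[
  \sup_{\rho,\sigma}D_\alpha\bigl(\mathcal{M}_o(\rho)\|\mathcal{M}_o(\sigma)\bigr)=\sup_{x,y}D_\alpha(\rho_x\|\rho_y),\qquad \alpha\in[0,1].
\]

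The inequality $\ge$ is immediate upon choosing $\rho=\ketbra{x}$, $\sigma=\ketbra{y}$. For the converse, because $\{E_x\}$ is a rank-one PVM one has $\mathcal{M}_o(\rho)=\sum_x p_x\rho_x$ and $\mathcal{M}_o(\sigma)=\sum_y q_y\rho_y$ with $p_x=\bra{x}\rho\ket{x}$, $q_y=\bra{y}\sigma\ket{y}$; expressing both mixtures with the common weights $p_xq_y$ on index pairs $(x,y)$ and invoking joint convexity of $D_\alpha$ for $\alpha\in[0,1]$ (a consequence of Lieb's concavity theorem) yields $D_\alpha\bigl(\mathcal{M}_o(\rho)\|\mathcal{M}_o(\sigma)\bigr)\le\sum_{x,y}p_xq_y D_\alpha(\rho_x\|\rho_y)\le\sup_{x,y}D_\alpha(\rho_x\|\rho_y)$. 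The only genuinely non-trivial step is precisely this matching of suprema, as all other steps amount to recognizing the common PVM structure of the induced hypothesis channels and quoting Theorems \ref{en-br-dp3} and \ref{MNT}; since the heavy lifting has already been done in those earlier theorems, no further obstacle is anticipated.
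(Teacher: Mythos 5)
Your proof is correct. Its first half coincides with the paper's: both reduce the first equalities in \eqref{MB1} and \eqref{MB2} to Theorem \ref{en-br-dp3}, though you add the explicit (and worthwhile) verification that the induced hypothesis channels $\rho\mapsto\mathcal{M}_o(\Tr_1\rho)$ and $\rho\mapsto\mathcal{M}_o(\Tr_0\rho)$ are both of the form \eqref{MFC} with the \emph{single common} rank-one PVM $\{\ketbra{x,y}\}$ — a hypothesis of that theorem which the paper invokes without checking. For the second equalities the two arguments genuinely diverge. The paper argues operationally: under \eqref{MST} any (possibly entangled) input on $A_{o,0}A_{o,1}$ is equivalent to its pinching in the $\{\ket{x,y}\}$ basis, hence to a probabilistic mixture of product inputs $(\rho,\sigma)$, which is a strategy contained in $\bar{\mathbb{A}}^{c,0}$; Theorem \ref{MNT} then collapses $\bar{\mathbb{A}}^{c,0}$ to $\bar{\mathbb{P}}^{0}$. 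You instead compare the closed-form exponents delivered by Theorems \ref{en-br-dp3} and \ref{MNT} and prove the single-letter identity $\sup_{\rho,\sigma}D_\alpha(\mathcal{M}_o(\rho)\|\mathcal{M}_o(\sigma))=\sup_{x,y}D_\alpha(\rho_x\|\rho_y)$, writing both output states as mixtures with the common weights $p_xq_y$ and invoking joint convexity of $D_\alpha$ on $\alpha\in[0,1]$ (quasi-convexity suffices, and both follow from Lieb's joint concavity of $(\rho,\sigma)\mapsto\Tr\rho^\alpha\sigma^{1-\alpha}$ since $\tfrac{1}{\alpha-1}<0$). The paper's route is more operational and would survive even if no explicit single-letter formula were available; yours is shorter, purely formula-level, and makes transparent exactly where the rank-one PVM hypothesis enters (both in the common-PVM form of the induced channels and in reducing $\mathcal{M}_o$'s outputs to mixtures of the $\rho_x$). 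Both are valid; the minor commutation of suprema over $(\rho,\sigma)$ and $\alpha$ that your matching step requires is harmless since the prefactors $(1-\alpha)$ and $\tfrac{\alpha-1}{\alpha}$ have fixed sign.
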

This corollary shows that 
the above extension of our strategy does not improve
the generalized Chernoff and Hoeffding quantities
under the condition \eqref{MST}.

\begin{proof}
Theorem \ref{en-br-dp3} implies 
the first equations in \eqref{MB1} and \eqref{MB2}
under the condition \eqref{MST}.

When the condition \eqref{MST} holds,
any input state on $A=A_{o,0} A_{o,1}$
can be simulated on a separable input state on $A=A_{o,0} A_{o,1}$.
Such a separable input state can be considered as 
a probabilistic input with the form 
$(\rho,\sigma) \in {\mathcal{S}^{A_{o,0}}\times \mathcal{S}^{A_{o,1}}}$.
The strategy $\bar{\mathbb{A}}^{c,0}$ in the problem setting of Section \ref{MO1} contains such probabilistic input.
Hence, Theorem \ref{MNT} implies the second equations in 
the first equations in \eqref{MB1} and \eqref{MB2}.
\end{proof}

\medskip
\begin{remark}
The above result states that the optimal error rates for discrimination with 
a quantum channel can be achieved by i.i.d. state pairs 
$(\rho^{\otimes n},\sigma^{\otimes n})$, among all strategies
without quantum memory at the sender's side.
On the other hand, when entangled state inputs are allowed, 
we could only show the optimality of
non-adaptive tensor-product strategy $\mathbb{P}_{n}^{0}$
for entanglement-breaking channel of the form \eqref{MST}.
The same conclusion holds for the Chernoff bound and Stein's lemma.
\hfill $\square$
\end{remark}

\subsection{Examples}
\label{subsec:power-examples}
In this subsection we derive the generalized Chernoff and Hoeffding bounds for 
three qubit channels, namely, we study the discrimination power of depolarizing, Pauli and 
amplitude damping channels. In each case, the key is identifying the structure of the
output states of each channel by employing the lessons learned in \cite{904522}. 
Here we briefly summarize the basics. A quantum state $\rho$ in two-level systems 
can be parametrized as $\rho=\frac{1}{2}(I+\vec{r}\cdot\vec{\sigma})$, where 
$\vec{r}=(r_{x},r_{y},r_{z})\in\mathbb{R}^{3}$ is the Bloch vector which satisfies 
$r_{x}^{2}+r_{y}^{2}+r_{z}^{2}\leq 1$ and $\vec{\sigma}$ denotes the vector of 
Pauli matrices $\{\sigma_{x},\sigma_{y},\sigma_{z}\}$ such that 
$\vec{r}\cdot\vec{\sigma} \coloneqq r_{x}\sigma_{x}+r_{y}\sigma_{y}+r_{z}\sigma_{z}$. 
Any cptp map $\mathcal{M}_o$ on qubits can be represented as follows:
\begin{align*}
  \mathcal{M}_o\left( \frac{1}{2}(I+\vec{r}\cdot\vec{\sigma}) \right) 
            = \frac{1}{2}\left(I+(\vec{t}+T\vec{r})\cdot\vec{\sigma}\right),
\end{align*} 
where $\vec{t}$ is a vector and $T$ is a real $3\times3$ matrix. For each channel, 
we first need to identify these parameters. The following lemma comes in handy in 
simplifying the optimization problem.

\medskip
\begin{lemma}[{Cf.~\cite[Thm.~3.10.11]{convexbook}}]
\label{convexextreme}
A continuous convex function $f$ on a compact convex set attains its global maximum at an 
extreme point of its domain.
\hfill $\square$
\end{lemma}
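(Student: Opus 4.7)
The plan is to invoke (essentially, to reprove) the Bauer maximum principle. Let $K$ denote the compact convex domain of $f$ and set $M := \max_{x\in K} f(x)$; this is a finite real number attained on $K$, because $K$ is compact and $f$ is continuous. Define the set of maximizers $S := \{ x\in K : f(x) = M \}$. The goal is to show that $S$ contains at least one extreme point of $K$.

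First I would verify that $S$ is a non-empty, compact, convex face of $K$. Non-emptiness follows from attainment; closedness (hence compactness) follows from continuity of $f$. For the face property, suppose $x\in S$ is written as $x = \lambda y + (1-\lambda) z$ for some $y,z\in K$ and $\lambda\in (0,1)$. Convexity of $f$ then yields $M = f(x) \le \lambda f(y) + (1-\lambda) f(z) \le M$, forcing $f(y) = f(z) = M$, i.e.\ $y,z\in S$. In particular $S$ is convex.

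Next I would apply the Krein--Milman theorem to the non-empty compact convex set $S$ to extract an extreme point $x^*$ of $S$, and then verify that $x^*$ is in fact extreme in $K$: if $x^* = \lambda y + (1-\lambda) z$ with $y,z\in K$ and $\lambda\in(0,1)$, the face property of $S$ forces $y,z\in S$, and extremality of $x^*$ inside $S$ then forces $y = z = x^*$. Since $f(x^*) = M$, this gives the claim.

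The only subtlety is ambient: Krein--Milman requires the compact convex set to sit inside a Hausdorff locally convex topological vector space. In every application of this lemma in the paper, the relevant domain is a subset of a finite-dimensional real vector space (Bloch vectors, density operators on a fixed Hilbert space, probability simplices, etc.), so the hypothesis is automatic; indeed, in that finite-dimensional setting Minkowski's theorem is already enough to locate an extreme point and Krein--Milman is not needed in full strength.
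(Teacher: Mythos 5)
The paper itself offers no proof of this lemma---it is quoted as a known result (Bauer's maximum principle / \cite[Thm.~3.10.11]{convexbook})---so the only thing to assess is the internal correctness of your argument, and there is a genuine gap in it. The step ``In particular $S$ is convex'' is false: the set of maximizers of a convex function is an \emph{extreme subset} of $K$ (which is exactly what your displayed computation $M = f(x)\le \lambda f(y)+(1-\lambda)f(z)\le M$ establishes), but it need not be convex. Take $K=[-1,1]$ and $f(x)=x^2$: then $S=\{-1,+1\}$, which satisfies your face property but is not convex. Extremeness says that proper convex combinations landing \emph{in} $S$ have their endpoints in $S$; convexity says that convex combinations of points \emph{of} $S$ stay in $S$. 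These are different conditions, and you have silently conflated them. Since Krein--Milman (and likewise Minkowski in finite dimensions) applies only to compact \emph{convex} sets, the extraction of an extreme point $x^*$ of $S$ is not justified as written, and the proof does not go through.

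The good news is that the rest of your argument is sound and the repair is standard. Your final upgrade step---extremality of $x^*$ within $S$ plus the extreme-set property of $S$ implies extremality of $x^*$ in $K$---is correct; what is missing is only the existence of an extreme point of the (possibly non-convex) compact set $S$. In the finite-dimensional setting relevant to the paper, the cheapest fix is to take $x^*\in S$ at maximal Euclidean distance from a fixed point: if $x^*=\lambda y+(1-\lambda)z$ with $y,z\in S$, the triangle inequality forces $\|y\|=\|z\|=\|x^*\|$ and strict convexity of the Euclidean ball forces $y=z$, so $x^*$ is extreme in $S$. In general one either runs the Zorn's-lemma argument on minimal nonempty closed extreme subsets of $K$ contained in $S$ (the usual proof of Bauer's principle), or applies Krein--Milman to $\overline{\operatorname{conv}}(S)$ and invokes Milman's partial converse to conclude that its extreme points lie in $S$. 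Any of these closes the gap; as it stands, the proof is incomplete.
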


\medskip
\begin{lemma}
\label{convex-pure}
For any quantum channel $\mathcal{M}_o$ we have
\begin{align*}
\sup_{\rho,\sigma}D_{\alpha}(\mathcal{M}_o(\rho)\|\mathcal{M}_o(\sigma))
=\sup_{\ket{\psi},\ket{\phi}}D_{\alpha}(\mathcal{M}_o(\ketbra{\psi})\|\mathcal{M}_o(\ketbra{\phi})),
\end{align*}
that is, pure states are sufficient for the maximization of the R\'{e}nyi 
divergence with channel $\mathcal{M}_o$.
\end{lemma}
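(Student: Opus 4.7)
The plan is to reduce the claim to joint convexity of the Petz R\'enyi divergence and a two-fold convex decomposition of mixed states. The reverse inequality, $\sup_{\rho,\sigma} \geq \sup_{\ket\psi,\ket\phi}$, is immediate since pure states lie inside $\mathcal{S}^A\times\mathcal{S}^A$, so it suffices to prove the opposite inequality.

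First I will recall that for $\alpha\in[0,1]$ (which is the only range needed for $C(a,b)$ and $B(r)$ in the present paper, since the outer $\sup$ in \eqref{deff-C} and \eqref{deff} is over $0\le\alpha\le 1$), the Petz R\'enyi divergence $(\rho,\sigma)\mapsto D_\alpha(\rho\|\sigma)$ is jointly convex. This is a standard consequence of Lieb's concavity theorem applied to $(\rho,\sigma)\mapsto\Tr\rho^{\alpha}\sigma^{1-\alpha}$, together with the fact that $\frac{1}{\alpha-1}\log(\cdot)$ is a monotonically decreasing convex function of a concave function. Since $\mathcal{M}_o$ is linear and cptp, the composed map $(\rho,\sigma)\mapsto D_\alpha\bigl(\mathcal{M}_o(\rho)\|\mathcal{M}_o(\sigma)\bigr)$ inherits joint convexity on $\mathcal{S}^A\times\mathcal{S}^A$.

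Next I will use a product decomposition rather than appealing directly to Lemma \ref{convexextreme}, since the extreme points of $\mathcal{S}^A\times\mathcal{S}^A$ are precisely pairs of pure states, and this also sidesteps any continuity subtleties at states without full support. Given arbitrary $\rho=\sum_i p_i\ketbra{\psi_i}$ and $\sigma=\sum_j q_j\ketbra{\phi_j}$, one may write
\begin{align*}
  \rho    &= \sum_{i,j} p_i q_j \ketbra{\psi_i}, \\
  \sigma  &= \sum_{i,j} p_i q_j \ketbra{\phi_j},
\end{align*}
with the same joint probability $p_iq_j$ on both sides. Joint convexity then yields
\begin{align*}
  D_\alpha\bigl(\mathcal{M}_o(\rho)\|\mathcal{M}_o(\sigma)\bigr)
    &\leq \sum_{i,j} p_i q_j D_\alpha\bigl(\mathcal{M}_o(\ketbra{\psi_i})\|\mathcal{M}_o(\ketbra{\phi_j})\bigr) \\
    &\leq \sup_{\ket\psi,\ket\phi} D_\alpha\bigl(\mathcal{M}_o(\ketbra\psi)\|\mathcal{M}_o(\ketbra\phi)\bigr),
\end{align*}
and taking the supremum over $\rho,\sigma$ on the left concludes the proof.

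The only potential obstacle is technical, namely ensuring the range of $\alpha$ for which $D_\alpha$ is jointly convex includes the range needed by the caller: for $\alpha\in[0,1]$ this is classical (Lieb/Petz), and it is exactly the range used in Subsection \ref{subsec:power-examples}. For $\alpha>1$ joint convexity fails in general, but this range is not required by the theorems that invoke this lemma, so I would state the lemma restricted to $\alpha\in[0,1]$ (or simply note that the proof uses joint convexity of $D_\alpha$, which holds in that range).
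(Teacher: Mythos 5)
Your proof is correct, but it follows a genuinely different route from the one in the paper. The paper's proof invokes Lemma \ref{convexextreme} (a continuous convex function on a compact convex set attains its maximum at an extreme point), using convexity of $D_\alpha(\mathcal{M}_o(\rho)\|\mathcal{M}_o(\sigma))$ \emph{separately} in each argument and fixing one argument at a time to push each of $\rho,\sigma$ to an extreme point of the state space, i.e.\ a pure state. You instead use \emph{joint} convexity of the Petz divergence for $\alpha\in[0,1]$ (via Lieb's concavity of $(\rho,\sigma)\mapsto\Tr\rho^\alpha\sigma^{1-\alpha}$ composed with the decreasing convex map $t\mapsto\frac{1}{\alpha-1}\log t$), together with the explicit product decomposition $(\rho,\sigma)=\sum_{i,j}p_iq_j\bigl(\ketbra{\psi_i},\ketbra{\phi_j}\bigr)$. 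Your version buys two things: it dispenses with the continuity and compactness hypotheses of Lemma \ref{convexextreme}, which is a genuine gain because $D_\alpha$ can be $+\infty$ (e.g.\ on orthogonally supported outputs) and is therefore not obviously continuous on all of $\mathcal{S}^A\times\mathcal{S}^A$; and it makes the convex-decomposition step completely explicit rather than appealing to an abstract maximum principle. The cost is that you need the stronger property of joint convexity, which restricts you to $\alpha\in[0,1]$ — but, as you correctly observe, the paper's own argument also relies on convexity of $D_\alpha$, which likewise only holds in that range, and $\alpha\in[0,1]$ is exactly the range required by the quantities $C(a,b)$ and $B(r)$ in which this lemma is used. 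Your suggestion to state the restriction to $\alpha\in[0,1]$ explicitly is a worthwhile clarification that the paper omits.
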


\begin{proof}
This is a consequence of Lemma \ref{convexextreme}. Note that the space of quantum states is a convex set; on the other hand, the R\'{e}nyi divergence is a convex function, and we actually need convexity separately in each argument. Therefore the optimal states are extreme points of the set, i.e. pure states.
\end{proof}

\medskip
\begin{remark}
Since we will focus on $2$-level systems, we should recall that a special property of the convex set of qubits which is not shared by $n$-level systems with $n\geq3$ is that every boundary point of the set is an extreme point. Since the states on the surface of the Bloch sphere are mapped onto the states on the surface of the ellipsoid, the global maximum will be achieved by a pair of states on the surface of the output ellipsoid.
\hfill $\square$
\end{remark}

\medskip
\begin{remark}
\label{choosing}
In the following we will use symmetric properties 
of the states in the Bloch sphere to calculate the R\'{e}nyi divergence. 
Note that the R\'{e}nyi divergence of two qubit states is not just a function of their Bloch sphere distance. 
For instance, for two states $\rho_{1}$ and $\sigma_{1}$ with Bloch vectors 
$\vec{r}_{1}=(0,0,1/4)$ and $\vec{s}_{1}=(0,0,-1/4)$, respectively, 
we can see that $\|\rho_{1}-\sigma_{1}\|_{1}=\|\vec{r}_{1}-\vec{s}_{1}\|_{2}=1/2$ 
and the divergence equals $0.17$ ($\alpha\rightarrow 1$). 
On the other hand, for states $\rho_{2}$ and $\sigma_{2}$ with Bloch vectors 
$\vec{r}_{2}=(0,0,1)$ and $\vec{s}_{2}=(1,0,0)$, respectively, 
we can see that $\|\rho_{2}-\sigma_{2}\|_{1}=\|\vec{r}_{2}-\vec{s}_{2}\|_{2}=\sqrt{2}$ 
and the divergence equals $0$ ($\alpha\rightarrow 1$). 
However, we will see that for states with certain symmetric properties, 
the R\'{e}nyi divergence increases with the distance between two arguments.
\hfill $\square$
\end{remark}

\medskip
\begin{example}[Depolarizing channel] 
\label{exdepol}
For $0\leq q\leq1$, the depolarizing channel is defined as follows:
\begin{align*}
  \mathcal{D}_q : \rho \mapsto (1-q)\rho+q \frac{I}{2},
\end{align*}
that is, the depolarizing channel transmits the state with probability 
$(1-q)$ or replaces it with the maximally mixed state with probability $q$. 
In both generalized Chernoff and Hoeffding exponents, we should be
dealing with two optimizations, one over $(\rho,\sigma)$ 
and the other over $0\leq\alpha\leq1$. We can take the 
supremum over the state pair inside each expression and deal with $\alpha$ next. 
Hence, we start with the supremum of the R\'{e}nyi divergence employing Lemma \ref{convex-pure}.
 
For the depolarizing channel, it can be easily seen that 
\begin{align*}
\vec{t}=\begin{pmatrix}
0\\
0\\
0
\end{pmatrix}
\quad\text{and}\quad
T=
\begin{pmatrix}
1-q & 0 & 0  \\
0 & 1-q & 0  \\
0 & 0 & 1-q  \\
\end{pmatrix}.
\end{align*}
Therefore, the set of output states consists of a sphere of radius $1-q$ centered at 
the origin, i.e. $r_{x}^{2}+r_{y}^{2}+r_{z}^{2}=(1-q)^{2}$. Note that we only consider 
the states on the surface of the output sphere. Because of the symmetry of the problem 
and the fact that divergence is larger on orthogonal states, we can choose any two states 
at the opposite sides of a diameter. Here for simplicity we choose the states corresponding 
to $\vec{r}_{1}=(0,0,1-q)$ and $\vec{r}_{2}=(0,0,-1+q)$ leading to the following states, respectively:
\begin{align}
&\rho'= \left(1-\frac{q}{2}\right)\ketbra{0}{0}+\frac{q}{2} \ketbra{1}{1},\\
&\sigma'= \frac q2\ketbra{0}{0} + \left(1-\frac{q}{2}\right) \ketbra{1}{1}.
\end{align}
Then it can be easily seen that 
\begin{align}
\sup_{\rho,\sigma}D_{\alpha}(\mathcal{M}_o(\rho)\|\mathcal{M}_o(\sigma))
=\frac{1}{\alpha-1}\log Q(q,\alpha),
\end{align}
where $Q(q,\alpha)=(1-\frac q2)^{\alpha}(\frac q2)^{1-\alpha}+(1-\frac q2)^{1-\alpha}(\frac q2)^{\alpha}$. 
By plugging back into the respective equations, we have for $0\leq r\leq-(1-q)\log\frac{q}{2-q}$ and $(1-q)\log\frac{q}{2-q}\leq a-b\leq -(1-q)\log\frac{q}{2-q}$,
\begin{align*}
  C^{\mathbb{A}^{c,0}}(a,b|\mathcal{M}_o)
    &= \sup_{0\le \alpha \le 1} -\log Q(q,\alpha) -\alpha a -(1-\alpha)b,\\
  B_{e}^{\mathbb{A}^{c,0}}(r|\mathcal{M}_o)
    &=\sup_{0\le \alpha \le 1}\frac{\alpha-1}{\alpha}\left(r-\frac{1}{\alpha-1}\log Q(q,\alpha)\right).
\end{align*}
The function $Q(q,\alpha)$ introduced above is important and will also appear in later examples; 
we have
\begin{align*}
 \frac{\partial Q(q,\alpha)}{\partial \alpha}
   &=\left(\ln\frac{q}{2-q}\right)
     \left(\left(\frac{q}{2}\right)^{\alpha}\left(1-\frac{q}{2}\right)^{1-\alpha}
           -\left(\frac{q}{2}\right)^{1-\alpha}\left(1-\frac{q}{2}\right)^{\alpha}\right),\\
 \frac{\partial^{2} Q(q,\alpha)}{\partial \alpha^{2}}
   &= \left(\ln\frac{q}{2-q}\right)^{2}Q(q,\alpha),
\end{align*} 
where $ \frac{\partial}{\partial \alpha}$ and $\frac{\partial^{2}}{\partial \alpha^{2}}$ denote the first and second-order partial derivatives with respect to the variable $\alpha$. It can also be easily checked that $\log\frac{q}{2-q}\leq0$, $0\leq Q(q,\alpha)\leq1$. 

\begin{figure}[ht]
\begin{center}
\includegraphics[width=.8\textwidth]{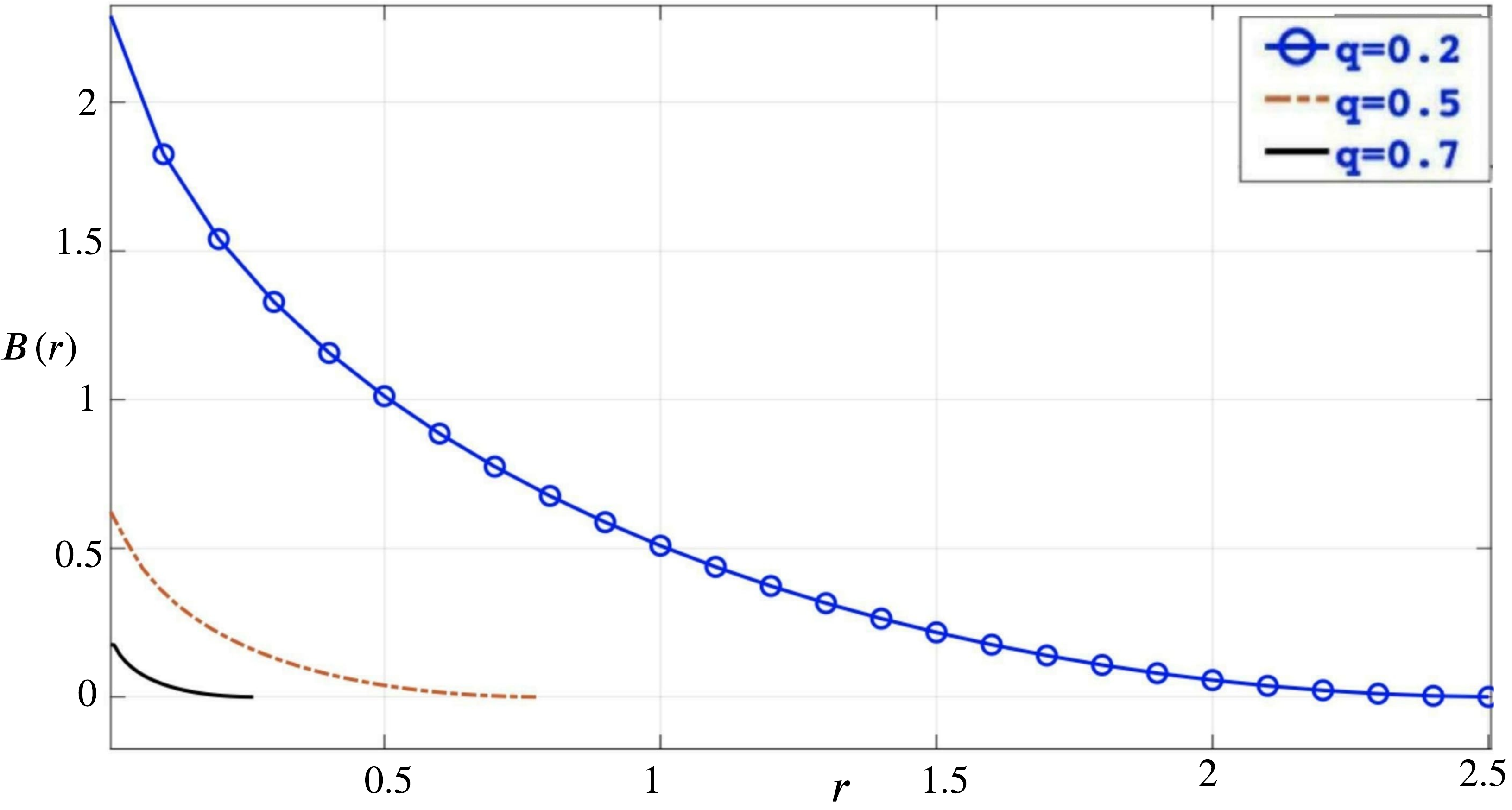}      
\caption{Hoeffding bound for depolarizing channel when entangled inputs are not allowed. (The vertical 
axis shows $B_e^{A_n^{c,0}}(r|\mathcal{M}_o)$, represented as $B(r)$.) The legitimate values of $r$ for each exponent are imposed by Strong Stein's lemma and differ with $q$ as $r=(q-1)\log\frac{q}{2-q}$.}
\label{hff1}
\end{center}
\end{figure}

Let $\overline{C}(\alpha)$ denote the expression inside the supremum in $C^{\mathbb{A}^{c,0}}(a,b|\mathcal{M}_o)$. 
For the generalized Chernoff bound, from the observations above and some algebra, it can be seen that 
\begin{align}
\frac{\partial \overline{C}(\alpha)}{\partial \alpha}=0\implies \alpha=\frac{1}{2}-\frac{\log\frac{\log\frac{q}{2-q}+(a-b)}{\log\frac{q}{2-q}-(a-b)}}{2\log\frac{q}{2-q}}.
\end{align}
On the other hand, it can be checked that $\frac{\partial^{2} \overline{C}(\alpha)}{\partial \alpha^{2}}\geq 0$, making sure that the generalized Chernoff bound is a convex function and also that the above zero is unique. Note that the generalized Chernoff bound is not a monotonic function since $\frac{\partial \overline{C}(\alpha)}{\partial \alpha}$ obviously changes sign, hence the zero is not necessarily at the ends of the interval.

For the Hoeffding exponent $B_{e}^{\mathbb{A}^{c,0}}(r|\mathcal{M}_o)$, 
finding a compact formula for the global maximum is not possible. 
However, numerical simulation guarantees that $B_{e}^{\mathbb{A}^{c,0}}(r|\mathcal{M}_o)$ 
is a convex function that the first derivative has a unique zero. 
We solved the optimization numerically for depolarizing channel with three 
different parameters, see Fig. \ref{hff1}.
\hfill $\square$
\end{example}

\medskip
\begin{example}[Pauli channel]
Let $\vec{p} = (p_{I},p_{x},p_{y},p_{z})$ be a probability vector. 
The Pauli channel is defined as follows:
\begin{align*}
  \mathcal{P}_{\vec{p}} : \rho \mapsto p_{I}\rho+\sum_{i=x,y,z}p_{i}\sigma_{i}\rho\sigma_{i}^\dagger,
\end{align*}
that is, it returns the state with probability $p_{I}$ or applies the Pauli operators 
$\sigma_{x},\sigma_{y},\sigma_{z}$ with probabilities $p_{x},p_{y},p_{z}$, respectively.

For this channel, it can be seen by some algebra that (see e.g. \cite[Sec.~5.3]{Hayashi:book} 
and \cite{PhysRevA.59.3290})
\begin{align}
\vec{t}=\begin{pmatrix}
0\\
0\\
0
\end{pmatrix}
\quad\text{and}\quad
T=
\begin{pmatrix}
p_{I}+p_{x}-p_{y}-p_{z} & 0 & 0  \\
0 & p_{I}-p_{x}+p_{y}-p_{z} & 0  \\
0 & 0 & p_{I}-p_{x}-p_{y}+p_{z}  \\
\end{pmatrix}.
\end{align}
Therefore, the states on the surface of the Bloch sphere are mapped into the surface of the following ellipsoid:
 \begin{align}
 \left(\frac{r_{x}}{p_{I}+p_{X}-p_{Y}-p_{z}}\right)^2 
   + \left(\frac{r_{y}}{p_{I}-p_{X}+p_{Y}-p_{z}}\right)^2 
   + \left(\frac{r_{z}}{p_{I}-p_{X}-p_{Y}+p_{z}}\right)^2 = 1.
 \end{align}
Note that the Pauli channel shrinks the unit sphere with different magnitudes along each axis, and the two states on the surface of the ellipsoid that have the largest distance depends on the lengths of the coordinates on each axis. We need to choose the states along the axis that is shrunk the least. We define the following:
\begin{align}
  p_{\max}=\max\bigl\{ |p_{I}+p_{X}-p_{Y}-p_{Z}|,|p_{I}-p_{X}+p_{Y}-p_{Z}|,|p_{I}-p_{X}-p_{Y}+p_{Z}| \bigr\},
\end{align}
then from the symmetry of the problem and the fact that the eigenvalues of the state 
$\vec{r}=(r_{x},r_{y},r_{z})$ are $\left\{\frac{1-|\vec{r}|}{2},\frac{1+|\vec{r}|}{2}\right\}$, 
the following can be seen after some algebra:
\begin{align}
  \sup_{\rho,\sigma} D_{\alpha}(\mathcal{M}_o(\rho)\|\mathcal{M}_o(\sigma))
       =\frac{1}{\alpha-1}\log Q(1-p_{\max},\alpha).
\end{align}
From this, for $0\leq r\leq -p_{\max}\log\frac{1-p_{\max}}{1+p_{\max}}$ and 
$p_{\max}\log\frac{1-p_{\max}}{1+p_{\max}} \leq a-b \leq -p_{\max}\log\frac{1-p_{\max}}{1+p_{\max}}$, 
we have
\begin{align*}
  C^{\mathbb{A}^{c,0}}(a,b|\mathcal{M})
      &= \sup_{0\le \alpha \le 1} -\log Q(1-p_{\max},\alpha) -\alpha a -(1-\alpha)b,\\
  B_{e}^{\mathbb{A}^{c,0}}(r|\mathcal{M})
      &=\sup_{0\le \alpha \le 1} \frac{\alpha-1}{\alpha}\left(r-\frac{1}{\alpha-1}\log Q(1-p_{\max},\alpha)\right).
\end{align*}
Similar to our findings in Example \ref{exdepol}, we can show that the generalized 
Hoeffding bound is maximized at
\begin{align*}
\alpha=\frac{1}{2}-\frac{\log\frac{\log\frac{1-p_{\text{max}}}{1+p_{\text{max}}}+(a-b)}{\log\frac{1-p_{\text{max}}}{1+p_{\text{max}}}-(a-b)}}{2\log\frac{1-p_{\text{max}}}{1+p_{\text{max}}}},
\end{align*} 
and this point is unique. The same conclusion using numerical optimization indicates that 
the Hoeffding bound of the Pauli channel resembles that of the depolarizing channel. 
Note that a depolarizing channel with parameter $q$ is equivalent to a Pauli channel with 
parameters 
$\left\{p_{I}=1-\frac{3q}{4},p_{x}=\frac{q}{4},p_{y}=\frac{q}{4},p_{z}=\frac{q}{4}\right\}$ \cite[Ex.~5.3]{Hayashi:book}.
\hfill $\square$
\end{example}

\medskip
\begin{example}[Amplitude damping channel]
The amplitude damping channel with parameter $0\leq\gamma\leq1$ is defined as follows: 
\begin{align}
  \mathcal{A}_\gamma : \rho \mapsto A_0\rho A_0^\dagger + A_1\rho A_1^\dagger,
\end{align}
where the Kraus operators are given as $A_{0}=\sqrt{\gamma}\ketbra{0}{1}$ and 
$A_{1}=\ketbra{0}{0}+\sqrt{1-\gamma}\ketbra{1}{1}$.

\begin{figure}[ht]
\begin{center}
\includegraphics[width=0.9\textwidth]{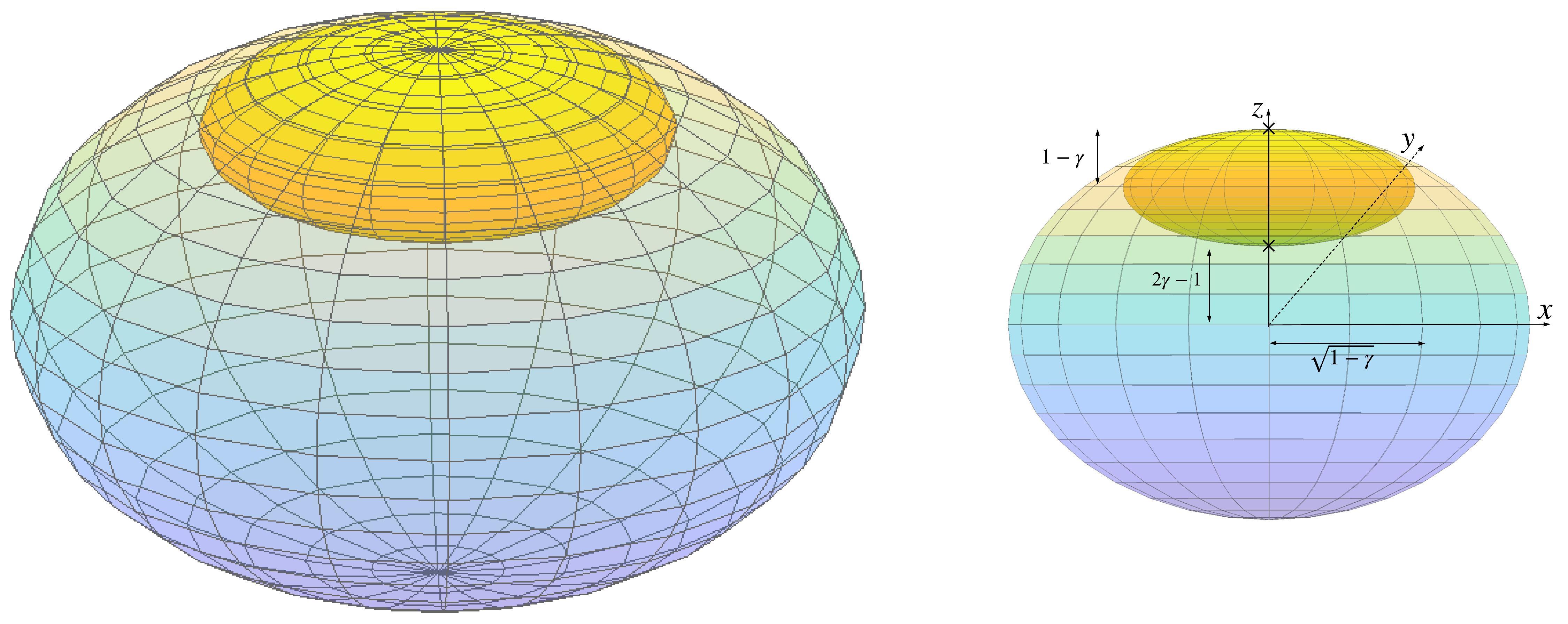}      
\caption{The Bloch sphere and its image under the amplitude damping channel with parameter 
         $\gamma$. There are two large and one small principal axes. As indicated in the right-hand 
         figure, the points $(0,0,1)$ and $(0,0,2\gamma-1)$ are the intersection points of the 
         surface of the displaced ellipsoid with the $z$ axis; the former point also is its 
         intersection point with the Bloch sphere.}
 \label{displaced}
\end{center}
\end{figure}

For this channel, simple algebra shows that
\begin{align}
\vec{t}=\begin{pmatrix}
0\\
0\\
\gamma
\end{pmatrix}
\quad\text{and}\quad
T=
\begin{pmatrix}
\sqrt{1-\gamma} & 0 & 0  \\
0 & \sqrt{1-\gamma} & 0  \\
0 & 0 & 1-\gamma  \\
\end{pmatrix}.
\end{align}

Note that unlike depolarizing and Pauli channels, $\vec{t}$ has a non-zero element for the 
amplitude damping channel, meaning that the amplitude damping channel is not unital. 
The non-zero $\vec{t}$ indicates shifting the center of the ellipsoid. The ellipsoid of 
output states of the amplitude damping channel is depicted in Fig. \ref{displaced}. 
Some algebra reveals the equation of the image to be as follows:
\begin{align}
  \left(\frac{r_{x}}{\sqrt{1-\gamma}}\right)^{2}
    + \left(\frac{r_{y}}{\sqrt{1-\gamma}}\right)^{2}
    + \left(\frac{r_{z}-\gamma}{1-\gamma}\right)^{2}=1.
\end{align} 
To calculate the divergence, from the argument we made in Remark \ref{choosing}, we 
choose the optimal states on $x-z$ plane as $\vec{r}_{1}=(\sqrt{1-\gamma},0,\gamma)$ and 
$\vec{r}_{2}=(-\sqrt{1-\gamma},0,\gamma)$. It can be numerically checked that these points 
lead to maximum divergence. These two points correspond to the following states, respectively:

\begin{align*}
\rho_{1}=\frac{1}{2}
\begin{pmatrix}
 1+\gamma& \sqrt{1-\gamma}   \\
\sqrt{1-\gamma} & 1-\gamma  \\
\end{pmatrix}
\quad\text{and}\quad
\rho_{2}=\frac{1}{2}
\begin{pmatrix}
 1+\gamma& -\sqrt{1-\gamma}   \\
-\sqrt{1-\gamma} & 1-\gamma  \\
\end{pmatrix}.
\end{align*}
Since $|\vec{r}_{1}|=|\vec{r}_{2}|=\sqrt{\gamma^{2}-\gamma+1}$, both states have the following eigenvalues:
\begin{align*}
\lambda_{1},\lambda_{2}=\frac{1\pm\sqrt{\gamma^{2}-\gamma+1}}{2},
\end{align*}
and since $\rho_{1}$ and $\rho_{2}$ obviously do not commute, we find the eigenvectors 
for $\rho_{1}$ and $\rho_{2}$ respectively as follows:
\begin{align*}
 \ket{\nu_{1}}=\frac{1}{\sqrt{1+(\frac{2\lambda_{1}-1-\gamma}{\sqrt{1-\gamma}})^{2}}}
\begin{pmatrix}
1 \\ \frac{2\lambda_{1}-1-\gamma}{\sqrt{1-\gamma}} \\
\end{pmatrix}
,\quad  \ket{\nu_{2}}=\frac{1}{\sqrt{1+(\frac{2\lambda_{2}-1-\gamma}{\sqrt{1-\gamma}})^{2}}}
\begin{pmatrix}
1 \\ \frac{2\lambda_{2}-1-\gamma}{\sqrt{1-\gamma}} \\
\end{pmatrix},
\end{align*}
and
 \begin{align*}
 \ket{\mu_{1}}=\frac{1}{\sqrt{1+(\frac{2\lambda_{1}-1-\gamma}{\sqrt{1-\gamma}})^{2}}}
\begin{pmatrix}
1 \\ -\frac{2\lambda_{1}-1-\gamma}{\sqrt{1-\gamma}} \\
\end{pmatrix}
,\quad  \ket{\mu_{2}}=\frac{1}{\sqrt{1+(\frac{2\lambda_{2}-1-\gamma}{\sqrt{1-\gamma}})^{2}}}
\begin{pmatrix}
1 \\ -\frac{2\lambda_{2}-1-\gamma}{\sqrt{1-\gamma}} \\
\end{pmatrix}.
\end{align*}
The following can be seen after some algebra:
\begin{align*}
\sup_{\rho,\sigma} D_{\alpha}(\mathcal{M}_o(\rho)\|\mathcal{M}_o(\sigma))
=\frac{1}{\alpha-1}\log W(\gamma,\alpha),
\end{align*}
where
\begin{align*}
  W(\gamma,\alpha)
    &= \lambda_{1}\left(\frac{1-\left(\frac{2\lambda_{1}-1-\gamma}{\sqrt{1-\gamma}}\right)^{2}}
                             {1+\left(\frac{2\lambda_{1}-1-\gamma}{\sqrt{1-\gamma}}\right)^{2}}\right)^{2}
       + \lambda_{2}\left(\frac{1-\left(\frac{2\lambda_{2}-1-\gamma}{\sqrt{1-\gamma}}\right)^{2}}
                               {1+\left(\frac{2\lambda_{2}-1-\gamma}{\sqrt{1-\gamma}}\right)^{2}}\right)^{2} \\
    &\phantom{==}
       + \frac{\left(Q(1-\sqrt{\gamma^{2}-\gamma+1},\alpha)\right)
               \left(1-\frac{(2\lambda_{1}-1-\gamma)(2\lambda_{2}-1-\gamma)}{(\sqrt{1-\gamma})^{2}}\right)^{2}}
              {\left(1+\left(\frac{2\lambda_{1}-1-\gamma}{\sqrt{1-\gamma}}\right)^{2}\right)
               \left(1+\left(\frac{2\lambda_{2}-1-\gamma}{\sqrt{1-\gamma}}\right)^{2}\right)}.
\end{align*}
We also have
\begin{align*}
  D(\mathcal{M}) 
    &= \lambda_{1}\log\lambda_{1}+\lambda_{2}\log\lambda_{2}
      -\lambda_{1}\log\lambda_{1}\left(\frac{1-\left(\frac{2\lambda_{1}-1-\gamma}{\sqrt{1-\gamma}}\right)^{2}}
                                            {1+\left(\frac{2\lambda_{1}-1-\gamma}{\sqrt{1-\gamma}}\right)^{2}}\right)^{2}
      -\lambda_{2}\log\lambda_{2}\left(\frac{1-\left(\frac{2\lambda_{2}-1-\gamma}{\sqrt{1-\gamma}}\right)^{2}}
                                            {1+\left(\frac{2\lambda_{2}-1-\gamma}{\sqrt{1-\gamma}}\right)^{2}}\right)^{2}\\
    &\phantom{=============}
       -\frac{(\lambda_{1}\log\lambda_{2}+\lambda_{2}\log\lambda_{1})
               \left(1-\frac{(2\lambda_{1}-1-\gamma)(2\lambda_{2}-1-\gamma)}{1-\gamma}\right)^{2}}
             {\left(1+\left(\frac{2\lambda_{1}-1-\gamma}{\sqrt{1-\gamma}}\right)^{2}\right)
              \left(1+\left(\frac{2\lambda_{2}-1-\gamma}{\sqrt{1-\gamma}}\right)^{2}\right)}.
\end{align*}

\begin{figure}[ht]
\begin{center}
  \includegraphics[width=0.8\textwidth]{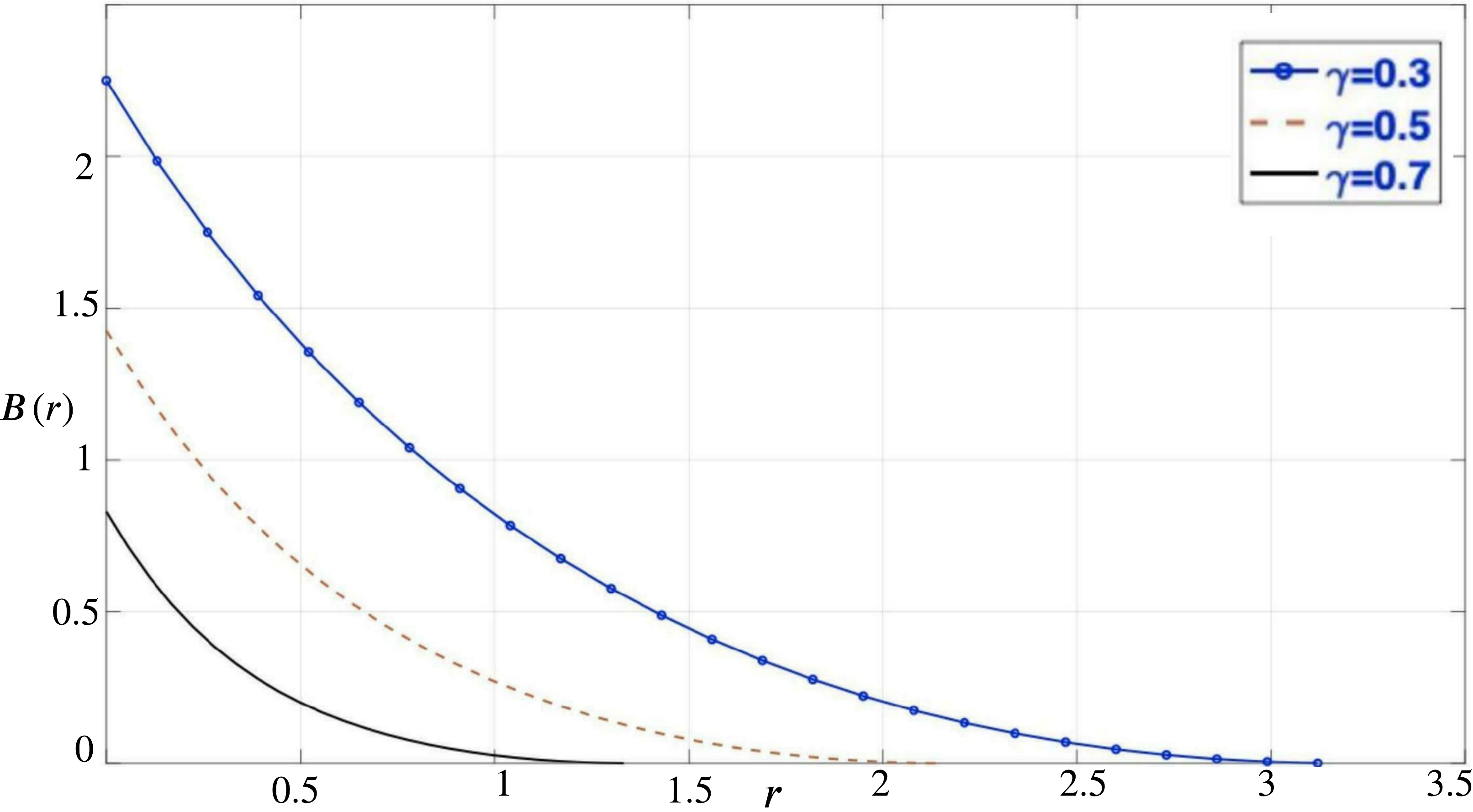}      
\end{center}
\caption{Hoeffding exponent for amplitude damping channel when entangled inputs are 
         not allowed. (The vertical 
axis shows $B_e^{A_n^{c,0}}(r|\mathcal{M}_o)$, represented as $B(r)$.) The legitimate values of $r$ for each exponent are imposed by 
         the strong Stein's lemma and differ as a function of $\gamma$, i.e. $D(\mathcal{M})$.}
\label{hoeffam}
\end{figure}

The cumbersome expressions reflect the complexity of analytically solving the optimizations; 
however, it can be seen numerically that the first derivative of the generalized Chernoff 
bound has a unique zero and its second derivative is positive ensuring the convexity. 
We calculate and plot the Hoeffding exponent for three different parameters of the 
amplitude damping channel in Fig. \ref{hoeffam}.
\hfill $\square$
\end{example}

\section{Conclusion}
\label{conclude}
In an attempt to further extend the classical results of \cite{5165184} to quantum channels, 
we have shown that for the discrimination of a pair of cq-channels, adaptive strategies 
cannot offer any advantage over non-adaptive strategies concerning the asymmetric 
Hoeffding and the symmetric Chernoff problems in the asymptotic limit of error 
exponents, even when the input system is continuous. 
Our approach is to turn the cq-channels into classical channels using eigenvalue
decomposition of the output states 
by using the two distributions introduced by \cite{Hayashi_2002, nussbaum2009}
and subsequently deal with the classical channels.
This latter finding led us to prove the optimality of 
non-adaptive strategies for discriminating qq-channels via a subclass of 
protocols which only use classical feed-forward and product inputs. 

In contrast, the most general strategy for discriminating qq-channels
allows quantum feed-forward and entangled inputs.
In this class, we have obtained two results for a pair of two entanglement-breaking channels. 
When these two entanglement-breaking channels are constructed via the same PVM,
the most general strategy cannot improve the parallel scheme
concerning the asymmetric Hoeffding and the symmetric Chernoff problems.
In contrast, in an example of a pair of entanglement-breaking channels
that are constructed via different PVMs, 
and in another example of a pair of qc-channels implementing general POVMs,
we have shown asymptotic separations between the Chernoff and Hoeffding 
exponents of adaptive and non-adaptive strategies.
These examples show the importance of the above condition for two 
entanglement-breaking channels. For general pairs of qq-channels,
we leave open the question of the condition for the optimality of non-adaptive protocols; 
note that it is open already for entanglement-breaking channels.

We have also studied the hypothesis testing of binary information via a noisy 
quantum channel and have shown that 
when no entangled inputs nor quantum feedback are allowed, 
non-adaptive strategies are optimal.
In addition, when the channel is an entanglement-breaking channel composed of a PVM
followed by a state preparation, 
we have shown the optimality of non-adaptive strategies without the need 
for entangled inputs among all adaptive strategies.

Both our work and quantum machine learning (ML) model in \cite{PhysRevLett.126.190505} aim
at learning about quantum channels.
The reference \cite{PhysRevLett.126.190505} considers a finite number of uses of a quantum channel and establishes quantum advantage
in the sense that classical learning process is exponential while quantum learning process is polynomial
in the number of queries to the quantum channel. These results are conceptually in line with results in \cite{PhysRevA.81.032339,PhysRevLett.103.210501,7541701}
where the number of channel uses leading to perfect identification is relevant. In fact, one particularly
interesting research topic would be to find particular quantum channels such that quantum ML model strictly
outperforms classical ML models in the setting of \cite{PhysRevLett.126.190505}.
We feel that such channels might be found by looking into our Examples \ref{ex:Harrow-et-al} and \ref{ex:POVMs}.
On the other hand, in the context of quantum ML models, 
our results establish 
when and what kind of quantum technology is required to benefit from quantum ML.
In particular, our results in Section \ref{power} are related to the setting of \cite{PhysRevLett.126.190505} in that in both problems there is one single channel inside a black box, and the question is how
well one can predict (or learn about or distinguish between) the outputs of the channel.
While we study the asymptotic advantage,
we expect that our results can be applied to the scenarios where
the scaling of the prediction accuracy with respect to the sample complexity becomes important. 
An example of such scaling relevant to our work is presented in \cite{PhysRevLett.126.190505}. 
More precisely, the so-called scaling property of our results stems from two points:
First, we have shown that, for certain channels, adaptive strategies cannot 
beat non-adaptive ones. We have done so by proving that the R\'{e}nyi
relative entropy in the adaptive setting can be reduced to the R\'{e}nyi relative entropy in the non-adaptive
setting. 
This analysis has been applied to the discrimination of qq-entanglement-breaking
channels where the measurement bases are the same, 
and without considering asymptotic limits.
The relation between the R\'{e}nyi relative entropy and the discrimination task
holds no matter which type of scaling is considered.
Therefore, this reduction is expected to play an important role even in 
the scaling of \cite{PhysRevLett.126.190505}.
Secondly, we have derived a lower bound for discrimination error for 
general non-adaptive strategies.
Since this bound can be applied to any pair of qq-channels,
it led us to the first examples of 
asymptotic separation between adaptive and non-adaptive strategies.
On the other hand, as this lower bound comes from the minimum eigenvalue of a certain operator,
it applies in the non-asymptotic regime, in particular in the scaling of \cite{PhysRevLett.126.190505} as well.

An obvious extension of our work would be to restrict the number of samples (channel uses)
to be a random variable rather than a fixed number (see Li \emph{et al.} \cite{LTT21}).
Besides, the reference \cite{PhysRevLett.126.190505} studies prediction of classical bits even when quantum ML is employed. 
Since our obtained bounds work in the scaling of \cite{PhysRevLett.126.190505},
an extension of our results 
should go beyond the classical data prediction of \cite{PhysRevLett.126.190505}. 


\section*{Acknowledgements}
The authors thank Zbigniew Pucha{\l}a for drawing our attention to 
Example \ref{ex:POVMs}, from Ref. \cite{KPP:POVM},
and Zbigniew Pucha{\l}a and Stefano Pirandola for pointing out various
pertinent previous works.
FS is grateful to Mario Berta for stimulating discussions during QIP 2020. 
He also acknowledges the hospitality and support of the Peng Cheng Laboratory (PCL) 
where the initial part of this work was done. 
AW thanks E. White and D. Colt for spirited conversations on 
issues around discrimination and inequalities. 

FS acknowledges support by the DFG cluster of excellence 2111 (Munich Center for Quantum
Science and Technology).
FS and AW acknowledge partial financial support 
by the Baidu-UAB collaborative project `Learning of Quantum 
Hidden Markov Models', the Spanish MINECO (projects FIS2016-86681-P
and PID2019-107609GB-I00/AEI/10.13039/501100011033) 
with the support of FEDER funds, and the Generalitat de Catalunya
(project 2017-SGR-1127).
FS is also supported by the Catalan Government 001-P-001644 QuantumCAT within the
ERDF Program of Catalunya.
MH is supported in part by Guangdong Provincial Key Laboratory (grant
no. 2019B121203002), a JSPS Grant-in-Aids for Scientific Research (A)
no. 17H01280 and for Scientific Research (B) no. 16KT0017, and 
Kayamori Foundation of Information Science Advancement.

\appendix

\section{Quantum measurements}
\label{A1}
The aim of the appendices is showing Theorem \ref{chernoff}, which discusses the generalized Chernoff bound and 
Hoeffding bound for cq-channel discrimination. 
The appendices are organized as follows: 
We first introduce quantum instruments and provide useful lemmas needed for the rest of this Appendix.
Our adaptive method is proven in Appendix B, and 
subsequently Appendix C proves several auxiliary lemmas
leading to the main result of this Section, which is presented in Appendix D. 

\if0
A general quantum state evolution from $A$ to $B$
is written as a cptp map $\mathcal{M}$ from the space 
$\mathcal{T}^A$ to the space $\mathcal{T}^B$ of trace class 
operators on $A$ and $B$, respectively.
When we make a measurement on the initial system $A$, 
we obtain the measurement outcome $K$ and the resultant state on the output system $B$.
To describe this situation, we use a set $\{\kappa_k\}_{k \in \mathcal{K}}$ 
of cp maps from the space $\mathcal{T}^A$ to the space $\mathcal{T}^B$
such that
$\sum_{k \in \mathcal{K}} \kappa_k$ is trace preserving.
In this paper, since the classical feed-forward  information is assumed to be a discrete variable,
$\mathcal{K}$ is a discrete (finite or countably infinite) set.
Since it is a decomposition of a cptp map, it is often called a \emph{cp-map valued measure}, 
and an \emph{instrument} if their sum is cptp.\footnote{For simplicity, here and in the rest of the paper, we assume the set $\mathcal{K}$ to be discrete. In fact, if the Hilbert spaces $A$, $B$, etc, on which the cp maps act are finite dimensional, then every instrument is a convex combination, i.e. a probabilistic mixture, of instruments with only finitely many non-zero elements; this carries over to instruments defined on a general measurable space $\mathcal{K}$. Thus, in the finite-dimensional case the assumption of discrete $\mathcal{K}$ is not really a restriction.}
In this case, when the initial state on $A$ is $\rho$ and 
the outcome $k$ is observed with probability 
$\Tr \kappa_k(\rho)$, where
the resultant state on $B$ is $\kappa_k(\rho)/\Tr \kappa_k(\rho)$.
A state on the composite system of the classical system $K$ and the quantum $B$ 
is written as
$\sum_{k \in {\cal K}}|k\rangle \langle k| \otimes \rho_{B|k}
$, which belongs to the vector space 
$\mathcal{T}^{KB}:=
\sum_{k \in {\cal K}}|k\rangle \langle k| \otimes \mathcal{T}^B$.
The above measurement process can be written as the following cptp 
$\mathcal{E}$ map from $\mathcal{T}^A$ to $\mathcal{T}^{KB}$.
\begin{align}
\mathcal{E}(\rho):= \sum_{k \in \mathcal{K}}|k\rangle \langle k| \otimes \kappa_{k}(\rho).\label{NAC}
\end{align}
In the following, 
both of the above cptp map $\mathcal{E}$ and 
a cp-map valued measure are called a quantum instrument.
\fi

Since we need to handle cp-map valued measure,
we prepare the following lemma. 

\begin{lemma}[{Cf.~\cite[Thm.~7.2]{Hayashi:book}}]
\label{LR}
Let $\boldsymbol{\kappa}=\{\kappa_{\omega}:A\rightarrow B\}_{\omega}$ be an instrument 
(i.e. a cp-map valued measure)
with an input system $A$ and an output system $B$. 
Then there exist a POVM $\boldsymbol{M}=\{M_{\omega}\}$ 
on a Hilbert space $A$ and cptp maps ${\kappa}_{\omega}'$
from $A$ to $B$ for each outcome $\omega$, 
such that for any density operator $\rho$,
\begin{align*}
  \phantom{=====================}
  \kappa_{\omega}(\rho) = \kappa_{\omega}'\left(\sqrt{M_{\omega}}\rho\sqrt{M_{\omega}}\right).
  \phantom{====================}
  \square
\end{align*}
\end{lemma}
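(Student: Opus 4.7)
The plan is to use the Kraus decomposition of each $\kappa_\omega$ and then absorb a square root of the associated POVM element on the right. Concretely, I would begin by writing each completely positive map in Kraus form,
\[
  \kappa_\omega(\rho) = \sum_i K_{\omega,i}\, \rho\, K_{\omega,i}^\dagger,
\]
and define $M_\omega := \sum_i K_{\omega,i}^\dagger K_{\omega,i}$. The trace-preserving property of $\sum_\omega \kappa_\omega$ immediately gives $\sum_\omega M_\omega = I$, so $\{M_\omega\}$ is a POVM on $A$, as required.

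Next I would factor each Kraus operator through $\sqrt{M_\omega}$. Because $K_{\omega,i}^\dagger K_{\omega,i} \leq M_\omega$, the operator $K_{\omega,i}$ vanishes on $\ker M_\omega$, so setting $V_{\omega,i} := K_{\omega,i}\, M_\omega^{-1/2}$ (with the Moore--Penrose pseudoinverse of $\sqrt{M_\omega}$) gives a well-defined operator satisfying $V_{\omega,i}\sqrt{M_\omega} = K_{\omega,i}$. Consequently,
\[
  \kappa_\omega(\rho) = \sum_i V_{\omega,i}\,\sqrt{M_\omega}\,\rho\,\sqrt{M_\omega}\,V_{\omega,i}^\dagger.
\]
A direct computation yields $\sum_i V_{\omega,i}^\dagger V_{\omega,i} = M_\omega^{-1/2} M_\omega M_\omega^{-1/2} = P_\omega$, the support projection of $M_\omega$, which is in general strictly less than $I$.

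To obtain a \emph{cptp} map $\kappa_\omega'$ (and not merely cp), I would extend the action on $\ker M_\omega$ by a replacer channel: fix an arbitrary state $\tau_\omega$ on $B$ and set
\[
  \kappa_\omega'(\sigma) := \sum_i V_{\omega,i}\, \sigma\, V_{\omega,i}^\dagger + \tau_\omega\,\Tr\!\bigl((I-P_\omega)\sigma\bigr).
\]
The Kraus-like decomposition on the right-hand side now satisfies $\sum_i V_{\omega,i}^\dagger V_{\omega,i} + (I-P_\omega) = I$, so $\kappa_\omega'$ is cptp. For inputs of the form $\sigma = \sqrt{M_\omega}\rho\sqrt{M_\omega}$, which are supported on $\operatorname{supp} M_\omega$, the extra replacer term vanishes, giving the identity $\kappa_\omega(\rho) = \kappa_\omega'\bigl(\sqrt{M_\omega}\rho\sqrt{M_\omega}\bigr)$.

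The only genuinely delicate point is the pseudoinverse factorisation $V_{\omega,i} = K_{\omega,i}M_\omega^{-1/2}$ in potentially infinite-dimensional separable Hilbert spaces and when the Kraus sum is countably infinite. I would justify this by noting that $K_{\omega,i}$ annihilates $\ker \sqrt{M_\omega}$, so the factorisation is well-defined on the (closure of the) range of $\sqrt{M_\omega}$ and extends by zero on its orthogonal complement; the resulting $V_{\omega,i}$ is bounded, and the sum $\sum_i V_{\omega,i}^\dagger V_{\omega,i} = P_\omega$ converges in the strong operator topology because the original $\sum_i K_{\omega,i}^\dagger K_{\omega,i} = M_\omega$ does. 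Everything else is a routine check.
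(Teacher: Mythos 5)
Your proof is correct. The paper itself gives no proof of this lemma—it is imported verbatim from \cite[Thm.~7.2]{Hayashi:book}—so there is nothing to compare against within the paper; your argument (Kraus decomposition, $M_\omega := \sum_i K_{\omega,i}^\dagger K_{\omega,i}$, Douglas-type factorization $K_{\omega,i} = V_{\omega,i}\sqrt{M_\omega}$ through the pseudoinverse, and a replacer term on $\ker M_\omega$ to upgrade the cp map to cptp) is the standard route and all the steps check out, including the infinite-dimensional points: $\|V_{\omega,i}\psi\| \leq \|\psi\|$ on the dense domain follows from $K_{\omega,i}^\dagger K_{\omega,i} \leq M_\omega$, so each $V_{\omega,i}$ extends to a contraction, and the replacer term indeed vanishes on inputs of the form $\sqrt{M_\omega}\rho\sqrt{M_\omega}$ since these are supported on $(\ker M_\omega)^\perp$.
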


A general POVM can be lifted to a projection-valued measure (PVM), as follows.

\begin{lemma}[{Naimark's theorem \cite{Naimark:POVM}}]
\label{Lnaimark}
Given a positive operated-valued measure (POVM) 
$\boldsymbol{M}=\{M_{\omega}\}_{\omega\in\Omega}$ on $A$ with a discrete 
measure space $\Omega$, there exist a larger Hilbert space $C$ including $A$
and a projection-valued measure (PVM) 
$\boldsymbol{E}=\{E_{\omega}\}_{\omega\in\Omega}$ on $C$ such that 
\begin{align*}
  \phantom{==================:}
  \Tr \rho M_{\omega}=\Tr \rho E_{\omega} \quad\forall\rho\in S^A,\,\omega\in\Omega.
  \phantom{==========}\square
\end{align*}
\end{lemma}

Combining these two lemmas, we have the following corollary.

\begin{corollary}
\label{cor1}
Let $\boldsymbol{\kappa}=\{\kappa_{\omega}:A\rightarrow B\}_{\omega}$ be an instrument 
(i.e. a cp-map valued measure)
with an input system $A$ and an output system $B$.
Then there exist a PVM $\boldsymbol{E}=\{E_{\omega}\}$ 
on a larger Hilbert space $C$ including $A$
and cptp maps ${\kappa}_{\omega}''$
from $C$ to $B$ for each outcome $\omega$, 
such that for any density operator $\rho$,
\begin{align}
  \kappa_{\omega}(\rho) = \kappa_{\omega}''\left(E_{\omega}\rho
  E_{\omega}\right).\label{LNC}
\end{align}
\end{corollary}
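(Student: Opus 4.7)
The plan is to chain the two preceding lemmas with an intermediate polar-decomposition step that converts the partial isometry arising from Naimark's dilation into a cptp map. First I would apply Lemma \ref{LR} to obtain a POVM $\boldsymbol{M}=\{M_\omega\}$ on $A$ and cptp maps $\kappa_\omega':A\to B$ satisfying $\kappa_\omega(\rho)=\kappa_\omega'(\sqrt{M_\omega}\rho\sqrt{M_\omega})$. Then I would apply Naimark's theorem (Lemma \ref{Lnaimark}) to $\{M_\omega\}$, producing a Hilbert space $C\supset A$ and a PVM $\{E_\omega\}$ on $C$; polarising the identity $\Tr\rho M_\omega=\Tr\rho E_\omega$ for all $\rho\in\mathcal{S}^A$ yields the operator relation $M_\omega=P_A E_\omega P_A$, where $P_A$ denotes the orthogonal projection of $C$ onto $A$.

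The core of the argument is to translate between $\sqrt{M_\omega}\rho\sqrt{M_\omega}$ and $E_\omega\rho E_\omega$, viewing $\rho\in\mathcal{S}^A$ as an operator on $C$ supported on $A$. Setting $A_\omega:=E_\omega P_A$, one has $E_\omega\rho E_\omega=A_\omega\rho A_\omega^\dagger$ and, using $E_\omega^2=E_\omega$, also $A_\omega^\dagger A_\omega=P_A E_\omega P_A=M_\omega$. The polar decomposition therefore gives $A_\omega=U_\omega\sqrt{M_\omega}$ for a partial isometry $U_\omega$ on $C$ whose initial space is the support of $\sqrt{M_\omega}$, and hence $U_\omega^\dagger(E_\omega\rho E_\omega)U_\omega=\sqrt{M_\omega}\rho\sqrt{M_\omega}$, the right-hand side being supported in $A$.

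It remains to turn $\sigma\mapsto U_\omega^\dagger \sigma U_\omega$ into a cptp map $\Phi_\omega:C\to A$. Fixing any state $\rho_0\in\mathcal{S}^A$ and defining $\Phi_\omega(\sigma):=U_\omega^\dagger \sigma U_\omega+\Tr[(I-U_\omega U_\omega^\dagger)\sigma]\,\rho_0$, a direct check shows that $\Phi_\omega$ is cp and trace-preserving. Since $E_\omega\rho E_\omega=A_\omega\rho A_\omega^\dagger$ has its support inside the range of $A_\omega$, which equals the range of $U_\omega U_\omega^\dagger$, the padding term vanishes on such inputs, giving $\Phi_\omega(E_\omega\rho E_\omega)=\sqrt{M_\omega}\rho\sqrt{M_\omega}$. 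Then $\kappa_\omega'':=\kappa_\omega'\circ\Phi_\omega$ is a cptp map $C\to B$ satisfying the required identity (\ref{LNC}). The main obstacle is precisely this last step: the polar-decomposition partial isometry alone produces only a cp, generally not trace-preserving, map, and one has to engineer the trace-restoring extension carefully so that it does not perturb the action on operators of the form $E_\omega\rho E_\omega$.
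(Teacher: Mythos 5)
Your proof is correct and follows essentially the same route as the paper's: first Lemma \ref{LR}, then Naimark's dilation, then a partial isometry relating $\sqrt{M_\omega}$ to $E_\omega P_A$ (the paper writes $\sqrt{M_\omega}=V_\omega E_\omega P$, i.e.\ $V_\omega=U_\omega^\dagger$ in your notation). Your explicit trace-restoring padding of $\sigma\mapsto U_\omega^\dagger\sigma U_\omega$ is a welcome extra bit of care that the paper's definition of $\kappa_\omega''$ silently glosses over, and it does not change the substance of the argument.
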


\begin{proof}
First, using Lemma \ref{LR}, we choose 
a POVM $\boldsymbol{M}=\{M_{\omega}\}$ 
on a Hilbert space $A$ and cptp maps $\boldsymbol{\kappa}_{\omega}'$
from $A$ to $B$ for each outcome $\omega$.
Next, using Lemma \ref{Lnaimark}, we choose 
a larger Hilbert space $C$ including $A$
and a projection-valued measure (PVM) 
$\boldsymbol{E}=\{E_{\omega}\}_{\omega\in\Omega}$ on $C$.
We denote the projection from $C$ to $A$ by $P$.
Then, we have 
\begin{align}
(E_\omega P)^\dagger E_\omega P=
PE_\omega P=
M_\omega = \sqrt{M_\omega} \sqrt{M_\omega} 
\end{align}
for any $\omega \in \Omega$.
Thus, there exists a partial isomerty $V_\omega$ from $C$ to $A$ such that
$\sqrt{M_\omega}=V_\omega E_\omega P$.
Hence, we have
\begin{align*}
  \kappa_{\omega}(\rho) 
  = \kappa_{\omega}'\left(\sqrt{M_\omega}\rho \sqrt{M_\omega}\right)
  = \kappa_{\omega}'\left(V_\omega E_{\omega} P \rho P E_{\omega}V_\omega^\dagger \right)
  = \kappa_{\omega}'\left(V_\omega E_{\omega} \rho E_{\omega}V_\omega^\dagger \right).
 \end{align*}
Defining cptp maps ${\kappa}_{\omega}''$
by ${\kappa}_{\omega}''(\rho)=\kappa_{\omega}'(V_\omega  \rho V_\omega^\dagger )$.
This completes the proof.
\end{proof}

\if0
\medskip
\begin{remark}[Relation to general setting with qq-channels]
Here, we discuss how to derive the above setting from 
the general setting presented in introduction for cq-channels.
In the case with cq-channel, the input needs to be
a classical element in the discrete set $\mathcal{X}$.
To decide the classical input, we need to apply measurement 
after the application of the $m$-th cptp map $\mathcal{F}_m$.
That is, we need to replace the $m$-th cptp map $\mathcal{F}_m$
by a quantum instrument
$\mathcal{E}_m: R_m B_m \to K_m R_{m+1}$, which feeds the outcome $K_m$ forward to 
the next channel use. 
Hence, the obtained procedure is equivalent to the procedure given above.

Note however that this way of thinking of a cq-channel as a special type of
qq-channel is restricted to discrete input alphabets; for general, in particular 
continuous input alphabet to the channels $\mathcal{N}$ and $\overline{\mathcal{N}}$, 
we directly use the description above.
\hfill $\square$
\end{remark}
\medskip
\fi

\section{Protocol with PVM for cq-channels}\label{A2}
Now, we rewrite a general adaptive method in a form of a protocol with PVM.
The general procedure for discriminating cq-channels can be rewritten as 
follows using PVMs. 
To start, Fig. \ref{BC} illustrates the general protocol with PVMs, which we shall describe now.
In the following, according to Naimarks dilation theorem, in each $m$-th step, we choose a 
sufficiently large space $B_m$ including the original space $B_m$ such that the measurement is a PVM.

\begin{figure}[ht]
\begin{center}
\label{BC}
\includegraphics[width=0.5\textwidth]{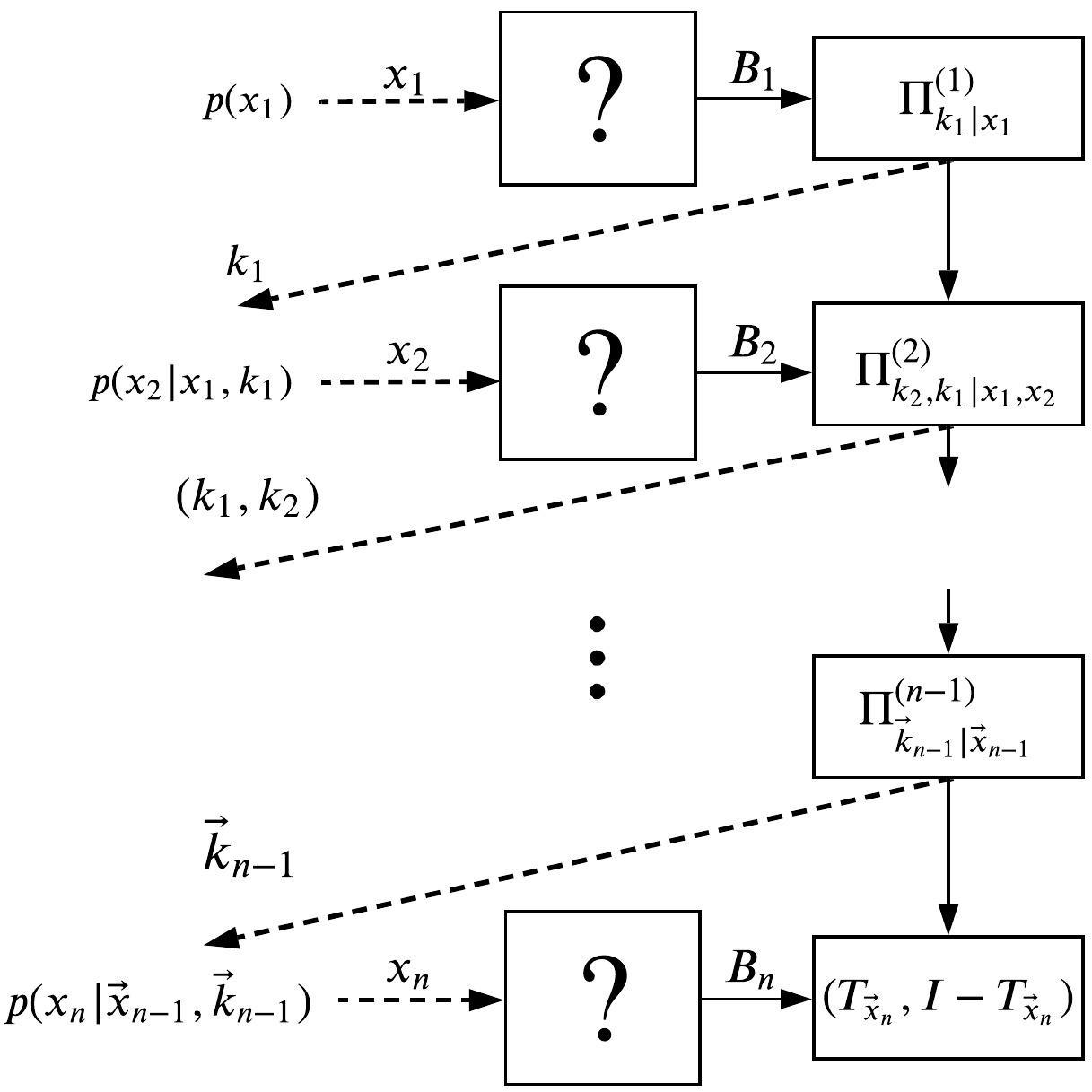}      
\caption{Adaptive strategy for cq-channel discrimination. Solid and dashed lines
         denote flow of classical and quantum information, respectively. The classical outputs of PVMs 
         are employed to decide the inputs adaptively, and leave a post-measurement state 
         that can be accessed together with the next channel output.}
\end{center}
\end{figure}

The first input is chosen subject to the distribution $p_{X_1}(x_1)$. 
Then the output state is measured by a projection-valued measure (PVM)
$\left\{\Pi_{k_1|x_1}^{(1)}\right\}_{k_1}$ on $B_1$.
The second input is then chosen according to the distribution 
$p_{X_2|X_1,K_1}(x_2|x_1,k_1)$. 
Then, a PVM
$\left\{\Pi_{k_2,k_1|x_1,x_2}^{(2)}\right\}_{k_{1},k_2}$ is made on $B_1 B_2$,
which satisfies
$\sum_{k_2}\Pi_{k_2,k_1|x_1,x_2}^{(2)} = \Pi_{k_1|x_1}^{(1)}\otimes I$.
The third input is chosen as the distribution 
$p_{X_3|X_1,X_2,K_1,K_2}(x_3|x_1,x_2,k_1,k_2)$, etc. 
Continuing, the $m$-th step is given as follows.
the sender chooses the $m$-th input $x_m$ according to
the conditional distribution $p_{X_m| \vec{X}_{m-1},\vec{K}_{m-1}}(x_{m}|\vec{x}_{m-1}, \vec{k}_{m-1})$. 
The receiver receives the $m$-th output $\rho_{x_m}$ or $\sigma_{x_m}$ on $B_m$.

The description of the remaining processing requires that we distinguish two cases.
\begin{itemize}
  \item For $m<n$, depending on the previous outcomes $\vec{k}_{m-1}=(k_1, \ldots, k_{m-1})$ and
the previous inputs $\vec{x}_m=(x_1, \ldots, x_{m})$,
as the $m$-th projective measurement, the receiver applies a PVM
$\left\{\Pi_{\vec{k}_m|\vec{x}_m}^{(m)}\right\}_{\vec{k}_m}$ on $B_1 B_2\cdots B_m$,
which satisfies the condition
$\sum_{k_m}\Pi_{\vec{k}_m|\vec{x}_m}^{(m)}= 
\Pi_{\vec{k}_{m-1}|\vec{x}_{m-1}}^{(m-1)}\otimes I$.
He sends the outcome $k_m$ to the sender.

  \item For $m=n$, 
dependent on the inputs $\vec{x}_{n}$,
the receiver measures the final state on 
$B_1 B_2\cdots B_n$ with 
the binary POVM 
$(T_{\vec{x}_n},I-T_{\vec{x}_n})$ on $B_1 B_2\cdots B_n$, where
hypothesis $\mathcal{N}$ (resp. $\overline{\mathcal{N}}$) is accepted if and only if
the first (resp. second) outcome clicks.
\end{itemize}

\begin{proposition}\label{PNL}
Any general procedure given in Subsection \ref{S4-A-1}
can be rewritten in the above form.
\end{proposition}

\begin{proof}
Recall Corollary \ref{cor1} given in Section \ref{S4}.
Due to Corollary \ref{cor1},
when the Hilbert space $B$ can be chosen sufficiently large,
any state reduction written by a cp-map valued measure
$\{\Gamma_{k_1|x_1}\}_{k_1}$ can also be written as the combination of 
a PVM $\{\Pi_{k_1|x_1}^{(1)}\}_{k_1}$ and a state change by a cptp map 
$\Lambda_{k_1,x_1}$ depending on the measurement outcome $k_1$
such that $\Gamma_{k_1|x_1}(\rho)= 
\Lambda_{k_1,x_1}(\Pi_{k_1|x_1}^{(1)} \rho\Pi_{k_1|x_1}^{(1)})$ for 
$k_1,x_1$. 
Hence, we have 
$\Gamma_{k_1|x_1}(\rho)= \Lambda_{k_1,x_1}(\Pi_{k_1|x_1} \rho\Pi_{k_1|x_1} )$ 
for $k_1,x_1$.

Then, we treat the cptp map $\Lambda_{k_{1},x_{1}}$ as a part of the next measurement.
Let $\{\Gamma_{k_2|x_1,x_2,k_1}\}_{k_2}$ 
be the quantum instrument to describe the second measurement.
We define the quantum instrument 
$\{\overline{\Gamma}_{k_2|x_1,x_2,k_1}\}_{k_2}$ as
$\overline{\Gamma}_{k_2|x_1,x_2,k_1}(\rho):=
{\Gamma}_{k_2|x_1,x_2,k_1}(\Lambda_{k_{1},x_{1}}(\rho))$.
Applying Corollary \ref{cor1} to 
the quantum instrument 
$\{\overline{\Gamma}_{k_2|x_1,x_2,k_1}\}_{k_2}$,
we choose 
the PVM $\{\Pi_{k_2|x_1,x_2,k_1}^{(2)}\}_{k_2}$ on 
$\im \Pi_{k_1|x_1}^{(1)} \otimes B_2$ and the state change by a cptp map 
$\Lambda_{k_1,k_2,x_1,x_2}$ depending on the measurement outcome $k_2$
to satisfy \eqref{LNC}.
Since $\sum_{k_1,k_2}\Pi_{k_2|x_1,x_2,k_1}^{(2)}$ is the identity on $B_1B_2$,
setting $\Pi_{k_1,k_2|x_1,x_2}^{(2)}:=\Pi_{k_2|x_1,x_2,k_1}^{(2)}$,
we define the PVM $\{\Pi_{k_1,k_2|x_1,x_2}^{(2)}\}_{k_1,k_2}$ on $B_1 B_2$.

In the same way, for the $m$-th step, 
using a quantum instrument
$\{\Gamma_{k_m|\vec{x}_m,\vec{k}_{m-1}}\}_{k_m}$,
cptp maps $\Lambda_{\vec{k}_{m-1},\vec{x}_{m-1}}$,
 and
Corollary \ref{cor1}, 
we define 
the PVM $\{\Pi_{k_m|\vec{x}_m,\vec{k}_{m-1}}^{(m)}\}_{k_m}$ 
on $\im \Pi_{\vec{k}_{m-1}|\vec{x}_{m-1}}^{(m-1)} \otimes B_m$
and the state change by cptp maps 
$\Lambda_{\vec{k}_m,\vec{x}_m}$.
Then, setting 
$\Pi_{\vec{k}_m|\vec{x}_m}^{(m)}:= \Pi_{k_m|\vec{x}_m,\vec{k}_{m-1}}^{(m)}$,
we define the PVM $\{\Pi_{\vec{k}_m|\vec{x}_m}^{(m)}\}_{\vec{k}_m}$ on $B_1 B_2\cdots B_m$.

In the $n$-th step, i.e., the final step,
using the binary POVM
$(T_{n|\vec{k}_{n-1},\vec{x}_{n}},I-T_{n|\vec{k}_{t-1},\vec{x}_{n}})$
and
cptp maps $\Lambda_{\vec{k}_{n-1},\vec{x}_{n-1}}$, 
we define
the binary POVM
$(T_{\vec{x}_n},I-T_{\vec{x}_n})$ on $B_1 B_2\cdots B_n$ as follows.
\begin{align}
T_{\vec{x}_n}:=\sum_{\vec{k}_n}
\Lambda_{\vec{k}_{n-1},\vec{x}_{n-1}}^\dagger (T_{n|\vec{k}_{n-1},\vec{x}_{n}}),
\end{align}
where $\Lambda_{\vec{k}_{n-1},\vec{x}_{n-1}}^\dagger$ is defined as
$\Tr \Lambda_{\vec{k}_{n-1},\vec{x}_{n-1}}(\rho)X =\Tr \rho \Lambda_{\vec{k}_{n-1},\vec{x}_{n-1}}^\dagger(X)$.
In this way, the general protocol
given in Subsubsection \ref{S4-A-1}
has been converted to a protocol given in this subsection.
\end{proof}

\medskip
It is implicit that the projective measurement $\{\Pi_{\vec{k}_m|\vec{x}_m}^{(m)}\}_{\vec{k}_{m}}$ 
includes first projecting the output from the quantum memory onto a subspace spanned by 
$\{\Pi_{\vec{k}_m-1|\vec{x}_{m-1}}^{(m-1)}\}_{\vec{k}_{m-1}}$,
and then finding $\vec{k}_{m}$ in the entire subspace of $\im \Pi^{(m-1)}_{\vec{k}_{m-1}|\vec{x}_{m-1}}\otimes B_m$.
Hence, $\{\Pi_{\vec{k}_m|\vec{k}_{m-1},\vec{x}_{m-1}}^{(m)}\}_{\vec{k}_m}$ can be regarded as
a PVM on $B_1 B_2\cdots B_m$ and from the construction
\begin{align*}
\sum_{k_{m}}\Pi^{(m)}_{\vec{k}_{m}|\vec{x}_{m}}
=(\Pi^{(1)}_{k_{1}|x_{1}}\otimes I^{\otimes (m-1)})
    \cdots 
 (\Pi_{\vec{k}_{m-1}|\vec{x}_{m-1}}^{(m-1)}\otimes I),
\end{align*}
which shows that the PVMs commute. 

Notice also that 
\begin{align*}
\Pi_{\vec{k}_{n-1}|\vec{x}_{n-1}}^{(n-1)}
\le \Pi_{\vec{k}_{n-2}|\vec{x}_{n-2}}^{(n-2)}\otimes I
\le \cdots \le 
\Pi_{\vec{k}_{1}|\vec{x}_{1}}^{(1)}\otimes I^{\otimes (n-2)}.
\end{align*}

Therefore, 
the states $\rho^{(n)}$ and $\sigma^{(n)}$ before the final measurement, which are defined in 
\eqref{BD1} and \eqref{BD2},
are rewritten as
\begin{align}
\rho^{(n)}=
\sum_{\vec{x}_n,\vec{k}_{n-1}}& p_{X_1}(x_1)\cdots 
p_{{X}_n|\vec{X}_{n-1},\vec{K}_{n-1}}({x}_n|\vec{x}_{n-1},\vec{k}_{n-1})
 \nonumber \\
& \left(\Pi_{\vec{k}_{n-1}|\vec{x}_{n-1}}^{(n-1)}
\big(\rho_{x_1}\otimes \cdots \otimes \rho_{x_n}\big)
\Pi_{\vec{k}_{n-1}|\vec{x}_{n-1}}^{(n-1)}
\otimes \ketbra{\vec{x}_n}\right),\label{BD3}\\
\sigma^{(n)}=
\sum_{\vec{x}_n,\vec{k}_{n-1}} &
p_{X_1}(x_1)\cdots 
p_{{X}_n|\vec{X}_{n-1},\vec{K}_{n-1}}({x}_n|\vec{x}_{n-1},\vec{k}_{n-1})\nonumber \\
& \left(\Pi_{\vec{k}_{n-1}|\vec{x}_{n-1}}^{(n-1)}
\big(\sigma_{x_1}\otimes \cdots \otimes \sigma_{x_n}\big)
\Pi_{\vec{k}_{n-1}|\vec{x}_{n-1}}^{(n-1)}
\otimes \ketbra{\vec{x}_n}\right).
\label{BD4}
\end{align}

\section{Auxiliary results and Proof of Lemma \ref{LPX}}\label{AC}
For our proof of Theorem \ref{chernoff}, we prepare several properties for 
the quantities $C(a,b)$ and 
$B(r)$ defined in \eqref{deff-C} and \eqref{deff}.
The following lemma states the continuity of the $B(r)$ function, of which we give two 
different proofs. The first proof uses the known facts for the case of two states, 
and the cq-channel case is reduced to the former by general statements from convex 
analysis. The second proof is rather more ad-hoc and relies on
peculiarities of the functions at hand.

\begin{lemma}
\label{cont}
The function (Hoeffding expoent) $B(r)$ is continuous in $r$, i.e. for any non-negative real number $r_0$,
\begin{align}
  \lim_{r\to r_0}B(r) = B(r_0).
\label{Eq11}
\end{align}
\end{lemma}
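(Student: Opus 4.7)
The plan is to read off continuity from the variational structure of $B(r)$. For every $\alpha\in(0,1]$ and every $x$, the map $r\mapsto g_{\alpha,x}(r):=\frac{\alpha-1}{\alpha}(r-D_\alpha(\rho_x\|\sigma_x))$ is affine in $r$ with non-positive slope $\frac{\alpha-1}{\alpha}$. Consequently $B(r)=\sup_{x,\alpha}g_{\alpha,x}(r)$ is automatically convex, non-increasing on $[0,\infty)$, and lower semicontinuous (as a supremum of continuous functions). The lower semicontinuity already delivers $\liminf_{r\to r_0}B(r)\ge B(r_0)$ for every $r_0\ge 0$, so the content of the lemma reduces to the matching upper semicontinuity $\limsup_{r\to r_0}B(r)\le B(r_0)$.

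For $r_0>0$ in the interior of the effective domain of $B$, I would invoke the classical fact that a proper convex function on $\mathbb{R}$ is continuous at every interior point of its effective domain. To describe that domain concretely I would use the identity $\frac{1-\alpha}{\alpha}D_\alpha(\mathcal{N}\|\overline{\mathcal{N}})=D_{1-\alpha}(\overline{\mathcal{N}}\|\mathcal{N})$ recorded in the preamble to recast $g_\alpha(r)=\frac{1-\alpha}{\alpha}(D_\alpha(\mathcal{N}\|\overline{\mathcal{N}})-r)$, from which $B(r)<+\infty$ is seen to hold precisely when $r\ge\lim_{\alpha\to 0^+}D_\alpha(\mathcal{N}\|\overline{\mathcal{N}})$, cleanly identifying the interior. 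At the boundary $r_0=0$, monotonicity gives $\limsup_{r\to 0^+}B(r)\le B(0)$ directly, and combined with the lsc from the first paragraph one obtains the desired equality, whether the common value is finite or $+\infty$; the right endpoint $r_0=D(\mathcal{N}\|\overline{\mathcal{N}})$ is handled analogously using $B(r)=0$ there.

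The main obstacle I anticipate is handling the potential ``jump point'' at the left edge of the effective domain, should it lie strictly above $0$: at such a point $B$ could a priori fail to be upper semicontinuous from the left, so one must check directly that $B(r_0)$ equals the (infinite) left limit there. The alternative ad-hoc route alluded to in the paper circumvents this by first establishing continuity in the two-state case — each $B_x(r):=\sup_\alpha\frac{\alpha-1}{\alpha}(r-D_\alpha(\rho_x\|\sigma_x))$ being known to be continuous — and then lifting to the cq-channel case via $B(r)=\sup_x B_x(r)$, which inherits convexity, monotonicity, and lower semicontinuity and thereby full continuity by the same combination of structural properties used above.
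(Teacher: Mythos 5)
Your proof is correct and, for interior points $r_0>0$, follows exactly the paper's route: $B$ is a supremum of affine functions of $r$, hence convex, and a finite convex function on an interval is continuous on the interior of its domain. Where you genuinely diverge is at the endpoint $r_0=0$. The paper argues there by swapping the null and alternative hypotheses and appealing to the swapped exponent $\overline{B}$ being ``the inverse function of $B$'', which is indirect and needs its own justification; your combination of lower semicontinuity (automatic for a pointwise supremum of continuous functions, giving $\liminf_{r\to 0^+}B(r)\ge B(0)$) with monotonicity (giving $\limsup_{r\to 0^+}B(r)\le B(0)$) is more direct, works whether $B(0)$ is finite or $+\infty$, and is arguably the cleaner argument. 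Your worry about a left edge of the effective domain lying strictly inside $(0,\infty)$ is well placed in full generality---if $B$ jumped from $+\infty$ to a finite value at some $r^*>0$, lower semicontinuity alone would not rescue continuity there---but this can only occur when $B(0)=D(\overline{\mathcal{N}}\|\mathcal{N})=+\infty$: since $B$ is non-negative and non-increasing, $B(0)<\infty$ forces finiteness on all of $[0,\infty)$, and that is the regime in which the paper actually uses the lemma (the target interval $[-D(\mathcal{N}\|\overline{\mathcal{N}}),D(\overline{\mathcal{N}}\|\mathcal{N})]$ must be finite). The paper's own proof is equally silent on this degenerate case, so you introduce no gap relative to it. One small inaccuracy: your claim that $B(r)<\infty$ holds ``precisely when'' $r\ge\lim_{\alpha\to 0^+}D_\alpha(\mathcal{N}\|\overline{\mathcal{N}})$ is not quite right, since at that boundary value of $r$ the supremum over $\alpha$ can still diverge depending on how fast $D_\alpha$ approaches its limit; but you hedge this appropriately by treating the boundary point separately.
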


\medskip
The combination of Lemma \ref{cont} and the above observation guarantees that 
the map $r \mapsto B(r)-r$ is a continuous and strictly decreasing function from 
$[0, D(\mathcal{N}\|\overline{\mathcal{N}})]$
to 
$[-D(\mathcal{N}\|\overline{\mathcal{N}}), D(\overline{\mathcal{N}}\|\mathcal{N}) ]$.
Hence, we obtain Lemma \ref{LPX}, i.e., 
when real numbers $a,b$ satisfy
$-D(\mathcal{N}\|\overline{\mathcal{N}})\le a-b \le D(\overline{\mathcal{N}}\|\mathcal{N})$,
there exists $r_{a,b} \in [0, D(\mathcal{N}\|\overline{\mathcal{N}})]$
such that $B(r_{a,b})-r_{a,b}= a-b$.

\begin{proof}
The crucial difficulty in this lemma is that unlike previous works, 
here we allow that $|\mathcal{X}|$ is infinite. 
Note that in the case of a finite alphabet, we just need to note the role of the channel (as opposed to states): 
it is a supremum over channel inputs $x\in\mathcal{X}$, so a preliminary task is to prove that for a fixed 
                          $x$, i.e. a pair of states $\rho_{x}$ and $\sigma_{x}$, 
the Hoeffding function is continuous. This is already known \cite[Lemma 1]{Ogawa-Hayashi} 
and follows straightforwardly from the convexity and monotonicity of the Hoeffding 
function. After that, the channel's Hoeffding function is the maximum over 
finitely many continuous functions and so continuous. 
However, when the alphabet size is infinite, the supremum of infinitely many continuous 
functions is not necessarily continuous. 
Nevertheless, it inherits the convexity of the functions for each $x$,
cf.~\cite[Cor.~3.2.8]{convexbook}. Since the function is defined on the 
non-negative reals $\mathbb{R}_{\geq 0}$, it is continuous for all $r_0 > 0$,
by the well-known and elementary fact that a convex function on an interval is continuous 
on its interior. It only remains to prove the continuity at $r_0=0$; 
to this end consider swapping null and alternative hypotheses and denote the 
corresponding Hoeffding exponent by $\overline{B}(r)$.
We then find that $\overline{B}(r)$ is the inverse function of $B(r)$. 
Since $\overline{B}(r)$ is continuous even when it is equal to zero, i.e. 
at $r=D(\overline{\mathcal{N}}\|\mathcal{N})$, we conclude $B(r)$ is continuous 
at $r=0$ and $B(0)=D(\overline{\mathcal{N}}\|\mathcal{N})$.
\end{proof}

\if0
\begin{proof}[Alternative proof of Lemma \ref{cont}]
Given $r_0>0$, 
there exist $\alpha_0 \in (0,1)$ and a sequence $\alpha_n$ such that
$\alpha_n\to \alpha_0$
and
$\lim_{n \to \infty}
\frac{\alpha_n-1}{\alpha_n}\big(r_{0}- D_{\alpha_n } (\mathcal{N}\|\overline{\mathcal{N}})\big)
=B(r_0)$.
Hence, we have
$ \sup_{\frac{\alpha_0}{2}\le \alpha \le 1}
\frac{\alpha-1}{\alpha}\big(r_0- D_{\alpha } (\mathcal{N}\|\overline{\mathcal{N}})\big)
=B(r_0)$.
For $r >r_0$, the above supremum with $r$ is realized by $\alpha \in [\frac{\alpha_0}{2},1]$.
That is,
$\sup_{\frac{\alpha_0}{2}\le \alpha \le 1}
\frac{\alpha-1}{\alpha}\big(r- D_{\alpha } (\mathcal{N}\|\overline{\mathcal{N}})\big)
=B(r)$.
In the range $ [\frac{\alpha_0}{2},1]$,
$\frac{\alpha-1}{\alpha}\big(r- D_{\alpha } (\mathcal{N}\|\overline{\mathcal{N}})\big)$
is uniformly continuous for $r$,
we have \eqref{Eq11}. The proof implies that larger $r$ corresponds to larger $\alpha$. 
To show this, we introduce $k(\alpha)$ as the first derivative of 
$D_{\alpha}(\mathcal{N}\|\overline{\mathcal{N}})$,
which crucially does not depend on $r$. The other term, $\frac{\alpha-1}{\alpha}r$, 
has derivative $\frac{r}{\alpha^2}$, so the condition for the maximum is 
$\frac{r}{\alpha^2} + k(\alpha) = 0$. Now consider the optimal value $\alpha_0$ for a certain $r_0 > 0$, so 
the above equation is satisfied for $r_0$ and $\alpha_0$. If we now 
consider $r > r_0$, the same $\alpha = \alpha_0$ gives a negative derivative, 
which means that we make the objective function larger 
by increasing $\alpha \geq \alpha_0$, which is where the optimal value must lie. 
Continuity at $r=0$ follows similar to the previous proof.
\end{proof}
\fi

\begin{lemma}\label{C-H}
When real numbers $a,b$ satisfy  
$-D(\mathcal{N}\|\overline{\mathcal{N}})\le a-b \le D(\overline{\mathcal{N}}\|\mathcal{N})$, 
then we have
\begin{align}
  C(a,b)=r_{a,b}-b=B(r_{a,b})-a.
  \label{Eq12}
\end{align}
\end{lemma}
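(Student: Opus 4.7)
The plan is to exhibit $C(a,b)+b$ and $B(r)-r$ as mutually Legendre--Fenchel-dual expressions of the single univariate function $F(\alpha):=(1-\alpha)D_\alpha(\mathcal{N}\|\overline{\mathcal{N}})$, and then read off the claimed equality from this duality evaluated at matched parameters. Setting $t:=a-b$, direct rearrangement of the definitions \eqref{deff-C} and \eqref{deff} yields
\begin{align*}
  C(a,b) + b &= \sup_{\alpha \in [0,1]} \bigl[F(\alpha) - \alpha t\bigr], \\
  B(r) - r   &= \sup_{\alpha \in (0,1]} \frac{F(\alpha) - r}{\alpha}.
\end{align*}
The second equality $r_{a,b} - b = B(r_{a,b}) - a$ is immediate from the defining equation $B(r_{a,b}) - r_{a,b} = a - b$ stated just before the lemma, so the real content is the first equality $C(a,b) = r_{a,b} - b$.

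The core step is the trivial equivalence, for any fixed $\alpha \in (0,1]$ and any $r$: $F(\alpha) - \alpha t \le r$ if and only if $\frac{F(\alpha)-r}{\alpha} \le t$, obtained by multiplying/dividing by $\alpha > 0$. Quantifying over $\alpha \in (0,1]$, this becomes the duality
\[
  \sup_{\alpha \in (0,1]} \bigl[F(\alpha) - \alpha t\bigr] \le r \quad \iff \quad B(r) - r \le t,
\]
valid for every $r$. Applying it with $r = r_{a,b}$, where $B(r_{a,b}) - r_{a,b} = t$, gives $\sup_{\alpha \in (0,1]} [F(\alpha) - \alpha t] \le r_{a,b}$ and hence, after handling the $\alpha=0$ contribution as described below, $C(a,b) + b \le r_{a,b}$. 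For the reverse inequality I would invoke the strict monotonicity of $r \mapsto B(r) - r$, which (by the observation just before the lemma) is a continuous strictly decreasing bijection from $[0, D(\mathcal{N}\|\overline{\mathcal{N}})]$ onto $[-D(\mathcal{N}\|\overline{\mathcal{N}}), D(\overline{\mathcal{N}}\|\mathcal{N})]$: for any $r' < r_{a,b}$ one has $B(r') - r' > t$, so the duality gives $\sup_{\alpha \in (0,1]}[F(\alpha) - \alpha t] > r'$ and thus $C(a,b) + b > r'$; letting $r' \nearrow r_{a,b}$ yields $C(a,b) + b \ge r_{a,b}$.

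The main subtlety is the endpoint $\alpha = 0$: the supremum defining $C(a,b)+b$ is over the closed interval $[0,1]$ and includes the value $F(0) - 0 = F(0)$, which a priori could exceed the supremum over $(0,1]$ and break the duality bound. To rule this out one uses $F(0) = 0$, which in turn follows from the identity $\tfrac{F(\alpha)}{\alpha} = \tfrac{1-\alpha}{\alpha}D_\alpha(\mathcal{N}\|\overline{\mathcal{N}}) \to D(\overline{\mathcal{N}}\|\mathcal{N})$ (derived in the paragraph between Lemmas~\ref{cont} and \ref{C-H}): since the hypothesis forces $D(\overline{\mathcal{N}}\|\mathcal{N})$ to be finite in the relevant range, $F(\alpha) = O(\alpha)$ as $\alpha \to 0$, whence $F(0)=0 \le r_{a,b}$. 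With this, the $\alpha=0$ term adds nothing to the supremum over $[0,1]$, and the duality above applies without modification, mirroring the endpoint analysis underlying Lemma~\ref{cont}.
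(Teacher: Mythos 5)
Your proof is correct. At its core it rests on the same algebraic identity as the paper's own proof, namely $F(\alpha)-\alpha t-r=\alpha\bigl(\tfrac{F(\alpha)-r}{\alpha}-t\bigr)$: dividing by $\alpha>0$ converts the Chernoff-type objective into the Hoeffding-type one. The packaging differs, though. The paper first uses the translation property $C(a-c,b-c)=C(a,b)+c$ to reduce everything to the single identity $C(B(r),r)=0$, and then obtains both inequalities from the characterization of $B(r)$ as the supremum of $g(\alpha)=\tfrac{\alpha-1}{\alpha}\bigl(r-D_\alpha(\mathcal{N}\|\overline{\mathcal{N}})\bigr)$: the upper bound from $g(\alpha)\le B(r)$, and the lower bound (left implicit in the paper) from a maximizing sequence $\alpha_n$ with $g(\alpha_n)\to B(r)$, for which $\alpha_n\bigl(g(\alpha_n)-B(r)\bigr)\to 0$. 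You instead set up an explicit two-sided duality and close the reverse inequality by invoking the continuity and strict monotonicity of $r\mapsto B(r)-r$; this is legitimate, since that fact is established in the paper immediately before the lemma, and it has the small merit of making the ``$\ge$'' direction explicit where the paper's one-line computation glosses over it. Your separate treatment of the $\alpha=0$ endpoint (showing $F(0)=0$ when $D(\overline{\mathcal{N}}\|\mathcal{N})<\infty$, so that the closed-interval supremum is unaffected) addresses a boundary issue that the paper's proof also implicitly relies on.
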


\begin{proof}
Definition of $C(a,b)$, Eq. \eqref{deff-C}, implies that 
$C(a-c,b-c) = C(a,b) + c$.
Hence, it is sufficient to show that for $r \in [0, D(\mathcal{N}\|\overline{\mathcal{N}})]$:
\begin{align}
  C(B(r),r)=0.
  \label{Eq13}
\end{align}
\begin{align*}
C(B(r),r)
&=\sup_{0\leq\alpha\leq 1} (1-\alpha)D_{\alpha}(\mathcal{N}\|\overline{\mathcal{N}})
-\alpha B(r) -(1-\alpha)r\\
&= \sup_{0\le \alpha \le 1}
\alpha
\Big(\frac{\alpha-1}{\alpha}\big(r- D_{\alpha } (\mathcal{N}\|\overline{\mathcal{N}})\big)
- B(r)
\Big)=0,
\end{align*}
where the last equality follows since
$\frac{\alpha-1}{\alpha}\big(r- D_{\alpha } (\mathcal{N}\|\overline{\mathcal{N}})\big)
\le B(r)$ for $0\le \alpha \le 1$.
\end{proof}
\medskip


Our approach consists of associating suitable classical channels to the
given cq-channels, and noting the lessons learned about adaptive strategy
for discrimination of classical channels in \cite{5165184}. Our proof methodology
however, is also novel for the classical case.
%
The following Lemmas \ref{NS1} and \ref{L1} addresses these matters; 
the former is verified easily and its proof is omitted, and the latter is more involved
and is the key to our developments.

\medskip
\begin{lemma}\label{NS1}
Consider the cq-channels $\mathcal{N}:x\rightarrow \rho_{x}$ and 
$\overline{\mathcal{N}}:x\rightarrow \sigma_{x}$ with input alphabet $\mathcal{X}$ 
and output density operators on Hilbert space $B$. 
Let the eigenvalue decompositions of the output operators be as follows:
\begin{align}
\label{n1}
\rho_x=\sum_{i}\lambda_i^x \ketbra{u_i^x},\\
\label{n2}
\sigma_x=\sum_{j}\mu_j^x \ketbra{v_j^x}.
\end{align}
According to \cite{Hayashi_2002,nussbaum2009}, we define two distributions
\begin{align}
\Gamma_x(i,j)\coloneqq \lambda_i^x \left|\bra{v_j^x}\ket{u_i^x}\right|^2,\\
\overline{\Gamma}_x(i,j)\coloneqq \mu_i^x \left|\bra{v_j^x}\ket{u_i^x}\right|^2.
\end{align}
\fs{First, note that for all pair of indexes $(i,j)$, we have $\Gamma_x(i,j)\geq0,\overline{\Gamma}_x(i,j)\geq0$ and $\sum_{(i,j)}\Gamma_x(i,j)=\sum_{(i,j)}\overline{\Gamma}_x(i,j)=1$, that is, $\Gamma_x(i,j)=p((i,j)|x)$ and $\overline{\Gamma}_x(i,j)=\bar{p}((i,j)|x)$ form conditional probability distributions on the range $\{(i,j)\}$ of the pairs $(i,j)$.} One can think of 
$\Gamma$ and $\overline{\Gamma}$ as classical channels 
form the input system ${\cal X}$ to the output system $\{(i,j)\}_{i,j}$.  
Second, we have 
\begin{align*}
D_{\alpha } (\rho_x\|\sigma_x)
=
D_{\alpha } (\Gamma_x\|\overline{\Gamma}_{x}),
\end{align*}
which implies [see Eq. (\ref{deff})]
\begin{align}
  \phantom{==}
  B(r) = \sup_x \sup_{0\le \alpha \le 1}
                \frac{\alpha-1}{\alpha}\big(r-D_{\alpha } (\Gamma_x\|\overline{\Gamma}_x)\big) \label{K1}.
  \phantom{==}\square
\end{align}
\end{lemma}

Note that any extensions of the operators $\{\rho_{x},\sigma_{x}\}$ (not just i.i.d.) 
correspond to the classical extensions by distributions $\Gamma_x(i,j)$ and 
$\overline{\Gamma}_x(i,j)$. Define
\begin{align*}
  \Gamma_{\vec{x}_n}^n(\vec{i}_n,\vec{j}_n)
           \coloneqq \Gamma_{x_1}(i_1,j_1) \cdots \Gamma_{x_n}(i_n,j_n) ,\\
  \overline\Gamma_{\vec{x}_n}^n(\vec{i}_n,\vec{j}_n)
           \coloneqq \overline\Gamma_{x_1}(i_1,j_1)\cdots \overline\Gamma_{x_n}(i_n,j_n) .
\end{align*}
Then, we have the following lemma.

\begin{lemma}\label{L1}
The states $\rho^{(n)}$ and $\sigma^{(n)}$ defined in 
\eqref{BD1} and \eqref{BD2} satisfy
\begin{align*}
&E_{a,b,n}^Q
\coloneqq \min_T 2^{an} \Tr (I-T) \rho^{(n)} +2^{bn} \Tr T \sigma^{(n)}
\ge \frac{1}{2} E_{a,b,n}^{C},
\end{align*}
where
\begin{align*}
E_{a,b,n}^C \coloneqq &
\min_{q_{X_1},\ldots, q_{X_n K_{n-1}|\vec{K}_{n-2}\vec{X}_{n-1}\vec{I}_{n-1}\vec{J}_{n-1}}} \\
& \sum_{\vec{x}_n,\vec{j}_n,\vec{i}_n,\vec{k}_{n-1}}
q_{X_1}(x_1) \ldots
q_{X_n K_{n-1}|\vec{K}_{n-2}\vec{X}_{n-1}\vec{I}_{n-1}\vec{J}_{n-1}}
(x_n,k_{n-1}|\vec{k}_{n-2},\vec{x}_{n-1},\vec{i}_{n-1},\vec{j}_{n-1}) \\
&\hspace{2cm}\cdot \min\left\{2^{an}\Gamma_{\vec{x}_n}^n(\vec{i}_n,\vec{j}_n),
                              2^{bn}\overline{\Gamma}_{\vec{x}_n}^n(\vec{i}_n,\vec{j}_n)\right\}.
\end{align*}
\end{lemma}

\begin{proof}
Let
\begin{align*}
\ket{u_{\vec{i}_n}^{\vec{x}_n}}
\coloneqq \ket{u_{i_1}^{x_1},\ldots, u_{{i}_n}^{{x}_n}},~
\lambda_{\vec{i}_n}^{\vec{x}_n}
\coloneqq \lambda_{i_1}^{x_1},\cdots \lambda_{{i}_n}^{{x}_n},\\
\ket{v_{\vec{j}_n}^{\vec{x}_n}}
\coloneqq \ket{v_{j_1}^{x_1},\ldots, v_{{j}_n}^{{x}_n}},~
\mu_{\vec{j}_n}^{\vec{x}_n}
\coloneqq \mu_{j_1}^{x_1},\cdots \mu_{{j}_n}^{{x}_n}.
\end{align*}
%
%
Consider $\min_{T} 2^{an} \Tr (I-T) \rho^{(n)} + 2^{bn}\Tr T \sigma^{(n)}$;
it is sufficient to consider $T$ to a projective measurement 
because the minimum can be attained when 
$T$ is a projection onto the subspace that is given as the linear span of 
eigenspaces corresponding to negative eigenvalues of 
$-2^{an} \rho^{(n)} + 2^{bn} \sigma^{(n)}$.
For a given $\vec{x}_n$, 
the final decision is given as 
the projection $T_{\vec{x}_n}$ on 
the image of the projection $\Pi_{\vec{k}_{n-1}|\vec{x}_{n-1}}^{(n-1)}$ on
$B^{\otimes n}$ depending on $\vec{x}_n$.
Since $\rho^{(n)}$ and $\sigma^{(n)}$ both commute with 
the projection $\Pi_{\vec{k}_{n-1}|\vec{x}_{n-1}}^{(n-1)}$, 
without loss of generality, we can assume that
the projection $T_{\vec{x}_n}$ is also commutative with 
 $\Pi_{\vec{k}_{n-1}|\vec{x}_{n-1}}^{(n-1)}$.
Then, the final decision operator $T_{n}$ is given as the projection
$T_{n}\coloneqq \sum_{\vec{x}_{n}} T_{\vec{x}_n} \otimes |\vec{x}_n\rangle \langle \vec{x}_n| $.

Now, we recall 
the forms \eqref{BD3} and \eqref{BD4} for the states $\rho^{(n)}$ and $\sigma^{(n)}$.
Then, we expand the first term as follows:
\begin{align*}
\Tr &(I-T_n) \rho^{(n)}\\
&=
\sum_{\vec{x}_n,\vec{k}_{n-1}}
\Tr (I-T_{\vec{x}_n})
p_{X_1}(x_1)\cdots p_{{X}_n|\vec{X}_{n-1},\vec{K}_{n-1}}
({x}_n|\vec{x}_{n-1},\vec{k}_{n-1})
\Pi_{\vec{k}_{n-1}|\vec{x}_{n-1}}^{(n-1)}
\big(\rho_{x_1}\otimes \cdots \otimes \rho_{x_n}\big)
\Pi_{\vec{k}_{n-1}|\vec{x}_{n-1}}^{(n-1)}
\\
&=
\sum_{\vec{x}_n,\vec{k}_{n-1}}
\Tr (I-T_{\vec{x}_n})^2
p_{X_1}(x_1)\cdots p_{{X}_n|\vec{X}_{n-1},\vec{K}_{n-1}}
({x}_n|\vec{x}_{n-1},\vec{k}_{n-1})
\Pi_{\vec{k}_{n-1}|\vec{x}_{n-1}}^{(n-1)}
\big(\rho_{x_1}\otimes \cdots \otimes \rho_{x_n}\big)
\Pi_{\vec{k}_{n-1}|\vec{x}_{n-1}}^{(n-1)}
\\
&=
\sum_{\vec{x}_n,\vec{k}_{n-1}}
\Tr 
(I-T_{\vec{x}_n})
\sum_{\vec{j}_n}
\ketbra{v_{\vec{j}_n}^{\vec{x}_n}}
(I-T_{\vec{x}_n})
p_{X_1}(x_1)\cdots p_{{X}_n|\vec{X}_{n-1},\vec{K}_{n-1}}
({x}_n|\vec{x}_{n-1},\vec{k}_{n-1}) \\
&\hspace*{9.5cm}\cdot
\Pi_{\vec{k}_{n-1}|\vec{x}_{n-1}}^{(n-1)}
\big(\rho_{x_1}\otimes \cdots \otimes \rho_{x_n}\big)
\Pi_{\vec{k}_{n-1}|\vec{x}_{n-1}}^{(n-1)} \\
&=
 \sum_{\vec{x}_n,\vec{j}_n,\vec{i}_n,\vec{k}_{n-1}}
p_{X_1}(x_1)\cdots p_{{X}_n|\vec{X}_{n-1},\vec{K}_{n-1}}
({x}_n|\vec{x}_{n-1},\vec{k}_{n-1})
\lambda_{\vec{i}_n}^{\vec{x}_n}
\left|\bra{ u_{\vec{i}_n}^{\vec{x}_n}}
(I-T_{\vec{x}_n})
\Pi_{\vec{k}_{n-1}|\vec{x}_{n-1}}^{(n-1)}\ket{v_{\vec{j}_n}^{\vec{x}_n}}\right|^2,
\end{align*}
where the first line follows from the definition of $T$, the second line is due to the fact that the final measurement can be chosen as a projective measurement, the third line follows because $\sum_{\vec{j}_n}\ketbra{v_{\vec{j}_n}^{\vec{x}_n}}=I^{\otimes n}$ and the last line is simple manipulation.

Similarly, we have
\begin{align*}
\Tr T \sigma^{(n)}
=
 \sum_{\vec{x}_n,\vec{j}_n,\vec{i}_n,\vec{k}_{n-1}}
p_{X_1}(x_1)\cdots p_{{X}_n|\vec{X}_{n-1},\vec{K}_{n-1}}
({x}_n|\vec{x}_{n-1},\vec{k}_{n-1})
\mu_{\vec{j}_n}^{\vec{x}_n}
\left|\bra{ u_{\vec{i}_n}^{\vec{x}_n}}
T_{\vec{x}_n}
\Pi_{\vec{k}_{n-1}|\vec{x}_{n-1}}^{(n-1)}\ket{v_{\vec{j}_n}^{\vec{x}_n}}\right|^2.
\end{align*}

For $m\in[1:n]$, define 
\begin{align*}
&q_{X_m K_{m-1}|\vec{K}_{m-2}\vec{X}_{m-1}\vec{I}_{m-1}\vec{J}_{m-1}}
(x_m,k_{m-1}|\vec{x}_{m-1},\vec{k}_{m-2},\vec{i}_{m-1},\vec{j}_{m-1}) \\
&
\coloneqq
p_{{X}_m|\vec{X}_{m-1},\vec{K}_{m-1}}
({x}_m|\vec{x}_{m-1},\vec{k}_{m-1})
\frac{\left|\bra{u_{\vec{i}_{m-1}}^{\vec{x}_{m-1}}}\Pi_{\vec{k}_{m-1}|\vec{x}_{m-1}}^{(m-1)}
            \ket{v_{\vec{j}_{m-1}}^{\vec{x}_{m-1}}}\right|^2}
      {\left|\bra{u_{\vec{i}_{m-2}}^{\vec{x}_{m-2}}}\Pi_{\vec{k}_{m-2}|\vec{x}_{m-2}}^{(m-2)}
             \ket{v_{\vec{j}_{m-2}}^{\vec{x}_{m-2}}}\right|^2
            \cdot
       \left|\braket{ u_{{i}_{m-1}}^{{x}_{m-1}}}{v_{{j}_{m-1}}^{{x}_{m-1}}}\right|^2}.
\end{align*}
Hence,
\begin{align*}
\min_{T} &2^{an} \Tr (I-T) \rho^{(n)}+2^{bn} \Tr T \sigma^{(n)} \\
=& \sum_{\vec{x}_n,\vec{j}_n,\vec{i}_n,\vec{k}_{n-1}}
p_{X_1}(x_1)\cdots p_{{X}_n|\vec{X}_{n-1},\vec{K}_{n-1}}
({x}_n|\vec{x}_{n-1},\vec{k}_{n-1}) \\
&\cdot \Big(
2^{an}\lambda_{\vec{j}_n}^{\vec{x}_n}
\left|\bra{u_{\vec{i}_n}^{\vec{x}_n}}
(I-T_{\vec{x}_n})
\Pi_{\vec{k}_{n-1}|\vec{x}_{n-1}}^{(n-1)}
\ket{v_{\vec{j}_n}^{\vec{x}_n}}\right|^2
+
2^{bn}\mu_{\vec{i}_n}^{\vec{x}_n}
\left|\bra{u_{\vec{i}_n}^{\vec{x}_n}}
T_{\vec{x}_n}
\Pi_{\vec{k}_{n-1}|\vec{x}_{n-1}}^{(n-1)}
\ket{v_{\vec{j}_n}^{\vec{x}_n}}\right|^2
\Big) \\
\ge & \sum_{\vec{x}_n,\vec{j}_n,\vec{i}_n,\vec{k}_{n-1}}
p_{X_1}(x_1)\cdots p_{{X}_n|\vec{X}_{n-1},\vec{K}_{n-1}}
({x}_n|\vec{x}_{n-1},\vec{k}_{n-1})
\min\left\{
2^{an}\lambda_{\vec{i}_n}^{\vec{x}_n},
2^{bn}\mu_{\vec{j}_n}^{\vec{x}_n}\right\}
\\
&\cdot \Big(
\left|\bra{u_{\vec{i}_n}^{\vec{x}_n}}
T_{\vec{x}_n}
\Pi_{\vec{k}_{n-1}|\vec{x}_{n-1}}^{(n-1)}
\ket{v_{\vec{j}_n}^{\vec{x}_n}}\right|^2
+
\left|\bra{u_{\vec{i}_n}^{\vec{x}_n}}
(I-T_{\vec{x}_n})
\Pi_{\vec{k}_{n-1}|\vec{x}_{n-1}}^{(n-1)}
\ket{v_{\vec{j}_n}^{\vec{x}_n}}\right|^2
\Big) \\
\stackrel{\text{(a)}}{\ge}
 & \sum_{\vec{x}_n,\vec{j}_n,\vec{i}_n,\vec{k}_{n-1}}
p_{X_1}(x_1)\cdots p_{{X}_n|\vec{X}_{n-1},\vec{K}_{n-1}}
({x}_n|\vec{x}_{n-1},\vec{k}_{n-1})
\min\left\{2^{an}\lambda_{\vec{i}_n}^{\vec{x}_n},
2^{bn}\mu_{\vec{j}_n}^{\vec{x}_n}\right\}
\\
&\cdot \frac{1}{2}
\left|\bra{u_{\vec{i}_n}^{\vec{x}_n}}
\Pi_{\vec{k}_{n-1}|\vec{x}_{n}}^{(n-1)}
\ket{v_{\vec{j}_n}^{\vec{x}_n}}\right|^2
 \\
 =&
\frac{1}{2} \sum_{\vec{x}_n,\vec{j}_n,\vec{i}_n,\vec{k}_{n-1}}
q_{X_1}(x_1) 
\cdots
q_{X_n K_{n-1}|\vec{K}_{n-2}\vec{X}_{n-1}\vec{I}_{n-1}\vec{J}_{n-1}}
(x_n k_{n-1}|\vec{k}_{n-2}\vec{x}_{n-1}\vec{i}_{n-1}\vec{j}_{n-1}) \\
&\cdot \min\left\{2^{an}\Gamma_{\vec{x}_n}(\vec{i}_n,\vec{j}_n),2^{bn}\overline{\Gamma}_{\vec{x}_n}(\vec{i}_n,\vec{j}_n)\right\},
\end{align*}
where (a) follows from the relation
$|\alpha|^2+|\beta|^2\ge \frac{1}{2}|\alpha+\beta|^2$.
\end{proof}

\section{Proof of Theorem \ref{chernoff}}\label{AD}
We are now in a position to show Theorem \ref{chernoff},
which is shown by the combination of Lemma \ref{chernoffL} and Corollary \ref{hoeffding}, which are proven in this Appendix.
\begin{lemma}[Generalized Chernoff bound]
\label{chernoffL}
For two cq-channels $\mathcal{N}$ and $\OL{\mathcal{N}}$, and
for real numbers $a,b$ satisfying 
$-D(\mathcal{N}\|\overline{\mathcal{N}})\le a-b \le D(\overline{\mathcal{N}}\|\mathcal{N})$,
\begin{align*}
  C^{\underline{\mathbb{A}}^{c,0}}(a,b|\mathcal{N}\|\overline{\mathcal{N}})
    = C^{\underline{\mathbb{P}}^{0}}(a,b|\mathcal{N}\|\overline{\mathcal{N}})
    = C(a,b) = r_{a,b}-b = B(r_{a,b})-a.
\end{align*}
\end{lemma}

\begin{proof}
For the direct part, i.e. that strategies in $\mathbb{P}^{0}$
achieve this exponent, the following non-adaptive strategy achieves $C(a,b)$.
Consider the transmission of a letter $x$ on every channel use.
Define the test $T_n$ as the projection to the eigenspace of the positive eigenvalues of 
$ 2^{na} \rho_x^{\otimes n} - 2^{nb} \sigma_x^{\otimes n}$.
Audenaert \emph{et al.} \cite{Audenaert_2007} showed that
\begin{align}
2^{na} \Tr [\rho_x^{\otimes n} (I-T_n)]+ 2^{nb} \Tr [\sigma_x^{\otimes n} T_n]
  &\le \inf_{0 \le \alpha \le 1} 
       \Tr (2^{na} \rho_x^{\otimes n})^{\alpha}(2^{nb} \sigma_x^{\otimes n})^{1-\alpha}
       \nonumber \\
  &=   2^{- n \sup_{0 \le \alpha \le 1} \big((1-\alpha)D_{\alpha}(\rho_{x}\|\sigma_{x})
           -\alpha a -(1-\alpha)b \big)}.
\end{align}
Considering the optimization for $x$, we obtain the direct part.

For the converse part, since
\begin{align*}
C^{\underline{\mathbb{A}}^{c,0}}(a,b|\mathcal{N}\|\overline{\mathcal{N}})
=C^{\underline{\mathbb{A}}^{c,0}}(B(r_{a,b}),r_{a,b}|\mathcal{N}\|\overline{\mathcal{N}})
+B(r_{a,b})- a
=C^{\underline{\mathbb{A}}^{c,0}}(B(r_{a,b}),r_{a,b}|\mathcal{N}\|\overline{\mathcal{N}})
+r_{a,b}- b,
\end{align*}
it is sufficient to show $C^{\underline{\mathbb{A}}^{c,0}}(B(r),r|\mathcal{N}\|\overline{\mathcal{N}})\ge 0$
for $r \in [0, D(\mathcal{N}\|\overline{\mathcal{N}})]$.
Observe that 
\begin{align*}
E_{a,b,n}^C=2^{an} \alpha_{n}(\Gamma\|\overline{\Gamma}|\mathcal{T}_{a,b,n})
+2^{bn} \beta_{n}(\Gamma\|\overline{\Gamma}|\mathcal{T}_{a,b,n}),
\end{align*}
where we let $\mathcal{T}_{a,b,n}$ be the optimal test to achieve 
$E_{a,b,n}^C$. We choose $a=B(r)$ and $b=r$ in Lemma \ref{L1}.
\if0
Note the following distributions
\begin{align}
q_{X_1}(x_{1}),\cdots, 
q_{X_n K_{n-1}|\vec{K}_{n-2}\vec{X}_{n-1}\vec{I}_{n-1}\vec{J}_{n-1}}(x_{n},k_{n-1}|\vec{k}_{n-2},\vec{x}_{n-1},\vec{i}_{n-1},\vec{j}_{n-1}).
\end{align}
\fi
The combination of \eqref{K1} and 
\cite[Eq.~(16)]{5165184} guarantees that
\begin{align}
  B_e^{\underline{\mathbb{A}}^{c,0}}(r|\Gamma\|\overline{\Gamma}) = B(r).
\label{EE7}
\end{align}
Notice that the analysis in \cite{5165184} 
does not assume any condition on the set $\mathcal{X}$.
When 
\begin{align*}
\liminf_{n\to \infty}\frac{1}{n}\log 
2^{r n} \beta_{n}(\Gamma\|\overline{\Gamma}|\mathcal{T}_{B(r),r,n})< 0,
\end{align*}
then Eq. \eqref{EE7} implies
\begin{align*}
\liminf_{n\to \infty}\frac{1}{n}\log 
2^{B(r) n} \alpha_{n}(\Gamma\|\overline{\Gamma}|\mathcal{T}_{B(r),r,n})\ge 0.
\end{align*}

Hence, we have
\begin{equation}\begin{split}
\label{E9}
 \liminf_{n\to \infty}\frac{1}{n}\log E_{B(r),r,n}^C 
         & = \max \left\{\liminf_{n\to\infty}
                         \frac{1}{n}\log 2^{rn}\beta_{n}(\Gamma\|\overline{\Gamma}|\mathcal{T}_{B(r),r,n}) \right. , \\
         &\phantom{=====}
                  \left. \liminf_{n\to \infty}
                         \frac{1}{n}\log 2^{B(r)n}\alpha_{n}(\Gamma\|\overline{\Gamma}|\mathcal{T}_{B(r),r,n}) \right\}
          \ge 0.
\end{split}\end{equation}
Therefore, the combination of Lemma \ref{L1} and \eqref{E9}
implies that
\begin{align}
C^{\underline{\mathbb{A}}^{c,0}}(B(r),r|\mathcal{N}\|\overline{\mathcal{N}})
=\liminf_{n\to \infty}\frac{1}{n}\log E_{B(r),r,n}^Q\ge 0.
\label{E10}
\end{align}
This completes the proof.
\end{proof}

\medskip
As corollary, we obtain the Hoeffding exponent.

\medskip
\begin{corollary}[Hoeffding bound]
\label{hoeffding}
For two cq-channels $\mathcal{N}$ and $\OL{\mathcal{N}}$, and
for any $0\leq r\leq D(\mathcal{N}\|\overline{\mathcal{N}})$,
\begin{align*}
  B_{e}^{\underline{\mathbb{A}}^{c,0}}(r|\mathcal{N}\|\overline{\mathcal{N}}) = 
  B_{e}^{\underline{\mathbb{P}}^{0}}(r|\mathcal{N}\|\overline{\mathcal{N}}) = 
  B(r).
\end{align*}
\end{corollary}

\begin{proof}
For the direct part, note that a non-adaptive strategy following the Hoeffding bound 
for state discrimination developed in \cite{Hayashi_2007} suffices 
to show the achievability. More precisely, sending the letter $x$ optimizing 
the expression on the right-hand side to every channel use and invoking the result 
by \cite{Hayashi_2007} for state discrimination shows the direct part of the theorem.

For the converse part, note first that from
Theorem \ref{chernoff}, for any $r \in [0,D(\mathcal{N}\|\overline{\mathcal{N}})]$,
\begin{align*}
  C^{\underline{\mathbb{A}}^{c,0}}(B(r),r|\mathcal{N}\|\overline{\mathcal{N}})=0. 
\end{align*}
When a sequence of tests $T_n$ satisfies
$\liminf_{n\to \infty}\frac{1}{n}\log 
\beta_{n}[{\mathcal{N}}\|\overline{\mathcal{N}}|T_{n}]
\le -r_0 < -r_0+\epsilon,$
Eq. \eqref{E10} with $r=r_0-\epsilon $ implies that 
$\liminf_{n\to \infty}\frac{1}{n}\log 
 \alpha_{n}[{\mathcal{N}}\|\overline{\mathcal{N}}|T_{n}]
\ge -B(r_0-\epsilon) $.
Hence, we have 
\begin{align}
B_{e}^{\underline{\mathbb{A}}^{c,0}}(r_0|\mathcal{N}\|\overline{\mathcal{N}})
\le B(r_0-\epsilon) .
\end{align}
Due to Lemma \ref{cont}, 
taking the limit $\epsilon \to 0$ leads to the following inequality
\begin{align}
  B_{e}^{\underline{\mathbb{A}}^{c,0}}(r_0|\mathcal{N}\|\overline{\mathcal{N}}) \leq B(r_0).
\end{align}
This completes the proof.
\end{proof}

\bibliographystyle{unsrt}
\bibliography{discrimination.bib}

\end{document}